\documentclass[english,course]{Notes}
\usepackage{tikz,color,graphics,stmaryrd,blkarray,amsmath,esint}
\usetikzlibrary{matrix}
\title{Topological recursion and geometry}
\author{Ga\"etan Borot}
\email{Max-Planck Institut for Mathematics: gborot@mpim-bonn.mpg.de}
\website{Website: guests.mpim-bonn.mpg.de/gborot/Teach.html}
\date{12}{05}{2017}
\place{Mini-course, Institut Math\'ematique de Toulouse, France.}

\definecolor{rouge}{rgb}{0.84,0.18,0.07}
\definecolor{bleu}{rgb}{0.22,0.41,0.74}
\definecolor{vertf}{rgb}{0.08,0.46,0.07}
\usepackage[pdftex]{hyperref}
\hypersetup{colorlinks,urlcolor=vertf,citecolor=rouge,linkcolor=bleu,filecolor=black}

\begin{document}

\newpage

These are lecture notes for a 4h mini-course held in Toulouse, May 9-12th, at the thematic school on
\begin{center}
\textbf{Quantum topology and geometry}
\end{center}
I thank Francesco Costantino and Thomas Fiedler for organization of this event, the audience and especially Rinat Kashaev and Gregor Masbaum for questions and remarks which improved the lectures. \\

Imagine you have a problem to solve concerning surfaces of genus $g$ with $n$ boundaries/marked points. \emph{A topological recursion} is a strategy consisting of
\begin{itemize}
\item[$(1)$] solving the problem for the simplest topologies, \textit{i.e.} disks $(g,n) = (0,1)$, cylinders $(g,n) = (0,2)$, pairs of pants $(g,n) = (0,3)$.
\item[$(2)$] understanding how the problem behaves when one glues together pairs of pants -- or conversely, when one removes a pair of pants/degenerates the surface. You can then solve the problem by induction on $\chi_{g,n} = 2g - 2 + n = -$(the Euler characteristic of the surface).
\end{itemize}
This mantra is realized in several different ways in geometry and quantum field theories. \emph{The topological recursion} (\textsc{tr}) in these lectures refers to an abstract formalism tailored to handle this kind of recursion at the numerical level. When \textsc{tr} applies, it often shadows finer geometric properties which depend on the problem at studied.

The goal of these lectures is to (a) explain some incarnations, in the last ten years, of the idea of topological recursion: in two dimensional quantum field theories (\textsc{2d tqft}s), in cohomological field theories (\textsc{cohft}), in the computation of volumes of the moduli space of curves;  (b) relate them to \textsc{tr}.

\vspace{0.2cm}

\noindent \textbf{Sources.}  The perspective we adopt here on \textsc{tr} was proposed lately by Kontsevich and Soibelman \cite{KSTR} based on the notion of quantum Airy structure. This was further studied by Andersen, the author, Chekhov and Orantin \cite{TRABCD}, and we often borrow arguments from this paper. The original formulation of \textsc{tr} was proposed by Eynard and Orantin \cite{EOFg} using spectral curves, but quantum Airy structures allows a simpler, more algebraic, and slightly more general presentation of the possible initial data for \textsc{tr}. Part of Chapter 2 on \textsc{2d tqft}s is inspired from \cite{Abrams}. The exposition of Givental group action in Chapter 4 is inspired from \cite{Zvonkin}. Chapter 5 is inspired from Mirzakhani \cite{Mirza1} and Wolpert exposition of her works \cite{Wolpertlecture}.

\newpage

\lecture[1h]{09}{05}{2017}

\section{Quantum Airy structures}

\subsection{First principles}

The initial data for \textsc{tr} is a ``quantum Airy structure''. This notion was introduced by Kontsevich and Soibelman in \cite{KSTR}. Let $V$ be a vector space over a field $\mathbb{K}$ of characteristic $0$. Let $(e_i)_{i \in I}$ be a basis of $V$, and $(x_i)_{i \in I}$ the corresponding basis of linear coordinates. $\hbar$ denotes a formal parameter. If $V$ is infinite dimensional, we may have to add assumptions (filtration, completion, convergence) so that all expressions we write make sense. We will overlook this issue, and in the examples of quantum Airy structures with $\dim V = \infty$ we meet, it is possible to check that seemingly infinite sums we are going to write actually contain only finitely many non-zero terms.

\begin{definition}
\label{Def1} A \textbf{quantum Airy structure} on $V$ is the data of a family of differential operators $(L_i)_{i \in I}$, of the form
\beq
\label{Liform} L_i = \hbar \partial_{x_i} - \sum_{a,b \in I} \big( \tfrac{1}{2}\,A^i_{a,b}x_{a}x_{b} + \hbar\,B^i_{a,b}x_{a}\partial_{x_b} + \tfrac{\hbar^2}{2}\,C^i_{a,b}\partial_{x_a}\partial_{x_b}\big) - \hbar D^i\,,
\eeq
and such that 
\beq
\label{Lie}\forall i,j \in I,\qquad [L_i,L_j] = \sum_{a \in I} \hbar\,f_{i,j}^a\,L_{a}\,,
\eeq
where $A^i_{j,k} = A^i_{k,j}$, $B^i_{j,k}$, $C^i_{j,k} = C^i_{k,j}$, $D^i$, $f^i_{j,k} = - f^i_{k,j}$ are scalars indexed by $i,j,k \in I$.
\end{definition}

\vspace{0.2cm}

\noindent $\bullet$ \textbf{The six conditions.} The condition \eqref{Lie} says that $(L_i)_{i \in I}$ span a Lie algebra with structure constants $(f_{i,j}^k)_{i,j,k \in I}$. This is equivalent to a system of six constraints on $(A,B,C,D,f)$.

The identification of the $\partial_{x_k}$ terms, and of the $x_k$ terms, in both sides of \eqref{Lie} yields the two linear relations
\beq
\label{fAeq} \forall i,j,k \in I,\qquad f_{i,j}^k = B^i_{j,k} - B^j_{i,k},\qquad A^i_{j,k} = A^j_{i,k}\,.
\eeq
We use these two relations to get rid of $f$ in the four remaining relations. Identification of the constant term yields
\beq
\label{Deq}\forall i,j,k,\ell \in I,\qquad \sum_{a \in I} B^i_{j,a}D^a + \sum_{a,b \in I} C^i_{a,b}A^{j}_{a,b} = (i \leftrightarrow j)\,.
\eeq
If $(A,B,C)$ are known, this relation is affine in $D$. Then, identification of $x_kx_{\ell}$, $x_k\partial_{x_{\ell}}$ and $\partial_{x_k}\partial_{x_{\ell}}$ yields
\beq
\label{ABCeq} \forall i,j,k,\ell \in I,\qquad \left\{\begin{array}{rcl} \sum_{a \in I} B^i_{j,a}A^a_{k,\ell} + B^i_{k,a}A^j_{a,\ell} + B^i_{\ell,a}A^j_{a,k} & = & (i \leftrightarrow j) \\ \sum_{a \in I} B^i_{j,a}B^a_{k,\ell} + B^i_{k,a}B^j_{a,\ell} + C^i_{\ell,a}A^j_{a,k} & = & (i \leftrightarrow j) \\ \sum_{a \in I} B^i_{j,a}C^a_{k,\ell} + C^i_{k,a}B^i_{a,\ell} + C^i_{\ell,a}B^i_{a,\ell} & = & (i \leftrightarrow j) \end{array}\right.\,.
\eeq
These three quadratic relations have the same index structure. In fact, they form a system of three coupled IHX-type relations (Figure~\ref{FigIHX}). The relation involving $D$ is depicted in Figure~\ref{Drel}.

\begin{center}
\begin{figure}[h!]
\begin{center}\includegraphics[width=0.8\textwidth]{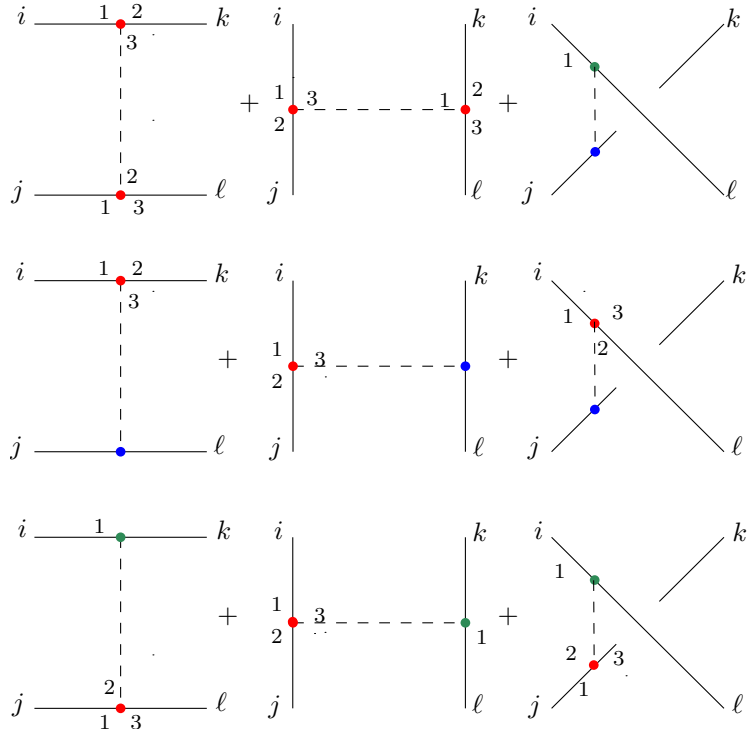}\end{center}
\caption{\label{FigIHX} A red dot is replaced by a $B$, a green dot by a $C$, a blue dot by an $A$. The labels $1,2,3$ specify the ordering of indices at a $B$-vertex, namely $B^{i_{1}}_{i_2,i_3}$, and $1$ at $C$-vertex specifies the first index $C^{i_1}_{**}$. Such an ordering is not necessary at an $A$-vertex as $A$ is fully symmetric. The dashed line carries an index $a \in I$ which is summed over. Equation \eqref{ABCeq} says that each of the three lines is symmetric under permutation of $i$ and $j$.}
\end{figure}

\begin{figure}[h!]
\begin{center}\includegraphics[width=0.6\textwidth]{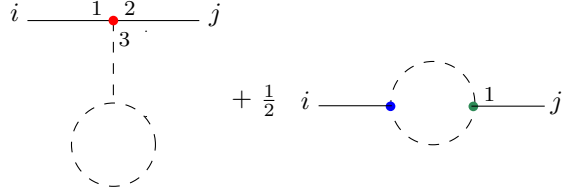}\end{center}
\caption{\label{Drel} The loop carries an index $a \in I$ which is summed over, and stands for $D^a$. Equation \eqref{Deq} says that this expression is symmetric under permutation of $i$ and $j$.}
\end{figure}
\end{center}

\vspace{0.2cm}

\noindent $\bullet$ \textbf{Counting.} If $\dim V = d$, let us count the number of unknowns and independent conditions determining a quantum Airy structure. As \eqref{fAeq} fixes $f$ in terms of $B$, and $A$ is fully symmetric, there remains
$$
\# {\rm unknowns} = \underbrace{\tfrac{d(d + 1)(d + 2)}{6}}_{A} + \underbrace{d^3}_{B} + \underbrace{\tfrac{d^2(d + 1)}{2}}_{C} + \underbrace{d}_{D} = \tfrac{d(5d^2 + 3d + 4)}{3}\,.
$$
The independent relations are indexed by the unordered pair $\{i,j\} \subseteq I$, and for \eqref{ABCeq} by $k,\ell \in I$. In fact, $k$ and $\ell$ play a symmetric role in the BA and BC relations. So
$$
\#{\rm relations} = \tfrac{d(d - 1)}{2}\big(1 +\tfrac{d(d + 1)}{2} + d^2 + \tfrac{d(d + 1)}{2}\big) = \tfrac{d(d - 1)(2d^2 + d + 1)}{2}\,.
$$ 
The number of relations grows faster -- like $d^4$ -- than the number of unknowns -- which grows like $d^3$. For small values of $d$, we have 
$$
\begin{array}{|c|c|c|c|c|c|c|}
\hline
d & 1 & 2 & 3 & 4 & 5 & 6 \\
\hline
\hline \#\,{\rm unknowns} & 4 & 20 & 58 & 128 & 240 & 404 \\
\hline \#\,{\rm relations} & 0 & 11 & 66 & 222 & 560 & 1185 \\
\hline
\end{array}
$$
Therefore, for $d \geq 3$ the system of relations is overdetermined, and it is not clear that quantum Airy structures can exist at all. In fact, the size of this system of quadratic relations makes difficult, even for small $d$, to obtain solutions by brute computer force. We will see that  quantum Airy structure do exist, and in all examples we know of, they are tied to geometry.

\vspace{0.2cm}

\noindent $\bullet$ \textbf{Basis-free definition.} Although we presented the definition of quantum Airy structures using a basis, we now explain how it can be made a basis-independent notion. Let $\mathcal{W}_{V}$ be the Weyl algebra of $V$, that is the quotient of the free algebra generated by $\mathbb{K}[\hbar] \oplus V \oplus V^*$, by the relations
$$
\forall v_1,v_2 \in V,\quad \forall \ell_1,\ell_2 \in V^*,\qquad [\ell_1,\ell_2] = [v_1,v_2] = 0,\qquad [v_1,\ell_1] = \hbar \ell_1(v_1)\,.
$$
The elements of $V \subset \mathcal{W}$ are the derivations $\hbar\partial_{x_i}$, and the elements of $V^* \subseteq \mathcal{W}_{V}$ are the linear coordinates $x_i$ on $V$. We equip $\mathcal{W}_{V}$ with two notions of degree, compatible with the algebra structure. The first one counts $x_i$ and $\hbar\partial_{x_i}$, namely ${\rm deg}\,V = {\rm deg}\,V^* = 1$ and ${\rm deg}\,\mathbb{K}[\hbar] = 0$. We denote $\mathcal{W}_{V}(d)$ -- resp. $\mathcal{W}_{V}(\leq d)$ -- the subspace spanned by elements of degree $d$ -- respectively, less or equal to $d$. The second one is the usual $\hbar$-degree on $\mathbb{K}[\hbar]$, supplemented by the assignments ${\rm deg}_{\hbar} V = {\rm deg}_{\hbar}\,V^* = 0$. In other words, $\hbar^{k}x^{\ell}\partial^{m}$ has $\hbar$-degree $(k - m)$. This convention may seem unnatural, but it is convenient for the next definition.

We remark that $\mathcal{W}_{V}(\leq 2)$ is a Lie algebra, while $\mathcal{W}_{V}(\leq d)$ is not $d \geq 3$ as can be seen from degree counting. This justifies considering operators of degree at most $2$. We denote $\pi_{d}$ and $\pi_{\leq d}$ the projection onto the subspaces $\mathcal{W}_{V}(d)$ and $\mathcal{W}_{V}(\leq d$). 

\begin{definition}
\label{Def2}A quantum Airy structure on $V$ is the data of a Lie algebra structure on $V$ and a homomorphism of Lie algebras $L\,:\,V \rightarrow \mathcal{W}_{V}(\leq 2)$ such that
\begin{itemize}
\item[$\bullet$] ${\rm Im}(\pi_{1} \circ L) = V \subseteq \mathcal{W}_{V}(1)$, and $\pi_{1} \circ L$ induces an isomorphism $\varphi\,:\,V \rightarrow V \subseteq \mathcal{W}_{V}(1)$.
\item[$\bullet$] for $d \in \{0,1,2\}$, we have ${\rm deg}_{\hbar}(\pi_{d}\circ L) = \delta_{d,0}$.
\end{itemize}
\end{definition}

As $\mathcal{W}_{V}(2) = \big({\rm Sym}^2 (V^* \oplus V)\big)[\hbar]$ and taking into account the $\hbar$-degree condition, $\pi_{2}\circ L$ decomposes into three tensors. Turning a $V^*$ in the target to a $V$ in the source, they can be arranged as
$$
A \in {\rm Hom}_{\mathbb{K}}(V^{\otimes 3},\mathbb{K}),\qquad B \in {\rm Hom}_{\mathbb{K}}(V^{\otimes 2},V),\qquad C \in {\rm Hom}_{\mathbb{K}}(V,V^{\otimes 2})\,.
$$
Besides, $\hbar^{-1}(\pi_{0} \circ L)$ gives
$$
D \in {\rm Hom}_{\mathbb{K}}(V,\mathbb{K})\,.
$$
The conditions \eqref{fAeq}-\eqref{Deq}-\eqref{ABCeq} on $(A,B,C,D)$ expressing that $L$ is a Lie algebra morphism can written solely in terms of composition of these morphisms. In fact, they would make sense in the more general context of $V$ being an object in a symmetric monoidal category -- here we worked with the category $\mathbf{Vect}_{\mathbb{K}}$ of finite-dimensional $\mathbb{K}$-vector fields.

The relation between Definitions~\ref{Def1} and \ref{Def2} is as follows. If $(e_i)_{i \in I}$ is a basis of $V$, we should consider it as a basis of $V \subseteq \mathcal{W}_{V}$ and then $L_i = L(\varphi^{-1}(e_i))$ for all $i \in I$.

\subsection{Partition function and topological recursion}
\label{Section1}
Taking a quantum Airy structure $L$ as input,  we can get a function on $V$ which is simultaneously annihilated by the differential operators given by $L$. More precisely, the output $F$ will belong to the space
\beq
\label{Edef}\mathcal{E}_{V} =  \hbar^{-1}\,{\rm Sym}^{> 0} V^*[[\hbar]]\,.
\eeq
We equip $\mathcal{E}_{V}$ with the Euler degree, defined through ${\rm deg}_{\chi}\,\hbar^{g - 1}\,{\rm Sym}^{n} V^* = \chi_{g,n} = 2g - 2 + n$. Elements $F \in \mathcal{E}_{V}$ can be decomposed
$$
F = \sum_{g \geq 0} \sum_{n \geq 1} \frac{\hbar^{g - 1}}{n!}\,F_{g,n},\qquad F_{g,n} \in {\rm Sym}^n(V^*)\,.
$$
We make a small abuse of vocabulary by saying that the Euler degree of $F_{g,n}$ is $\chi_{g,n}$, and we declare that the Euler degree of a product is the sum of the Euler degrees.

\begin{theorem}\cite{KSTR,TRABCD}
\label{MainTh} If $L$ is a quantum Airy structure on $V$, there exists a unique $F \in \mathcal{E}_{V}$ such that $F_{0,1} = 0$, $F_{0,2} = 0$, and for all $v \in V$ we have
$$
L(v)\exp(F) = 0\,.
$$.
\end{theorem}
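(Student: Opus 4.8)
The plan is to reduce the statement to a triangular recursion on the Euler degree $\chi_{g,n}=2g-2+n$, and to show that the only nontrivial point -- the consistency of this recursion -- is forced by the Lie algebra relations \eqref{fAeq}--\eqref{ABCeq}. Since $v\mapsto L(v)$ is linear and the $\varphi^{-1}(e_i)$ form a basis, the condition $L(v)\exp(F)=0$ for all $v$ is equivalent to $L_i\exp(F)=0$ for all $i\in I$. Dividing by $\exp(F)$, I would work with the conjugated operators $\hat L_i=\exp(-F)\,L_i\exp(F)$ and the functions $\Phi_i=\hat L_i\cdot 1=\exp(-F)L_i\exp(F)$, so the system becomes $\Phi_i=0$. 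Using $\partial_{x_b}\exp(F)=(\partial_{x_b}F)\exp(F)$ and $\partial_{x_a}\partial_{x_b}\exp(F)=(\partial_{x_a}\partial_{x_b}F+\partial_{x_a}F\,\partial_{x_b}F)\exp(F)$, each $\Phi_i$ is an explicit differential polynomial in $F$. Expanding $F=\sum_{g,n}\tfrac{\hbar^{g-1}}{n!}F_{g,n}$ and assigning to a monomial $\hbar^{p}x^{m}$ the Euler degree $2p+m$, one checks that once $F_{0,1}=F_{0,2}=0$ are imposed each $\Phi_i$ has no component of Euler degree $\le 1$, so the lowest equation sits in Euler degree $2$.

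Next I would read off the recursion. In the Euler degree $d+1$ part of $\Phi_i=0$, the single term $\hbar\partial_{x_i}F$ contributes $\tfrac{\hbar^{g}}{n!}\partial_{x_i}F_{g,n}$ for every $(g,n)$ with $\chi_{g,n}=d$, while a short degree count shows that all remaining contributions -- carried by the tensors $A,B,C,D$ -- involve only $F_{g',n'}$ with $\chi_{g',n'}<d$, together with products $F_{g_1,n_1}F_{g_2,n_2}$ whose factors have strictly smaller Euler degree (the unstable $F_{0,1},F_{0,2}$ being zero). Thus, for each such $(g,n)$, the equation reads $\tfrac{1}{n!}\partial_{x_i}F_{g,n}=-\Theta_i^{(g,n)}$, where $\Theta_i^{(g,n)}$ is explicit in the previously determined $F_{g',n'}$. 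The base cases are $\chi=1$: for $(0,3)$ one gets $\partial_{x_i}F_{0,3}=3\sum_{a,b}A^i_{a,b}x_ax_b$ (using $F_{0,2}=0$), and for $(1,1)$ one gets $\partial_{x_i}F_{1,1}=D^i$. This triangular structure proves uniqueness at once, since the $F_{g,n}$ are forced in increasing order of $\chi_{g,n}$, and reduces existence to producing, at each step, a symmetric tensor $F_{g,n}\in\mathrm{Sym}^n V^*$ whose gradient equals the prescribed $-n!\,\Theta^{(g,n)}$.

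The crux is therefore the integrability condition $\partial_{x_j}\Theta_i^{(g,n)}=\partial_{x_i}\Theta_j^{(g,n)}$, i.e.\ closedness of the $1$-form $\sum_i\Theta_i^{(g,n)}\,dx_i$; once it holds, Poincar\'e's lemma on polynomials (concretely $F_{g,n}=-(n-1)!\sum_i x_i\,\Theta_i^{(g,n)}$ via the Euler relation) yields a unique such $F_{g,n}$ and closes the induction. To prove closedness I would use the commutation relation: applying $[\hat L_i,\hat L_j]=\sum_a\hbar f^a_{i,j}\hat L_a$, the conjugate of \eqref{Lie}, to the constant function $1$ gives the identity $\hat L_i\Phi_j-\hat L_j\Phi_i=\sum_a\hbar f^a_{i,j}\Phi_a$, valid whatever $F$ is plugged in. The key observation is that $\hat L_i=\hbar\partial_{x_i}+M_i$, where $\hbar\partial_{x_i}$ raises the Euler degree by exactly $1$ while the correction $M_i$ (all $A,B,C,D$ and conjugation terms) raises it by at least $2$. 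Arguing by induction, assume the $F_{g',n'}$ with $\chi_{g',n'}<d$ have been built so that every $\Phi_a$ vanishes in Euler degrees $\le d$. Extracting the Euler degree $d+2$ part of the identity then annihilates all $M_i$- and $f$-contributions (they act on the already-vanishing low parts of $\Phi$) and leaves $\hbar\bigl(\partial_{x_i}[\Phi_j]_{d+1}-\partial_{x_j}[\Phi_i]_{d+1}\bigr)=0$. Since the unknown $F_{g,n}$ enters $[\Phi_i]_{d+1}$ only through $\tfrac{\hbar^g}{n!}\partial_{x_i}F_{g,n}$, whose $\partial_{x_j}$ is automatically symmetric in $(i,j)$, those terms drop out of the antisymmetrization and the relation collapses to exactly $\partial_{x_i}\Theta_j^{(g,n)}=\partial_{x_j}\Theta_i^{(g,n)}$. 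This is the sole place where \eqref{fAeq}--\eqref{ABCeq} are used, and it is where they are indispensable: the counting shows the quadratic system is overdetermined, so without them the recursion would be inconsistent. I expect this integrability step -- and in particular the degree bookkeeping that isolates the leading $\hbar\partial_{x_i}$ term -- to be the main obstacle; reassuringly, the base case $(0,3)$ is the special instance in which closedness is precisely the symmetry $A^i_{j,k}=A^j_{i,k}$ of \eqref{fAeq}.
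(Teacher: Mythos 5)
Your proof is correct, but it is not the route the paper takes: it is essentially the holonomicity argument of Kontsevich--Soibelman \cite{KSTR}, which the text explicitly mentions and then declines to follow in favour of the ``pedestrian way'' of \cite{TRABCD}. The paper's existence proof defines $F_{g,n}$ by the recursion \eqref{TReq2} with a distinguished first index, substitutes the recursion once more into its own right-hand side, and verifies by a long term-by-term computation that the result is symmetric under $i\leftrightarrow i_2$, matching each \emph{a priori} asymmetric term against one of the three relations \eqref{ABCeq} or the relation \eqref{Deq}. You never unpack \eqref{Lie} into the six tensor relations at all: you apply the conjugated commutator identity to the constant function and combine it with the observation that $\hat L_i-\hbar\partial_{x_i}$ raises the Euler degree by at least $2$ while $\hbar\partial_{x_i}$ raises it by exactly $1$, so that extracting the degree-$(d+2)$ part against the inductive vanishing of $[\Phi_a]_{\le d}$ yields precisely the integrability condition $\partial_{x_i}\Theta_j=\partial_{x_j}\Theta_i$; symmetry of $F_{g,n}$ then comes for free from the Poincar\'e lemma (your Euler-relation formula) instead of being checked by hand. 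Your degree bookkeeping is sound, including the base case where closedness at $\chi=1$ reduces to $A^i_{j,k}=A^j_{i,k}$ from \eqref{fAeq} and to the triviality of the constant $D^i$. What each approach buys: yours is shorter, conceptually transparent about why the Lie algebra condition is exactly the consistency condition for the overdetermined recursion, and robust under generalization (e.g.\ operators of higher degree); the paper's computation instead exhibits the explicit symmetric recursion \eqref{TReq2} with its distinguished first index -- which is what feeds the graphical interpretation of Lemma~\ref{TRgraph} and the rest of the lectures -- and shows precisely which of the six relations cures which asymmetric term. The two outputs are of course equivalent: extracting the coefficient of a monomial from your gradient equation $\tfrac{1}{n!}\partial_{x_i}F_{g,n}=-\Theta^{(g,n)}_i$ recovers \eqref{TReq2}.
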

To fix the vocabulary, we will call $F_{g,n}$ the \textsc{tr} amplitudes, $F$ the ``free energy'', and $Z = \exp(F)$ the ``partition function''.

\vspace{0.2cm}

\noindent $\bullet$ \textbf{Proof of uniqueness.} For convenience, we are going to use a basis of linear coordinates $(x_i)_{i \in I}$ as in Definition~\ref{Def1}. We also agree that indices $a,b,\ldots$ should be summed over the set $I$, while indices $i,j,k,\ldots$ are fixed. If we decompose
$$
F = \sum_{g \geq 0} \hbar^{g - 1}\,F_{g}\,,
$$
we find that the coefficient of $\hbar^{g}$ in $\exp(-F)L_i\exp(F)$ is
\bea
\partial_{x_i} F_{g} & = & \delta_{g,0} \tfrac{1}{2} A^i_{a,b}x_ax_b + B^i_{a,b}x_a\partial_{x_b}F_{g} + \tfrac{1}{2}C^i_{a,b}\Big(\partial_{x_a}\partial_{x_b} F_{g - 1} + \sum_{g_1 + g_2 = g} \partial_{x_a}F_{g_1}\partial_{x_b}F_{g_2}\Big) \nonumber \\
&& + \delta_{g,1}D^i\,. \nonumber
\eea
Let us decompose further
$$
F_g = \sum_{n \geq 1} \sum_{i_1,\ldots,i_n \in I} F_{g,n}[i_1,\ldots,i_n]\,\frac{x_{i_1}\cdots x_{i_n}}{n!}\,.
$$
Now, for $n \geq 1$ we fix an unordered $(n - 1)$-uple of indices $i_2,\ldots,i_n \in I$ and collect the coefficient of the monomial $\tfrac{x_{i_2}\cdots x_{i_n}}{(n - 1)!}$ in this equation. Taking into account the fact that $i_2,\ldots,i_n$ are unordered and $F_{g,n}$ are symmetric, we get
\bea
&& F_{g,n}[i,i_2,\ldots,i_n] \nonumber \\
& = & \delta_{g,0}\delta_{n,3} A^i_{i_2,i_3} + \delta_{g,1}\delta_{n,1}D^i + \sum_{m = 2}^n B^{i}_{i_m,a}\,F_{g,n - 1}[a,i_2,\ldots,\widehat{i_m},\ldots,i_n] \nonumber \\
&& + \tfrac{1}{2} C^i_{a,b}\bigg(F_{g - 1,n + 1}[a,b,i_2,\ldots,i_n] + \sum_{\substack{g_1 + g_2 = g \\ J_1 \dot{\cup} J_2 = \{i_2,\ldots,i_n\}}} F_{g_1,1+|J_1|}[a,J_1] F_{g_2,1 + |J_2|}[b,J_2]\bigg)\,. \nonumber \\
\label{TReq} 
\eea

As $F_{0,1}$ and $F_{0,2}$ are required to vanish, all possibly non-zero $F_{g,n}$ have Euler degree $\chi_{g,n} > 0$. According to \eqref{TReq}, the cases $\chi_{g,n} = 1$ correspond to
\beq
\label{inieq}F_{0,3}[i,i_2,i_3] = A^i_{i_2,i_3},\qquad F_{1,1}[i] = D^i\,,
\eeq
and for $\chi_{g,n} \geq 2$, we have
\bea
&& F_{g,n}[i,i_2,\ldots,i_n] \nonumber \\
& = & \sum_{m = 2}^n B^{i}_{i_m,a}\,F_{g,n - 1}[a,i_2,\ldots,\widehat{i_m},\ldots,i_n] \nonumber \\
&& + \tfrac{1}{2} C^i_{a,b}\bigg(F_{g - 1,n + 1}[a,b,i_2,\ldots,i_n] + \sum_{\substack{g_1 + g_2 = g \\ J_1 \dot{\cup} J_2 = \{i_2,\ldots,i_n\}}} F_{g_1,1+|J_1|}[a,J_1] F_{g_2,1 + |J_2|}[b,J_2]\bigg)\,, \nonumber \\
\label{TReq2} 
\eea
where we observe that the Euler degree of the right-hand side is $\chi_{g,n} - 1$.  Therefore, \eqref{TReq2} is a recursion on $\chi_{g,n}$ which has at most one solution.

\vspace{0.2cm}

\noindent $\bullet$ \textbf{Proof of existence.} We now have to justify the existence of $F_{g,n} \in {\rm Sym}^n(V^*)$ satisfying \eqref{inieq}-\eqref{TReq2}, the non obvious part being the symmetry. A general holonomicity argument proving existence was given in \cite{KSTR}. Here we follow the pedestrian way of \cite{TRABCD}. First of all, \eqref{inieq} defines a symmetric $F_{0,3}$ because $A$ is fully symmetric in a quantum Airy structure. Then, we would like to take \eqref{TReq2} as recursive definition of $F_{g,n}$. This only makes sense if we can prove at each recursion step that, despite the fact that $i$ does not play the same role as $i_2,\ldots,i_n$ in \eqref{TReq2}, the result of the sum in the right-hand side is still symmetric when $i$ is permuted with one of the $i_m$. Note that the $i_2,\ldots,i_n$ do play a symmetric role, so it is enough to prove invariance under permutation of $i$ and $i_2$. Let us examine the two cases where $\chi_{g,n} = 2$, that is $(g,n) = (0,4)$ and $(2,1)$. We would like to define
$$
F_{0,4}[i,i_2,i_3,i_4] = B^i_{i_2,a}F_{0,3}[a,i_3,i_4] + B^i_{i_3,a}F_{0,3}[a,i_2,i_4] + B^i_{i_4,a}F_{0,3}[a,i_2,i_3]\,.
$$
Using $F_{0,3}[i,j,k] = A^i_{j,k}$ which is fully symmetric, this also reads
$$
F_{0,4}[i,i_2,i_3,i_4] = B^i_{i_2,a}A^{a}_{i_3,i_4} + B^i_{i_3,a}A^{i_2}_{a,i_4} + B^i_{i_4,a}A^{i_2}_{a,i_3}\,.
$$
We recognize the left-hand side of the first relation in \eqref{ABCeq}, therefore it is invariant if we exchange $i$ and $i_2$. We also would like to define
$$
F_{2,1}[i,i_2] = B^i_{i_2,a}D^a + \tfrac{1}{2}C^i_{a,b}F_{0,3}[a,b,i_2] = B^i_{i_2,a}D^a + \tfrac{1}{2}C^i_{a,b}A^{i_2}_{a,b}\,.
$$
We recognize the left-hand side of the $D$ relation \eqref{Deq}, therefore it is invariant if we exchange $i$ and $i_2$. We see that the conditions imposed on $(A,B,C,D)$ by the characterization of quantum Airy structures are essential in proving the symmetry of $F_{g,n}$.  

Now we examine the general case. Take $(g,n)$ such that $\chi_{g,n} > 2$, and assume we have proved full symmetry of $F_{g',n'}$ for all $\chi_{g',n'} < \chi_{g,n}$. Let $K = \{i_3,\ldots,i_n\}$. Let us define $F_{g,n}[i,j,K]$ by applying \eqref{TReq2} with first index $i$.
\begin{small}\bea
F_{g,n}[i,j,K] & = & B_{j,a}^iF_{g,n - 1}[a,K] + \sum_{k \in K} B_{k,a}^iF_{g,n - 1}[a,j,K^{(k)}] + \tfrac{1}{2}C_{a,b}^i F_{g - 1,n + 1}[a,b,j,K] \nonumber \\
&& + \sum_{\substack{h' + h'' = g \\ K' \dot{\cup} K'' = K}} C_{a,b}^i F_{h',2 + |K'|}[a,j,K'] F_{h'',1 + |K''|}[b,K'']\,. \nonumber
\eea
\end{small}
\noindent The resulting terms are in the range of the induction hypothesis, thus fully symmetric under permutation of $(j,i_3,\ldots,i_n)$. We use again \eqref{TReq2} with first index $j$, except for the term involving $B^i_{j,a}F_{g,n - 1}[a,K]$, for which we rather use \eqref{TReq2} with first index $a$. Denote $K^{(k)} := K \setminus\{k\}$ and $K^{(k,\ell)} := K\setminus\{k,\ell\}$.  We also implicitly use the full symmetry of $A$ and the symmetry of $C$ in its two lower indices. The result is
\begin{small}
\bea
&& F_{g,n}[i,j,K] \nonumber \\
& = & B_{j,a}^i\bigg\{\sum_{k \in K} B_{k,b}^a F_{g,n - 2}[b,K^{(k)}] + \tfrac{1}{2}C_{b,c}^aF_{g - 1,n}[b,c,K] + \sum_{\substack{h' + h'' = g \\ K' \dot{\cup} K'' = K}} \tfrac{1}{2}C_{b,c}^a F_{h',1+|K'|}[b,K']F_{h'',1 + |K''|}[c,K'']\bigg\} \nonumber \\
&& + \sum_{k \in K} B_{k,a}^i\bigg\{B_{a,b}^j F_{g,n - 2}[b,K^{(k)}] + \sum_{\ell \in K^{(k)}} B_{\ell,b}^j F_{g,n - 2}[b,a,K^{(k,\ell)}] + \tfrac{1}{2} C_{b,c}^j F_{g - 1, n}[b,c,a,K^{(k)}] \nonumber \\
&& + \sum_{\substack{h' + h'' = g \\ K' \dot{\cup} K'' = K^{(k)}}} \tfrac{1}{2} F_{h',2 + |K'|}[b,a,K'] F_{h'',1 + |K''|}[c,K''] \bigg\} \nonumber \\
&& + \tfrac{1}{2}C_{a,b}^i\bigg\{B_{a,c}^jF_{g - 1,n}[b,c,K] + B^j_{b,c}F_{g - 1,n}[a,c,K] \nonumber
\eea
\bea
&\phantom{=}&  + \sum_{k \in K} B_{k,c}^j F_{g - 1,n}[b,c,a,K^{(k)}] + \tfrac{1}{2}C_{c,d}^j F_{g - 2,n + 2}[c,d,a,b,K] \nonumber \\
&& + \sum_{\substack{h' + h'' = g - 1 \\ K' \dot{\cup} K'' = K}}  C_{c,d}^jF_{h',3+|K'|}[a,b,c,K']F_{h'',1 + |K''|}[d,K''] + C_{c,d}^j F_{h',2 + |K'|}[a,c,K']F_{h'',2 + |K''|}[b,d,K'']\bigg\} \nonumber \\
&& + \sum_{\substack{h' + h'' = g \\ K' \dot{\cup} K'' = K}} C_{a,b}^i F_{h'',1 + |K''|}[b,K'']\bigg\{B_{a,c}^j F_{h',1 + |K'|}[c,K'] + \sum_{k \in K'} B_{k,c}^j F_{h',1 + |K'|}[c,a,K^{(k)}] \nonumber \\
&& + \tfrac{1}{2}C_{c,d}^j F_{h' - 1,3 + |K'|}[c,d,a,K'] + \sum_{\substack{s + s' = h' \\ L \dot{\cup} L' = K'}} C_{c,d}^j F_{s,2 + |L|}[a,c,L]F_{s',1 + |L'|}[d,L'] + \delta_{h',0}\delta_{|K'|,1}A_{a,k'}^j\bigg\}\,. \nonumber
\end{eqnarray}
\end{small}
We now collect the various terms
\begin{small}
\begin{eqnarray}
&& F_{g,n}[i,j,K] \nonumber \\
&= & \sum_{k \in K} \textcolor{red}{F_{g,n - 2}[b,K^{(k)}] \Big(B_{j,a}^iB_{k,b}^a + B_{a,b}^jB_{k,a}^i + A_{a,k}^jC_{a,b}^i\Big)}  + \sum_{k \neq \ell \in K} F_{g,n - 2}[a,b,K^{(k,\ell)}]\,B_{k,a}^i B_{\ell,b}^j \nonumber \\
&& + \textcolor{red}{\tfrac{1}{2} F_{g - 1,n}[b,c,K]\big(B_{j,a}^iC_{b,c}^a + B_{a,c}^jC_{a,b}^i + B^j_{b,c}C^i_{a,b}\big)} + \sum_{k \in K} \tfrac{1}{2} F_{g - 1,n}[b,c,a,K^{(k)}]\big(B^i_{k,a}C^j_{b,c} + B^j_{k,c}C^i_{a,b}\big) \nonumber \\ 
&& + \tfrac{1}{4} F_{g - 2,n + 2}[a,b,c,d,K]\,C_{a,b}^i C_{c,d}^j + \textcolor{red}{\sum_{\substack{h' + h'' = g \\ K' \dot{\cup} K'' = K}} \tfrac{1}{2} F_{h',1+|K'|}[b,K']F_{h'',1+|K''|}[c,K'']\big(C_{b,c}^aB_{j,a}^i + 2C_{a,b}^iB_{a,c}^j\big)} \nonumber \\
&&+ \sum_{k \in K} \sum_{\substack{h' + h'' = g \\ K' \dot{\cup} K'' = K^{(k)}}}  \tfrac{1}{2} F_{h',2+|K'|}[b,a,K']F_{h'',1 + |K''|}[c,K'']\big(B_{k,a}^iC_{b,c}^j + B^j_{k,c}C^i_{a,b}\big) \nonumber \\
&& + \sum_{\substack{h' + h'' = g - 1\\ K' \dot{\cup} K'' = K}} \tfrac{1}{2}F_{h',3 + |K'|}[a,b,c,K'] F_{h'',1 + |K''|}[d,K'']\big(C_{c,d}^jC_{a,b}^i + C_{a,d}^iC_{c,b}^j\big) \nonumber \\
&& + \sum_{\substack{h' + h'' = g - 1\\ K' \dot{\cup} K'' = K}} \frac{1}{2}F_{h',2 + |K'|}[a,c,K']F_{h'',2 + |K''|}[b,d,K'']\,C_{a,b}^iC_{c,d}^j \nonumber \\
&& + \sum_{\substack{h'' + s + s' = g \\ K'' \dot{\cup} L \dot{\cup} L' = K}} F_{h'',1 + |K''|}[b,K'']F_{s,2 + |L|}[a,c,L]F_{s,1 + |L'|}[d,L']C_{c,d}^j C_{a,b}^i \,. \nonumber
\end{eqnarray}
\end{small}
The only terms which are not \textit{a priori} symmetric in $i$ and $j$ are highlighted in red. But the two last relations in \eqref{ABCeq} imply the symmetry of the first two red expressions. In the third red expression, $b$ and $c$ play a symmetric role -- provided we also exchange the role of $h',K'$ with the one of $h'',K''$. Also exploiting the symmetry of $C$, this red sum is also equal to its dissymetrized version
\begin{small}
$$
\textcolor{red}{\sum_{\substack{h' + h'' = g \\ K' \dot{\cup} K'' = K}} \tfrac{1}{2} F_{h',1+|K'|}[b,K']F_{h'',1+|K''|}[c,K'']\big(C_{b,c}^aB_{j,a}^i + C_{b,a}^iB_{a,c}^j + C^i_{c,a}B^j_{a,b}\big)\,.}
$$
\end{small}
This is symmetric owing to the third relation in \eqref{ABCeq}. So, $F_{g,n}[i,j,K]$ is fully symmetric, and by induction we proved Theorem~\ref{MainTh}.

\subsection{The structure of TR}
\label{Sgraph}
\textsc{tr} is the recursion on $\chi_{g,n}$ given by formula \eqref{inieq}-\eqref{TReq2}, which defines the coefficients $F_{g,n}[i_1,\ldots,i_n]$ of the free energy in terms of the coefficients $(A,B,C,D)$ of a quantum Airy structure. This formula has a graphical interpretation which makes it easy to remember.

\begin{definition}
Fix $g \geq 0$ and $n \geq 1$ such that $\chi_{g,n} > 0$. We form the set $\mathbb{G}_{g,n}$ of $\Gamma = (G,T)$ where
\begin{itemize}
\item[$\bullet$] $G$ is a trivalent graph with $n$ ordered leaves, and $b_1(G) = g$.
\item[$\bullet$] $T \subseteq G$ is a spanning tree which contains the first leaf (considered as the root), and not the other leaves.
\item[$\bullet$] the edges $e = \{v,v'\}$ of $G$ which are not in $T$ must connect parent vertices, \textit{i.e.} the common ancestor of $v$ and $v'$ in the rooted tree $T$ is either $v$ or $v'$.
\end{itemize}
An automorphism of $\Gamma$ is a permutation of ${\rm Edge}(G)$ which preserves the graph structure of $G$. We denote ${\rm aut}(\Gamma)$ the set of automorphisms. We denote $E'(\Gamma)$ the set consisting of leaves and of edges which are not loops. By convention, $\mathbb{G}_{0,1} = \mathbb{G}_{0,2} = \emptyset$. We insist that $G$ does not include the data of a cyclic order of edges/leaves incident at a vertex. Therefore, the number of automorphisms of a given $\Gamma$ is a power of $2$.
\end{definition}

\vspace{0.2cm}

\noindent $\bullet$ \textbf{Recursive description of $\mathbb{G}_{g,n}$.} For $\chi_{g,n} = 1$, we have
\begin{figure}[h!]
\begin{center}
\includegraphics[width=0.6\textwidth]{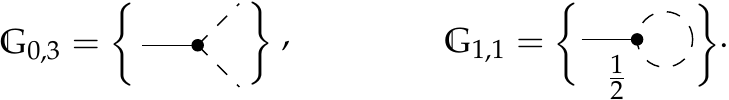}
\end{center}
\end{figure}

\vspace{-0.5cm}

In general, $\mathbb{G}_{g,n}$ has a recursive structure. Indeed, for $\chi_{g,n} \geq 2$, let us denote $\ell_1,\ldots,\ell_n$ the ordered leaves of $\Gamma \in \mathbb{G}_{g,n}$, and remove the vertex incident to $\ell_1$. It was also incident to two other edges/leaves $\{e_1,e_2\}$. We obtain $\Gamma'$ which can be
\begin{itemize}
\item[(\textbf{I})] a graph of $\mathbb{G}_{g,n - 1}$ if one of the $e_i$ is a leaf. The other edge $e_{\overline{i}}$ is then considered as the root of $\Gamma'$. 
\item[(\textbf{I}')] a graph in $\mathbb{G}_{g - 1,n + 2}$, with an arbitrary choice of first and second leaf to make.
\item[(\textbf{II})] a (non-ordered) disjoint union of $\Gamma'_1 \dot{\cup} \Gamma'_2$ where $\Gamma'_i \in \mathbb{G}_{g_i,1 + |J_i|}$ contains $e_i$ as the root, for a splitting of genera $g_1 + g_2 = g$ and a splitting $J_1 \dot{\cup} J_2$ of leaves of $\Gamma$ distinct from the first leaf. The ordering of $e_1$ and $e_2$, hence of $\Gamma'_1$ and $\Gamma'_2$ is arbitrary.
\end{itemize}
In the two last cases we have $|{\rm aut}(\Gamma)| = 2|{\rm aut}(\Gamma')|$. This is summarized in the picture below, where the arrow indicate the new ``first leaf'' of the connected components of $\Gamma'$.

\begin{figure}[h!]
\begin{center}
\includegraphics[width=0.7\textwidth]{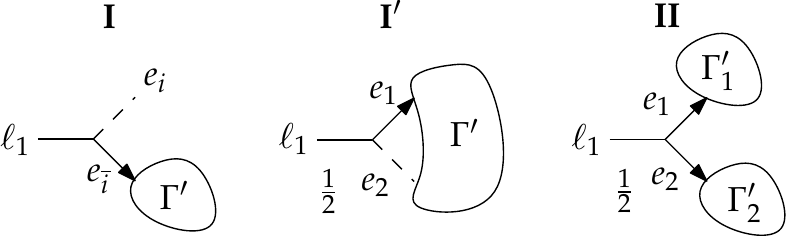}
\end{center}
\end{figure}

\noindent $\bullet$ \textbf{Weights.} Given coefficients $(A,B,C,D)$ of a quantum Airy structure, we define the weight $w(\Gamma,\gamma)$ of a graph $\Gamma \in \mathbb{G}_{g,n}$ equipped with a coloring $\gamma\,:\,E'(\Gamma) \rightarrow I$. We declare the base cases
$$
w(\Gamma_{0,3},c) = A^{\gamma(\ell_1)}_{\gamma(\ell_2),\gamma(\ell_3)},\qquad w(\Gamma_{1,1},c) = D^{\gamma(\ell_1)}\,,
$$
and then make a recursive definition for $\chi_{g,n} \geq 2$ using the previous decomposition. If we denote $\gamma'$ the restriction of $\gamma$ to $E'(\Gamma')$ and make a similar definition for $(\gamma'_i)_{i  = 1}^2$, we declare
\begin{itemize}
\item[(\textbf{I})] $w(\Gamma,\gamma) = B^{\gamma(\ell_1)}_{\gamma(e_i),\gamma(e_{\overline{i}})}\,w(\Gamma',\gamma')$.
\item[(\textbf{I}')] $w(\Gamma,\gamma) = C^{\gamma(\ell_1)}_{\gamma(e_1),\gamma(e_2)}\,w(\Gamma',\gamma')$.
\item[(\textbf{II})] $w(\Gamma,\gamma) = C^{\gamma(\ell_1)}_{\gamma(e_1),\gamma(e_2)}\,w(\Gamma'_1,\gamma'_1) w(\Gamma'_2,\gamma'_2)$.
\end{itemize}
These definitions are tailored so that the \textsc{tr} formula is equivalent to
\begin{lemma}
\label{TRgraph} For any $g \geq 0$, $n \geq 1$ and $i_1,\ldots,i_n \in I$, we have
$$
F_{g,n}[i_1,\ldots,i_n] = \sum_{\Gamma \in \mathbb{G}_{g,n}} \sum_{\substack{\gamma \in I^{E'(\Gamma)} \\ \gamma(\ell_j) = i_j}} \frac{w(\Gamma,\gamma)}{|{\rm aut}(\Gamma)|}\,.
$$ \hfill $\Box$
\end{lemma}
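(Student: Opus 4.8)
The plan is to prove Lemma~\ref{TRgraph} by induction on the Euler degree $\chi_{g,n} = 2g-2+n$, matching the two ways the two sides are built: the left-hand side $F_{g,n}$ is generated by the recursion \eqref{inieq}-\eqref{TReq2}, while the right-hand side is assembled from the recursive description of $\mathbb{G}_{g,n}$ through the cases $(\mathbf I)$, $(\mathbf I')$, $(\mathbf{II})$. Write $\widehat{F}_{g,n}[i_1,\ldots,i_n]$ for the graph sum on the right. First I would settle the two base cases $\chi_{g,n}=1$. For $(g,n)=(0,3)$ the set $\mathbb{G}_{0,3}$ is a single trivalent vertex carrying the three ordered leaves; it has no nontrivial automorphism and weight $A^{\gamma(\ell_1)}_{\gamma(\ell_2),\gamma(\ell_3)}$, so $\widehat{F}_{0,3}[i,i_2,i_3]=A^i_{i_2,i_3}$. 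For $(g,n)=(1,1)$ the set $\mathbb{G}_{1,1}$ is a single vertex with one leaf and one loop; since an automorphism permutes $\mathrm{Edge}(G)$ and the loop is one edge, only the identity survives, $|\mathrm{aut}|=1$, and $\widehat{F}_{1,1}[i]=D^i$. Both agree with \eqref{inieq}. Equivalently, this induction amounts to checking that $\widehat{F}$ solves the same recursion as $F$ with the same vanishing of $\widehat{F}_{0,1},\widehat{F}_{0,2}$, so the conclusion $\widehat{F}=F$ also follows from the uniqueness part of Theorem~\ref{MainTh}.

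For the inductive step, fix $(g,n)$ with $\chi_{g,n}\geq 2$ and assume the identity for all smaller Euler degree. In the sum defining $\widehat{F}_{g,n}[i,i_2,\ldots,i_n]$ I would organize the graphs $\Gamma\in\mathbb{G}_{g,n}$ by the trivalent vertex incident to the root leaf $\ell_1$: deleting it produces a $\Gamma'$ belonging to exactly one of $(\mathbf I)$, $(\mathbf I')$, $(\mathbf{II})$. By the recursive definition of the weight, each $w(\Gamma,\gamma)$ factors as the appropriate $B$ or $C$ vertex times $w(\Gamma',\gamma')$, and a coloring $\gamma$ of $\Gamma$ restricts to a coloring $\gamma'$ of $\Gamma'$ together with free colors $a,b$ on the two edges $e_1,e_2$ exposed at the deleted vertex — which are exactly the summation indices in \eqref{TReq2}. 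Applying the induction hypothesis $\widehat{F}_{g',n'}=F_{g',n'}$ to the smaller pieces then reproduces each term: case $(\mathbf I)$, where one of $e_1,e_2$ is a leaf $\ell_m$ and the other becomes the root of $\Gamma'\in\mathbb{G}_{g,n-1}$, gives $\sum_{m=2}^{n}B^i_{i_m,a}F_{g,n-1}[a,i_2,\ldots,\widehat{i_m},\ldots,i_n]$; case $(\mathbf I')$ gives $\tfrac12 C^i_{a,b}F_{g-1,n+1}[a,b,i_2,\ldots,i_n]$; and case $(\mathbf{II})$ gives $\tfrac12 C^i_{a,b}\sum_{g_1+g_2=g,\ J_1\dot{\cup}J_2=\{i_2,\ldots,i_n\}}F_{g_1,1+|J_1|}[a,J_1]F_{g_2,1+|J_2|}[b,J_2]$. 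Summing the three contributions is exactly the right-hand side of \eqref{TReq2}, closing the induction.

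The main obstacle is the correct propagation of the automorphism weights $1/|\mathrm{aut}(\Gamma)|$, and in particular the appearance of the prefactors $\tfrac12$ in the two $C$-terms. In case $(\mathbf I)$ the deleted vertex carries the two distinguishable ordered leaves $\ell_1,\ell_m$, so every automorphism fixes it and restricts to $\Gamma'$; hence $|\mathrm{aut}(\Gamma)|=|\mathrm{aut}(\Gamma')|$ and no extra factor intervenes, matching the coefficient $1$ of the $B$-term. In cases $(\mathbf I')$ and $(\mathbf{II})$ the two edges $e_1,e_2$ at the deleted vertex play a symmetric role, so reconstructing $\Gamma$ from $\Gamma'$ requires an arbitrary ordering of $e_1,e_2$, and the symmetry $C^i_{a,b}=C^i_{b,a}$ together with the symmetry of $F_{g-1,n+1}$ in its first two arguments (respectively the symmetry of the splitting sum under $(g_1,J_1)\leftrightarrow(g_2,J_2)$) makes the two orderings contribute equally. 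The delicate point, which I would treat as a groupoid-cardinality (orbit--stabilizer) statement rather than by naive counting, is that the resulting factor of $2$ is uniform across the two possibilities: when the swap of $e_1,e_2$ extends to a genuine automorphism of $\Gamma$ it contributes to $|\mathrm{aut}(\Gamma)|=2|\mathrm{aut}(\Gamma')|$, and when it does not it instead makes the reconstruction map two-to-one onto $\mathbb{G}_{g,n}$; in both regimes the weighted count $\sum_\Gamma 1/|\mathrm{aut}(\Gamma)|$ produces precisely the $\tfrac12$ of \eqref{TReq2} (and explains why $|\mathrm{aut}(\Gamma)|$ is always a power of $2$). Once this $\mathbb{Z}_2$-bookkeeping is verified, the term-by-term matching above completes the proof.
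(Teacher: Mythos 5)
Your argument is correct and is exactly the one the paper leaves implicit (the lemma is stated with an empty proof, the text merely asserting that the weights are ``tailored'' so that the recursive decomposition of $\mathbb{G}_{g,n}$ into cases $(\mathbf{I})$, $(\mathbf{I}')$, $(\mathbf{II})$ reproduces \eqref{inieq}--\eqref{TReq2}). Your orbit--stabilizer treatment of the ordering of $e_1,e_2$ is in fact a more careful justification of the paper's blanket claim $|{\rm aut}(\Gamma)|=2|{\rm aut}(\Gamma')|$ in the two $C$-cases, and correctly accounts for the factors $\tfrac{1}{2}$.
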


\subsection{A one-dimensional example}

The simplest example we can think of is $V = \mathbb{C}$. Any differential operator of the form
$$
L = \hbar \partial_{x} - \tfrac{1}{2}Ax^2 - \hbar Bx\partial_{x} - \tfrac{\hbar^2}{2}C\partial_{x}^2 - \hbar D
$$
forms a quantum Airy structure on the trivial Lie algebra $\mathbb{C}$. According to Lemma~\ref{TRgraph}, the partition function then enumerates graphs in $\mathbb{G}_{g,n}$, weighted by automorphisms, according to the number of vertices which received an $A$, $B$, $C$ or $D$ weight. If we specialize to $A = B = C = D = 1$, we get
$$
F = \sum_{g \geq 0} \sum_{n \geq 1} \frac{\hbar^{g - 1}x^n}{n!}\,|\mathbb{G}_{g,n}|,\qquad |\mathbb{G}_{g,n}| = \sum_{\Gamma \in \mathbb{G}_{g,n}} \frac{1}{|{\rm aut}(\Gamma)|}\,.
$$

On the other hand, the differential equation $L\cdot \exp(F) = 0$ is easy to solve. If we set
$$
\exp(F(x)) = \exp\big(\tfrac{1}{\hbar}(x - \tfrac{x^2}{2})\big)\tilde{Z}(y),\qquad y = \tfrac{1 - 2x - \hbar}{(2\hbar)^{2/3}}\,,
$$
we reduce this differential equation to the Airy differential equation
$$
\partial_{y}^2 \tilde{Z}(y) = y\tilde{Z}(y)\,.
$$
Taking into account the fact that $F(x) = O(x)$ and the initial condition $F_{0,3} = A = 1$, we deduce that $\tilde{Z}(y) = \frac{{\rm Bi}(y)}{{\rm Bi}(0)}$ where ${\rm Bi}$ is the Bairy function, whose asymptotic expansion when the variable $y \rightarrow +\infty$ is
$$
{\rm Bi}(y) = y^{-1/4}\exp\big(-\tfrac{2}{3}y^{3/2}\big)\Big(1 + \sum_{m \geq 1} \tfrac{6^m\Gamma(m + \frac{1}{6})\Gamma(m + \frac{5}{6})}{2\pi}\,\tfrac{y^{-3m/2}}{m!}\Big)\,.
$$
Note that logarithm of $\tilde{Z}(y)$ has an asymptotic expansion when $y \rightarrow \infty$ which becomes an element of $\mathcal{E}_{V}$ once we substitute $y = (2\hbar)^{-2/3}(1 - 2x - \hbar)$ and consider the result as a formal series in $\hbar$ and $x$. It is only in this sense that the Bairy function ``computes'' the generating series of trivalent graphs. This is actually a well-known result. For general $A,B,C,D$, one can still solve the differential equation $L\cdot \exp(F) = 0$, and the result is expressed in terms of the Whittaker-M function, see \cite{TRABCD}.

The appearance of the Airy differential equation here motivates the name ``quantum Airy structure''.

\subsection{Operations on quantum Airy structures}
\label{Sop}

The group $\exp(\mathcal{W}_{V}(\leq 2))$ acts on its Lie algebra $\mathcal{W}_{V}(\leq 2)$ by conjugation. So, if $L\,:\,V \rightarrow \mathcal{W}_{V}(\leq 2)$ is a quantum Airy structure, we can consider for $U \in \exp(\mathcal{W}_{V}(\leq 2))$ the new system of differential operators and the partition function it annihilates
\beq
\label{actionL} L \rightarrow \tilde{L} = ULU^{-1},\qquad Z \rightarrow \tilde{Z} = UZ\,.
\eeq
Yet, for a general $U$, $ULU^{-1}$ will not be a quantum Airy structure. Apart from $\hbar$-degrees, the main obstruction is that $L$ should contain as linear terms only derivations and no linear terms $x_i$. We shall describe three such operations that do preserve quantum Airy structures. When this is so, it is for a conjugated (thus isomorphic) Lie algebra structure on $V$. The resulting transformation $(A,B,C,D) \rightarrow (\tilde{A},\tilde{B},\tilde{C},\tilde{D})$ can be explicitly thanks to 

\begin{lemma}
\label{BCH} Let $(u_{i,j})_{i,j \in I}$ be a symmetric matrix, $V = (v_{i,j})_{i,j \in I}$ be a matrix, and $(t_i)_{i \in I}$ a vector. We have
\bea
\exp(-\tfrac{\hbar}{2} u_{a,b} \partial_{x_a}\partial_{x_b})t_ax_{a}\exp\big(\tfrac{\hbar}{2} u_{a,b}\partial_{x_a}\partial_{x_b}\big) & = & t_ax_a + \hbar u_{a,b}t_a\partial_{x_b} + \tfrac{\hbar}{2}u_{a,b}t_{a}t_{b}\,, \nonumber \\
\exp(-\tfrac{1}{2\hbar} u_{a,b}x_{a}x_{b}\big) \hbar t_{a}\partial_{x_a} \exp\big(\tfrac{1}{2\hbar} u_{a,b} x_{a}x_{b}\big) & = & \hbar t_{a}\partial_{x_a} - u_{a,b}t_a x_b + \tfrac{1}{2} u_{a,b}t_{a}t_{b}\,, \nonumber \\
\exp(-v_{a,b}x_{a}\partial_{x_b}) x_i \exp(v_{a,b}x_{a}\partial_{x_b}) & = & (e^{-V})_{a,i}x_a\,, \nonumber \\
\exp(-v_{a,b}x_{a}\partial_{x_b}) \hbar t_a\partial_{x_a} \exp(v_{a,b}x_{a}\partial_{x_b}) & = & (e^{V})_{i,a}\,\hbar \partial_{x_a}\,. \nonumber 
\eea
\end{lemma}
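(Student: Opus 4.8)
The four identities are all instances of conjugation by a one-parameter subgroup in the Weyl algebra $\mathcal{W}_V$, so the unifying tool is the Hadamard lemma: for $X,Y \in \mathcal{W}_V$,
\[
e^{-X}\,Y\,e^{X} = \sum_{n \geq 0} \frac{(-1)^n}{n!}\,(\mathrm{ad}_X)^n(Y), \qquad \mathrm{ad}_X(Y) := [X,Y].
\]
The plan is to compute the iterated brackets $(\mathrm{ad}_X)^n(Y)$ for each of the four choices of $X$ and $Y$, using only the elementary relation $[\partial_{x_a},x_b] = \delta_{a,b}$ coming from the Weyl presentation and the Leibniz rule $[PQ,R] = P[Q,R] + [P,R]Q$. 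From these one extracts the building blocks $[\partial_{x_a}\partial_{x_b},x_c]$, $[x_a x_b,\partial_{x_c}]$, $[x_a\partial_{x_b},x_c]$ and $[x_a\partial_{x_b},\partial_{x_c}]$, each a degree-one element, after which everything reduces to bookkeeping with the symmetric matrix $u$, the matrix $V$, and the vector $t$.

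For the first two identities the series truncates after one term. Taking $X = \tfrac{\hbar}{2}u_{a,b}\partial_{x_a}\partial_{x_b}$ and $Y = t_a x_a$, a single application of the relations above gives $\mathrm{ad}_X(Y) = \hbar\,u_{a,b}t_a\partial_{x_b}$, which is a polynomial in the $\partial_x$ alone; since $X$ is itself a polynomial in the $\partial_x$ alone, $(\mathrm{ad}_X)^2(Y) = [X,\mathrm{ad}_X(Y)] = 0$, so the Hadamard series stops at first order and produces a degree-one operator. The same mechanism governs the second identity, with $X = \tfrac{1}{2\hbar}u_{a,b}x_a x_b$ and $Y = \hbar t_a\partial_{x_a}$: now $\mathrm{ad}_X(Y)$ is a polynomial in the $x$ alone and hence commutes with $X$, so again the expansion terminates. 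The scalar term appearing in each of these two formulas is a $c$-number, most transparently obtained by letting the Gaussian operator act on the generating exponential, $e^{\pm\tfrac{\hbar}{2}u_{a,b}\partial_{x_a}\partial_{x_b}}\,e^{t_c x_c} = e^{\pm\tfrac{\hbar}{2}u_{a,b}t_a t_b}\,e^{t_c x_c}$, which I would record separately and then match term by term.

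The last two identities are genuinely infinite series that resum to matrix exponentials, and this is where the real content sits. With $X = v_{a,b}x_a\partial_{x_b}$, the adjoint action preserves the finite-dimensional subspaces $V^* = \mathrm{span}(x_i)$ and $V = \mathrm{span}(\partial_{x_i})$: one computes $\mathrm{ad}_X(x_i) = v_{a,i}\,x_a$ and $\mathrm{ad}_X(\partial_{x_i}) = -\,v_{i,b}\,\partial_{x_b}$, so that in the basis of coordinates (resp. derivations) the operator $\mathrm{ad}_X$ is represented by the matrix $V$ (resp. $-V^{\mathsf T}$). The Hadamard series is then exactly the Taylor expansion of the corresponding matrix exponential, and resummation yields $e^{-X}x_i e^{X} = (e^{-V})_{a,i}\,x_a$ and $e^{-X}\hbar\partial_{x_i}e^{X} = (e^{V})_{i,a}\,\hbar\partial_{x_a}$; the general statement with a vector $t$ follows by linearity.

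I expect the main obstacle to be organizational rather than conceptual. In the matrix-exponential cases one must fix a single convention for how $V$ acts on the index — row versus column, and the resulting transpose — and keep it consistent through the resummation, since an unnoticed transpose turns $e^{V}$ into $e^{V^{\mathsf T}}$ and spoils the final index placement. In the first two cases the only nontrivial verification is that the relevant double bracket vanishes, so that the truncation is legitimate, together with correctly locating the $c$-number. None of the steps requires anything beyond the elementary Weyl relations and the finiteness of the exponential series on a finite-dimensional invariant subspace.
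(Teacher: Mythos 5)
For the last two identities your argument is correct and is, in substance, the paper's own computation in different packaging: the paper differentiates $G(z)=\exp(-zX)\,Y\exp(zX)$ and solves the linear ODE $\partial_z G=-V^{T}G$ by a matrix exponential, while you resum the adjoint series on the invariant subspaces ${\rm span}(x_i)$ and ${\rm span}(\partial_{x_i})$; your index bookkeeping ($\mathrm{ad}_X(x_i)=v_{a,i}x_a$, $\mathrm{ad}_X(\partial_{x_i})=-v_{i,b}\partial_{x_b}$) is consistent and lands on the stated matrix exponentials. Nothing is missing there.

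The gap is in your treatment of the first two identities, and it is a real one. You correctly observe that $\mathrm{ad}_X(t_ax_a)=\hbar u_{a,b}t_a\partial_{x_b}$ lies in the subalgebra generated by the $\partial_x$ alone and hence commutes with $X=\tfrac{\hbar}{2}u_{a,b}\partial_{x_a}\partial_{x_b}$, so the Hadamard series truncates at first order. But then your own formula yields $e^{-X}(t_ax_a)e^{X}=t_ax_a-[X,t_ax_a]$: a degree-one operator with \emph{no} constant term, and with a sign on the first-order term that you still need to reconcile with the display. The scalar $\tfrac{\hbar}{2}u_{a,b}t_at_b$ cannot be ``recorded separately and matched term by term'': the identity $e^{X}e^{t_cx_c}=e^{\frac{\hbar}{2}u_{a,b}t_at_b}e^{t_cx_c}$ computes the action of $e^{X}$ on a function (equivalently, the normal-ordering constant of $\exp(t_ax_a+\hbar u_{a,b}t_a\partial_{x_b})$), which is a different object from the conjugated operator $e^{-X}(t_ax_a)e^{X}$, and no term-by-term comparison transfers the constant from one to the other. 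Note that the paper produces the constant by asserting $[X,[X,Y]]=\hbar u_{a,b}t_at_b$ ``which is central'' --- an assertion directly contradicted by your (correct) remark that $X$ and $[X,Y]$ both live in the $\partial$-subalgebra, so the double bracket vanishes. You have therefore put your finger on a genuine tension between the displayed formulas and the truncated expansion, but as written your proposal claims the formula with the constant while proving the formula without it. You must either compute the double bracket honestly and accept the (constant-free) result, or identify precisely which statement (conjugation of the linear operator $t_ax_a$ versus conjugation and reordering of $e^{t_ax_a}$) the scalar belongs to. The same remark applies verbatim to the second identity with the roles of $x$ and $\hbar\partial_x$ exchanged.
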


\begin{proof} Let $X,Y$ be elements of a Lie algebra such that $[X,[X,Y]]$ is central. The Baker-Campbell-Hausdorff formula in this special case implies
$$
\exp(-X)Y\exp(X) = Y - [X,Y] + \tfrac{1}{2}[X,[X,Y]]\,.
$$
We apply it to the Lie algebra $\mathcal{W}_{V}(\leq 2)$. With $X = \tfrac{\hbar}{2} u_{a,b}\partial_{x_a}\partial_{x_b}$ and $Y = t_ax_a$, we compute $[X,Y] = \hbar u_{a,b}t_{a}\partial_{x_b}$ and $[X,[X,Y]] = \hbar u_{a,b}t_{a}t_{b}$ which is indeed central, and this leads to the first claim. With $X = \tfrac{1}{2\hbar} u_{a,b}x_ax_b$ and $Y = \hbar t_a \partial_{x_a}$, we compute $[X,Y] = -u_{a,b}t_{a}x_b$ and $[X,[X,Y]] = -u_{a,b}t_{a}t_{b}$ which is central, and this leads to the second claim.

Now, we consider $X = v_{a,b}x_a\partial_{x_b}$ and introduce the vector $Y = (Y_i)_{i \in I}$ with $Y_i = x_i$. We compute $[X,Y_i] = v_{i,a}x_a$. Commuting further with $X$ does not give a central element, so we have to use another method. Let us work over $\mathbb{K}[[z]]$ instead of $\mathbb{K}$. Setting $G(z) = \exp(-zX)Y\exp(zX)$, we compute
$$
\partial_{z}G(z) = -\exp(-zX)[X,Y]\exp(zX) = -V^{T}G(z),\qquad G(0) = Y\,,
$$
which is solved by $G(z) = \exp(-zV^{T})Y$. Specializing to $z = 1$ replaces $\tilde{\mathbb{K}}$ by $\mathbb{K}$ and gives the third claim -- if $\dim V = \infty$, one must assume convergence. The fourth claim is proved in a similar way.
\end{proof} 

For convenience, we fix a basis $(e_i)_{i \in I}$ in which $L_i = \hbar\partial_{x_i} + \cdots$. Note that, if $\tilde{L} = U^{-1}LU$ is a quantum Airy structure, it only means that $U^{-1}L_iU = M_{i,a} \hbar \partial_{x_a} + \cdots$, the new operators of the form \eqref{Liform} must be $\tilde{L}_i = M^{-1}_{i,a}U^{-1}L_{a}U$.

\vspace{0.2cm}

\noindent$\bullet$ \textbf{Change of basis.} As the notion of quantum Airy structure is basis independent, any linear isomorphism $\Phi\,:\,V \rightarrow V$ induces a isomorphism $\tilde{\Phi}\,:\,\mathcal{W}_{V} \rightarrow \mathcal{W}_{V}$ and $\tilde{\Phi}^{-1} \circ L \circ \Phi$ is a new Airy structure. If we assume $\Phi = \exp(-V^{T})$, Lemma~\ref{BCH} tells us that this transformation is realized by
$$
U = \exp\big(v_{a,b} x_a\partial_{x_b}\big)\,,
$$
where $(v_{i,j})_{i,j \in I}$ is the matrix of $\Phi\,:\,V \rightarrow V$ in the basis $(e_i)_{i \in I}$. This amounts to
\bea
\tilde{A}^i_{j,k} & = & \Phi_{i,a}\Phi_{j,b}\Phi_{j,c}\,A^{a}_{b,c}\,, \nonumber \\
\tilde{B}^i_{j,k} & = & \Phi_{i,a}\Phi_{j,b}\Phi^{-1}_{c,k}\,B^a_{b,c}\,,\nonumber \\
\tilde{C}^i_{j,k} & = & \Phi_{i,a}\Phi^{-1}_{b,j}\Phi^{-1}_{c,k}\,C^{a}_{b,c}\,, \nonumber \\
\tilde{D}^i & = & \Phi_{i,a}D^a\,. \nonumber
\eea

\vspace{0.2cm}

\noindent $\bullet$ \textbf{Pure quadratic differential operators.} If $(u_{i,j})_{i,j \in I}$ is a symmetric matrix, the action of
$$
U = \exp\big(\tfrac{\hbar}{2} u_{a,b}\partial_{x_a}\partial_{x_b}\big)
$$ 
preserves quantum Airy structures. According to Lemma~\ref{BCH}, it amounts to
\beq
\label{utran}\left\{\begin{array}{rcl} \tilde{A}^i_{j,k} & = & A^i_{j,k} \\
\tilde{B}^i_{j,k} & = & B^i_{j,k} + A^i_{j,a}u_{a,k}  \\
\tilde{C}^i_{j,k} & = & C^i_{j,k} + B^i_{a,j}u_{a,k} + B^i_{a,k}u_{a,j} + A^i_{a,b}u_{a,j}u_{b,k} \\
\tilde{D}^i & = & D^i + \tfrac{1}{2}A^i_{a,b}u_{a,b}
\end{array}\,. \right.
\eeq

\vspace{0.2cm}

\noindent $\bullet$ \textbf{Translation.} Here we assume that there exists $r > 0$ such that for any $g \geq 0$, the series $F_{g}$ has a radius of convergence $r$ -- if $\dim V = \infty$, extra care is needed to make sense of this. We take $t \in V$ inside the radius of convergence. $Z(x + t) = \exp(\hbar^{-1} t_a\partial_{x_a})Z(x)$ cannot be a partition function of a quantum Airy structure, because it has $(0,1)$ term $\tfrac{1}{\hbar}(\partial_{t_a}\mathcal{F}_{0}(t))x_{a}$ and $(0,2)$ term $\tfrac{1}{2\hbar}(\partial t_{a}\partial_{t_b} \mathcal{F}_{0}(t))x_{a}x_{b}$. If we remove these two terms, \textit{i.e.} act with
$$
U = \exp\big(-\tfrac{1}{\hbar}(\partial_{t_a}\mathcal{F}_{0}(t))x_{a} - \tfrac{1}{2\hbar}(\partial_{t_a}\partial_{t_b}\mathcal{F}_{0}(t))x_{a}x_{b}\big) \exp\big(t_a\partial_{x_a}\big)\,,
$$
then we obtain a new quantum Airy structure. By consistency it must have
$$
\tilde{A}^i_{j,k} = \partial_{t_i}\partial_{t_j}\partial_{t_k} \mathcal{F}_{0}(t),\qquad \tilde{D}^i = \partial_{t_i} \mathcal{F}_{1}(t)\,.
$$
The transformation of $B$ and $C$ is computed by changing directly $x \rightarrow x + t$
\bea
\tilde{B}^i_{j,k} & = & (M^{-1})_{i,a}\big(B^a_{j,k} + C^a_{k,b} \partial_{t_j}\partial_{t_b} \mathcal{F}_{0}(t)\big)\,, \nonumber \\
\tilde{C}^i_{j,k} & = & (M^{-1})_{i,a} C^a_{j,k}\,, \nonumber \\
\tilde{D}^i & = & (M^{-1})_{i,a}D^a\,. \nonumber
\eea
where $M_{i,j} = \delta_{i,j} - B^i_{j,a}t_{a} - C^i_{j,a}\partial_{t_a} \mathcal{F}_0(t)$.

It can also be checked directly that if $(A,B,C,D)$ is solution of the six relations \eqref{fAeq}-\eqref{Deq}-\eqref{ABCeq}, then for each  of the transformations above, $(\tilde{A},\tilde{B},\tilde{C},\tilde{D})$ is a solution of the six relations as well. 

\vspace{0.2cm}

\noindent $\bullet$ \textbf{Direct sum.} There is a last obvious operation we can do if we dispose of a family of quantum Airy structure $L^{(\alpha)}\,:\,V^{()} \rightarrow \mathcal{W}_{V^{(\alpha)}}$ indexed by $\alpha$, with partition function $Z^{(\alpha)}$: the direct sum $\bigoplus_{\alpha} L^{(\alpha)}$ is a quantum Airy structure on $\bigoplus_{\alpha} V^{(\alpha)}$, with partition function $\prod_{\alpha} Z^{(\alpha)}$.

\subsection{Bonus: from classical to quantum Airy structures}

The motivation of Kontsevich and Soibelman to introduce quantum Airy structures comes from an interpretation of \textsc{tr} as a quantization of quadratic Lagrangians in the symplectic vector space $T^*V$. 

\vspace{0.2cm}

\noindent $\bullet$ \textbf{Classical Airy structures.} For simplicity we assume $V$ finite-dimensional and $\mathbb{K} = \mathbb{R}$ or $\mathbb{C}$. Denote $(x_i)_{i \in I}$ linear coordinates on $V$, and $(y_i)_{i \in I}$ the corresponding linear coordinates on $V^*$. We equip $T^*V$ with its canonical symplectic structure $\omega$, and denote $\{\cdot,\cdot\}$ the corresponding Poisson bracket
$$
\forall i,j \in I,\qquad \{x_i,x_j\} = \{y_i,y_j\} = 0,\qquad \{y_i,x_j\} = \delta_{i,j}\,.
$$

\begin{definition}
 A classical Airy structure is the data of scalars $A^i_{j,k} = A^i_{k,j}$, $B^i_{j,k}$, $C^i_{j,k}$ and $f_{i,j}^k = -f_{i,j}^k$ such that the family of functions $(\lambda_i)_{i \in I}$ on $T^*V$ defined 
\begin{equation}
\label{lambdai} \lambda_i =  y_i - \tfrac{1}{2}A^i_{a,b}x_ax_b - B^i_{a,b}x_ay_b - \tfrac{1}{2}C^i_{a,b}y_{a}y_{b}
 \end{equation}
 satisfy $\{\lambda_i,\lambda_j\} = f_{i,j}^a \lambda_{a}$ for all $i,j \in I$.
\end{definition}
A classical Airy structure is thus characterized by $(A,B,C,f)$ satisfying the five relations \eqref{fAeq} and \eqref{ABCeq}.

\begin{lemma}
If $(A,B,C,f)$ is a classical Airy structure, then the subvariety $\Lambda = \{(x,y) \in T^*V\,\,:\,\,\forall i \in I,\,\,\lambda_i(x,y) = 0\}$ is a Lagrangian in a neighborhood of $0 \in T^*V$.
\end{lemma}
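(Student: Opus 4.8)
The plan is to verify the two features that characterize a Lagrangian submanifold of the $2d$-dimensional symplectic manifold $T^*V$ (where $d = \dim V$): that $\Lambda$ is a smooth submanifold of dimension $d$ near $0$, and that $\omega$ restricts to zero on each of its tangent spaces.

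First I would establish smoothness and the dimension count. Since $\lambda_i = y_i - \tfrac{1}{2}A^i_{a,b}x_ax_b - B^i_{a,b}x_ay_b - \tfrac{1}{2}C^i_{a,b}y_ay_b$ is $y_i$ plus purely quadratic terms, we have $\lambda_i(0) = 0$ and $d\lambda_i|_0 = dy_i$. The differentials $(dy_i)_{i \in I}$ are linearly independent, so by continuity the $(d\lambda_i)_{i \in I}$ remain independent on some neighborhood $\mathcal{U}$ of $0$. By the implicit function theorem, $\Lambda \cap \mathcal{U}$ is then a smooth submanifold of codimension $d$, hence of dimension $2d - d = d$, exactly half of $\dim T^*V$.

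The heart of the argument is showing $\Lambda$ is coisotropic, which is where the Poisson-closure $\{\lambda_i,\lambda_j\} = f^a_{i,j}\lambda_a$ enters. For $p \in \Lambda \cap \mathcal{U}$ we have $T_p\Lambda = \bigcap_i \ker(d\lambda_i|_p)$. Introducing the Hamiltonian vector fields $X_{\lambda_i}$ through $\iota_{X_{\lambda_i}}\omega = d\lambda_i$, nondegeneracy of $\omega$ identifies $T_p\Lambda$ with the symplectic orthogonal $(\mathrm{span}\{X_{\lambda_i}|_p\})^{\omega}$, and taking orthogonals again gives $(T_p\Lambda)^{\omega} = \mathrm{span}\{X_{\lambda_i}|_p\}$ (of dimension $d$, since the $d\lambda_i$, hence the $X_{\lambda_i}$, are independent). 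I would then check that each $X_{\lambda_i}|_p$ actually lies in $T_p\Lambda$: the pairing $d\lambda_j(X_{\lambda_i}) = \{\lambda_i,\lambda_j\} = f^a_{i,j}\lambda_a$ (up to the sign fixed by one's bracket convention) vanishes at $p$ precisely because $\lambda_a(p) = 0$ on $\Lambda$. Thus $(T_p\Lambda)^{\omega} \subseteq T_p\Lambda$, i.e. $\Lambda$ is coisotropic.

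Finally I would combine the two ingredients: a coisotropic submanifold of half the ambient dimension is Lagrangian, because $\dim(T_p\Lambda)^{\omega} = 2d - \dim T_p\Lambda = d = \dim T_p\Lambda$ upgrades the inclusion $(T_p\Lambda)^{\omega} \subseteq T_p\Lambda$ to an equality, so $\omega|_{T_p\Lambda} = 0$. The only delicate point I anticipate is purely bookkeeping: pinning down the sign and normalization relating $d\lambda_j(X_{\lambda_i})$ to $\{\lambda_i,\lambda_j\}$. This does not affect the conclusion, since the relevant expression $f^a_{i,j}\lambda_a$ vanishes on $\Lambda$ irrespective of the convention, and the restriction to the neighborhood $\mathcal{U}$ is exactly what guarantees the constant rank of $(d\lambda_i)$ needed throughout.
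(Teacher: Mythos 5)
Your argument is correct and follows essentially the same route as the paper: both use the linear independence of the $d\lambda_i$ near $0$ to get a half-dimensional submanifold, and the vanishing of $\{\lambda_i,\lambda_j\} = f^a_{i,j}\lambda_a$ on $\Lambda$ to show the Hamiltonian vector fields $X_{\lambda_i}$ are tangent to $\Lambda$. Your write-up is somewhat more explicit than the paper's (which compresses the final step into ``the tangent space is spanned by the $X^{(i)}$, hence Lagrangian''), in that you spell out the coisotropic inclusion $(T_p\Lambda)^{\omega}\subseteq T_p\Lambda$ and the dimension count that upgrades it to equality.
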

\begin{proof} Let $X^{(i)}$ be the hamiltonian vector field of $\lambda_i$. The function $\{\lambda_i,\lambda_j\} = f_{i,j}^a\lambda_{a}$ vanishes on $\Lambda$. This is also equal to $\{X^{(i)},X^{(j)}\} = \dd\lambda_i(X^{(j)})$, which thus vanishes on $\Lambda$. Therefore, the restriction of $X^{(j)}$ to $\Lambda$ is a tangent vector field to $\Lambda$. In a neighborhood of $0$, $(\dd\lambda_i)_{i \in I}$ are linearly independent. Therefore, $\Lambda$ restricted to this neighborhood is a manifold of dimension $\tfrac{1}{2}\dim V$, whose tangent space is spanned by $(X^{(i)})_{i \in I}$. Hence it is Lagrangian.
\end{proof}

\vspace{0.2cm}

\noindent $\bullet$ \textbf{Quantization.} The Weyl algebra $\mathcal{W}_{V}$ provides a canonical deformation quantization of the Poisson manifold $T^*V$, \textit{i.e.} an algebra over $\mathbb{K}[[\hbar]]$ with a linear projection map $p\,:\,\mathcal{W}_{V} \rightarrow \mathbb{K}[T^*V]$ such that
$$
[X,Y] = \hbar \{p(X),p(Y)\} + O(\hbar^2)\,.
$$
Quantizing a Lagrangian $\Lambda \subset T^*V$ then means constructing a $\mathcal{W}_{V}$-module $\hat{\Lambda}$, which in the limit $\hbar \rightarrow 0$ becomes the ring of functions on $\Lambda$, \textit{i.e.} $\mathbb{K}[T^*V]/\{\lambda_i = 0\}$. Here, we are only considering Lagrangians of $T^*V$ defined by quadratic equations of the form \eqref{lambdai}. Finding such Lagrangians is a non-trivial task, as one has to solve in particular the three quadratic relations \eqref{ABCeq}. But if we imagine that we have found one $\Lambda$, completing $(A,B,C)$ to a quantum Airy structure $(A,B,C,D)$ is fairly easy as it just requires solving the affine equation \eqref{Deq} for $D$ (see the remark at the end of this lecture). In fact, if we want to quantize the monomials $x_iy_j$ in the $\lambda$s, we have two choices $\hbar\,x_i\partial_{x_j}$ and $\hbar\,\partial_{x_j}x_i$, which differ by a constant times $\hbar$. The choice of $D$ can be seen as a prescription for quantization of those monomials, but it is not arbitrary, as we desire that the quantization $\lambda_i \rightsquigarrow L_i$ lifts the Poisson commutation relation to commutation relations. The resulting constraint on $D$ is \eqref{Deq}. If we find such a quantum Airy structure $L\,:\,V \rightarrow \mathcal{W}_{V}$, we then get a $\mathcal{W}_{V}$-module $\hat{\Lambda} = \mathcal{W}_{V}/L.\mathcal{W}_{V}$ which quantizes the Lagrangian $\Lambda$ in the above sense.

The space $\exp(\mathcal{E}_{V})$ of formal functions on $V$ defined in \eqref{Edef} is another $\mathcal{W}_{V}$-module. We can then consider the space ${\rm Hom}_{\mathcal{W}_{V}}\big(\mathcal{W}_{V}/L.\mathcal{W}_{V},\exp(\mathcal{E}_{V})\big)$, which is isomorphic as a vector space to the space of solutions of $L(v).Z= 0$ for all $v \in V$. The partition function of the quantum Airy structure  is an element of this vector space, which generates it under the action of $\exp(\mathcal{W}_{V}(\leq 2))$. Sometimes, such a $Z$ is also called ``wave function''.
\vspace{0.2cm}

\noindent $\bullet$ \textbf{A representation theory perspective on classical Airy structures.} We revisit the five relations in \eqref{fAeq} and \eqref{ABCeq} which characterize classical Airy structures. If $L\,:\,V \rightarrow \mathcal{W}_{V}(\leq 2)$ is an arbitrary linear map, we can study the commutator map $\rho\,:\,V \rightarrow {\rm End}(T^*V)$ obtained by considering $T^*V \subset \mathcal{W}_{V}(1)$ and
$$
\forall w \in T^*V,\qquad \rho(v)(w) = \hbar^{-1} [L(v),w]\,.
$$
We note that $\rho$ is not sensitive to the presence of constants in $L$, it only depends on the coefficients $(A,B,C)$. Let us compute its matrix in a basis in which $L_i$ has the form \eqref{Liform}, that is the basis  $x_i$ for $V^* \subset \mathcal{W}_{V}(1)$, and $\hbar\partial_{x_i}$ for $V \subset \mathcal{W}_{V}(1)$. We find
\beq
\label{rhoi}\rho_i = \left(\begin{array}{cc} -B^i & A^i \\ C^i & (B^i)^{T} \end{array}\right)\,,
\eeq
where we decomposed $T^*V = V^* \oplus V$, and $X^i$ is the matrix $(X^i_{j,k})_{j,k \in I}$. We therefore find that $\rho(v)$ belongs to  $\mathfrak{sp}(T^*V)$, the Lie algebra of symplectic matrices in $T^*V$ -- equipped with its canonical symplectic structure.
\begin{lemma}
A classical Airy structure is equivalent to the data of a Lie algebra structure on $V$ and a Lagrangian embedding $\varphi\,:\,V \rightarrow T^*V$, together with a Lie algebra representation $\rho\,:\,V \rightarrow \mathfrak{sp}(T^*V)$ such that
\beq
\label{0tors} \forall v_1,v_2 \in V,\qquad \rho(v_1)(\varphi(v_2))- \rho(v_2)(\varphi(v_2)) = \varphi([v_1,v_2])\,.
\eeq
\end{lemma}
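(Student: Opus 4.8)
The strategy is to read the single relation $\{\lambda_i,\lambda_j\} = f^a_{i,j}\lambda_a$ as the statement that one linear map is a morphism of Lie algebras into a well-chosen ambient Lie algebra, and then to split that morphism into a representation and a cocycle. First I would introduce the Lie algebra $\mathfrak{p}$ of polynomial functions on $T^*V$ of degree at most $2$ modulo constants, under the Poisson bracket. Its degree-$1$ part is $T^*V$ (the linear functions $x_i,y_i$), and it is an abelian ideal since the bracket of two linear functions is a constant; its degree-$2$ part ${\rm Sym}^2(T^*V)$ is a subalgebra isomorphic to $\mathfrak{sp}(T^*V)$ via $Q \mapsto {\rm ad}_Q = \{Q,\cdot\}$, acting on $T^*V$ in the standard way. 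Hence $\mathfrak{p}$ is the semidirect product of $\mathfrak{sp}(T^*V)$ with the abelian ideal $T^*V$. Each $\lambda_i = \ell_i - Q_i$ splits accordingly: the linear part $\ell_i = y_i$ is its $T^*V$-component, while one checks from \eqref{rhoi} that the quadratic part $-Q_i$ is carried by ${\rm ad}$ precisely to $\rho_i \in \mathfrak{sp}(T^*V)$.

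Now I would invoke the standard fact that a morphism of Lie algebras from $V$ into a semidirect product with abelian ideal is the same datum as a Lie algebra morphism $\rho\colon V \to \mathfrak{sp}(T^*V)$ (the image in the quotient) together with a $1$-cocycle $\varphi\colon V \to T^*V$ for the action $\rho$. Concretely, collecting $\{\lambda_i,\lambda_j\} = f^a_{i,j}\lambda_a$ by homogeneous degree, the constant part vanishes because $\{y_i,y_j\}=0$, the quadratic part is equivalent to $[\rho_i,\rho_j] = f^a_{i,j}\rho_a$, and the linear part is equivalent to $\rho_i(\varphi(e_j)) - \rho_j(\varphi(e_i)) = \varphi([e_i,e_j])$, which is \eqref{0tors}. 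This furnishes the dictionary: the three quadratic relations \eqref{ABCeq} are the three independent $\mathfrak{sp}$-valued components of ``$\rho$ is a representation'', and the two relations \eqref{fAeq} are the $V^*$- and $V$-valued components of the cocycle identity. For the forward implication it then remains to observe that the $\lambda_i$ are linearly independent (their linear parts $y_i$ are), so that Poisson--Jacobi forces the Jacobi identity for $f$ and makes $(V,f)$ a Lie algebra; that $\rho(v)\in\mathfrak{sp}(T^*V)$ is exactly the symmetry of $A$ and $C$; and that $\varphi\colon e_i \mapsto y_i$ embeds $V$ as the momentum Lagrangian $\{0\}\oplus V$.

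For the converse I would start from an abstract Lie algebra $(V,[\cdot,\cdot])$, a representation $\rho$ and a Lagrangian embedding $\varphi$ satisfying \eqref{0tors}, choose a Lagrangian complement to $\varphi(V)$ so as to identify $T^*V \cong V^*\oplus V$ with $\varphi(V)$ the momentum summand, and read off $A,B,C$ block by block from $\rho$ via \eqref{rhoi} — symmetry of $A$ and $C$ being automatic from $\rho\in\mathfrak{sp}(T^*V)$. Setting $\lambda_i$ as in \eqref{lambdai} and running the dictionary backwards recovers a classical Airy structure. The delicate point is precisely this normalization: the form \eqref{lambdai} is tied to the standard momentum Lagrangian, whereas $\varphi$ is a priori an arbitrary Lagrangian embedding, so the equivalence holds up to the symplectic change of coordinates putting $\varphi(V)$ in standard position (a Lagrangian complement always exists). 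Apart from that, the main labor I expect is the index bookkeeping identifying the three relations \eqref{ABCeq} with the three independent blocks of $[\rho_i,\rho_j] = f^a_{i,j}\rho_a$ in $\mathfrak{sp}(T^*V)$.
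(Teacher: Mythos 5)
Your proof is correct, and it takes a genuinely different route from the paper's. The paper simply chooses a basis of $V$, transports it through $\varphi$ so that $\lambda_i$ has the form \eqref{lambdai} and $\rho_i$ the block form \eqref{rhoi}, and asserts that writing out the representation property and \eqref{0tors} in components reproduces the five relations \eqref{fAeq}--\eqref{ABCeq}; the entire content is that direct verification. You instead organize the computation structurally: the polynomials of degree at most $2$ on $T^*V$ modulo constants form the semidirect product of $\mathfrak{sp}(T^*V)$ with the abelian ideal $T^*V$, the assignment $e_i \mapsto \lambda_i$ is a Lie algebra morphism into it, and the standard decomposition of such a morphism into a representation into the quotient plus a $1$-cocycle valued in the abelian ideal is exactly the pair $(\rho,\varphi)$, the cocycle identity being \eqref{0tors}. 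This buys several things the paper leaves implicit: it explains a priori why nothing beyond ``representation $+$ zero torsion'' can appear (the degree-$0$ component of $\{\lambda_i,\lambda_j\}$ vanishes identically because $\{y_i,y_j\}=0$); it derives the Jacobi identity for $f$ -- needed for ``Lie algebra structure on $V$'' to be legitimate data -- from Poisson--Jacobi together with the linear independence of the $\lambda_a$; and it isolates the one genuine caveat, namely that an abstract Lagrangian embedding must first be put in standard position by a linear symplectomorphism (choice of Lagrangian complement) before the normal form \eqref{lambdai} can be read off. The price is that you must still carry out the same block-by-block index bookkeeping as the paper to match the three independent components of $[\rho_i,\rho_j]=f^a_{i,j}\rho_a$ with \eqref{ABCeq} and the two components of the cocycle identity with \eqref{fAeq}, so the computation is not avoided, only better packaged.
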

\begin{proof} $\rho$ is a representation if and only if we have $\rho([v_1,v_2]) = [\rho(v_1),\rho(v_2)]$ for any $v_1,v_2 \in V$. Let us choose a basis $(e_i)_{i \in I}$ of the abstract vector space $V$, which we transport with $\varphi$ to a basis still denoted $(e_i)_{i \in I}$ of $V \subset T^*V$, in which $L(e_i)$ takes the form \eqref{rhoi}. Making explicit the condition that $\rho$ is a representation and that \eqref{0tors} holds, yields exactly the five conditions \eqref{fAeq}-\eqref{ABCeq}.
\end{proof}

By analogy with the condition for Riemannian connections acting on vector fields, we may call \eqref{0tors} a zero-torsion condition.

\vspace{0.2cm}

\noindent $\bullet$ \textbf{Quantizing Airy structures.} To lift a classical Airy structure $(A,B,C)$ to a quantum Airy structure, we need a $D \in {\rm Hom}_{\mathbb{K}}(V,\mathbb{K})$ solution of \eqref{Deq}. As it is an affine relation, if $D_0$ is a solution, then $D$ is another solution if and only if $f_{i,j}^a(D^a - D^a_0) = 0$. In other words, $D$ is another solution iff
$$
(D - D_0)([V,V])= 0\,.
$$
To conclude, let us say that when $V$ is finite-dimensional, one can check that $D_{0}^i = \tfrac{1}{2} \sum_{k \in I} B^i_{k,k}$ is always a solution of \eqref{Deq}. Indeed, in this case \eqref{Deq} is implied by taking $k = \ell$ and summing over $k \in I$ in the second relation of \eqref{ABCeq}.

\newpage
\lecture[2h]{10}{05}{2017}

\section{Two-dimensional topological quantum field theories}

\subsection{Definitions}

\noindent $\bullet$ \textbf{The category $\textbf{Bord}_{2}$.} Its objects are finite disjoint union of topological, compact, oriented, connected $1$-manifolds. If $B_1,B_2$ are two objects, the morphisms $B_1 \rightarrow B_2$ are equivalence classes of compact oriented surfaces $\Sigma$, with oriented boundaries\footnote{What we call boundary throughout the text is, more precisely speaking, a boundary component.} such that the set of boundaries of $\Sigma$ whose orientation disagree with the orientation induced from $\Sigma$ is $\partial_{-} \Sigma = B_1$, and the set of boundaries whose orientation agree is $\partial_{+}\Sigma = B_2$. Here we say that $\Sigma$ is equivalent to $\Sigma'$ if there exists a diffeomorphism $\varphi\,:\,\Sigma \rightarrow \Sigma'$ preserving the orientation of $\Sigma$ and of the boundaries. If $B_1,B_2,B_3$ are three objects, $[\Sigma_{1}]$ is a morphism from $B_1 \rightarrow B_2$ and $[\Sigma_{2}]$ is a morphism from $B_2$ to $B_3$, we choose a parametrization (\textit{i.e.} a diffeomorphism to the standard $\mathbb{S}^1$) of each connected component of $\partial_{+}\Sigma_1$ and of $\partial_{-}\Sigma_2$, and form a surface $\Sigma_{1,2}$ by glueing together $\partial_{+}\Sigma_1$ and $\partial_{-}\Sigma_2$ such that the parametrizations match. The choice of parametrization was necessary to perform a glueing, but the resulting equivalence class $[\Sigma_{1,2}]$ does not depend on this choice, neither on the choice of representatives $\Sigma_1$ and $\Sigma_2$. This allows the definition of the composition of morphisms $[\Sigma_2] \circ [\Sigma_1] = [\Sigma_{1,2}]$. The disjoint union induces a monoidal structure on $\textbf{Bord}_{2}$, for which the empty disjoint union is the unit object.

The morphisms in this category are generated under composition by basic surfaces of genus $0$, with $n \leq 3$ boundaries -- which can be either negatively or positively oriented. Note that the cylinder with one negatively oriented boundary, and one positively oriented boundary, represents the identity morphism of the object $\mathbb{S}^1$ appearing at the boundaries.

\vspace{0.2cm}

\noindent $\bullet$ \textbf{The category $\textbf{Vect}_{\mathbb{K}}$.} Its objects are finite-dimensional $\mathbb{K}$-vector spaces, and its morphisms are $\mathbb{K}$-linear maps. The tensor product induces a monoidal structure on $\textbf{Vect}_{\mathbb{K}}$, for which $\mathbb{K}$ is the unit object.

\vspace{0.2cm}

\noindent $\bullet$ \textbf{2d TQFTs.} A two-dimensional topological quantum field theory is a monoidal functor $\mathcal{F}\,:\,\textbf{Bord}_{2} \rightarrow \textbf{Vect}_{\mathbb{K}}$. This axiomatization is due to Atiyah \cite{At2}, and can be done in any dimension $d$, using a category $\textbf{Bord}_{d}$ whose objects are $(d - 1)$-manifolds and morphisms are equivalence classes of $d$-manifolds.

\vspace{0.2cm}

\noindent $\bullet$ \textbf{Frobenius algebras.} A Frobenius algebra $\mathcal{A}$ is a finite-dimensional $\mathbb{K}$-vector space equipped with the structure of a commutative associative algebra $\mu\,:\,\mathcal{A} \otimes \mathcal{A} \rightarrow \mathcal{A}$ with a unit $\mathbf{1}\,:\,\mathbb{K} \rightarrow \mathcal{A}$, and a non-degenerate symmetric pairing $b\,:\,\mathcal{A} \otimes \mathcal{A} \rightarrow \mathbb{K}$ such that the product is invariant
$$
\forall a_1,a_2,a_3 \in \mathcal{A},\qquad b(\mu(a_1 \otimes a_2),a_3) = b(a_1,\mu(a_2 \otimes a_3))\,.
$$

\subsection{Correspondence with Frobenius algebras}

An important idea in the mathematical approach to quantum field theories is that manifolds and their glueings give rise to algebraic structures, where algebraic relations express diffeomorphisms between manifolds. The following result has been folklore knowledge, and it seems to have been written in detail for the first time in \cite{Abrams}. The negatively oriented boundaries are drawn to the left, and the positively oriented ones to the right.

\begin{theorem}
A \textsc{2d tqft} determines a Frobenius algebra structure on $\mathcal{F}(\mathbb{S}^1) = \mathcal{A}$ \textit{via} the formulas

\vspace{-0.2cm}

\begin{figure}[h!]
\begin{center}
\includegraphics[width=0.4\textwidth]{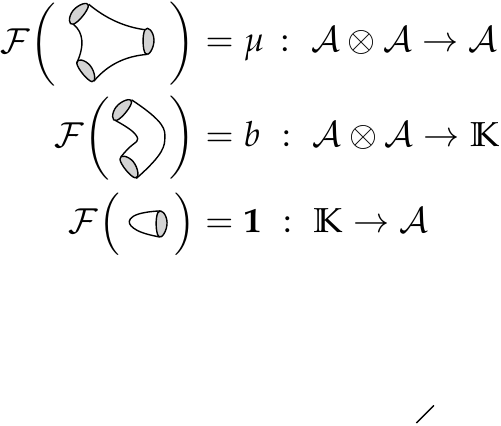}
\end{center}
\end{figure}

\vspace{-2cm}

\noindent Conversely, for any Frobenius algebras, these formulas determine a unique \textsc{2d tqft}.
\end{theorem}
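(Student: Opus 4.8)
The plan is to prove the two directions separately, reading the Frobenius structure off the generating surfaces in one direction and checking well-definedness against a presentation of $\textbf{Bord}_{2}$ in the other. For the forward direction, first I would set $\mathcal{A} = \mathcal{F}(\mathbb{S}^1)$, which is finite-dimensional because $\mathcal{F}$ takes values in $\textbf{Vect}_{\mathbb{K}}$. Monoidality gives $\mathcal{F}(\emptyset) = \mathbb{K}$ and $\mathcal{F}(\mathbb{S}^1 \sqcup \cdots \sqcup \mathbb{S}^1) = \mathcal{A}^{\otimes n}$, so every basic surface is sent to a linear map between tensor powers of $\mathcal{A}$. I would then read off the structure maps from the generators: the pair of pants with two negative and one positive boundary gives $\mu \colon \mathcal{A} \otimes \mathcal{A} \to \mathcal{A}$, the disk with one positive boundary gives the unit $\mathbf{1}\colon \mathbb{K} \to \mathcal{A}$, and the disk with one negative boundary gives a counit $\varepsilon\colon \mathcal{A} \to \mathbb{K}$, from which I set $b = \varepsilon \circ \mu$.

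Next I would verify the Frobenius axioms, each of which is the image under $\mathcal{F}$ of a diffeomorphism of surfaces. Associativity $\mu \circ (\mu \otimes \mathrm{id}) = \mu \circ (\mathrm{id} \otimes \mu)$ expresses that the two ways of gluing two pairs of pants yield the same connected genus-$0$ surface (three negative, one positive boundary) up to diffeomorphism; commutativity $\mu \circ \tau = \mu$, with $\tau$ the symmetry, comes from the diffeomorphism of the pair of pants exchanging its two incoming boundaries; and the unit axiom $\mu \circ (\mathbf{1} \otimes \mathrm{id}) = \mathrm{id}$ is the diffeomorphism capping one leg into a cylinder, i.e. the identity of $\mathbb{S}^1$. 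Invariance of the pairing is then formal, since $b(\mu(a_1,a_2),a_3) = \varepsilon(\mu(\mu(a_1,a_2),a_3)) = b(a_1,\mu(a_2,a_3))$ by associativity. The one genuinely geometric point is non-degeneracy: I would introduce the copairing $\mathbb{K} \to \mathcal{A} \otimes \mathcal{A}$ coming from the cylinder read as a $0 \to 2$ cobordism and check the snake (zig-zag) identities, which correspond to straightening an $S$-shaped tube into a cylinder; together with $\dim \mathcal{A} < \infty$ these force $a \mapsto b(a,-)$ to be an isomorphism $\mathcal{A} \to \mathcal{A}^*$.

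For the converse, given a Frobenius algebra I would define $\mathcal{F}$ on objects by $\mathbb{S}^1 \mapsto \mathcal{A}$ and on the basic generators by $\mu$, $\mathbf{1}$, $\varepsilon$, the copairing, and the comultiplication $\delta$ obtained by transposing $\mu$ through $b$, extending by $\sqcup \mapsto \otimes$ and $\circ \mapsto \circ$. By the generation statement already recorded in the excerpt, this fixes at most one candidate, so uniqueness is automatic once existence is settled; the real content is \emph{well-definedness}, namely that whenever two composites of generators represent diffeomorphic surfaces the associated linear maps coincide. I would reduce this to a presentation of $\textbf{Bord}_{2}$ by the basic generators modulo exactly the commutative/cocommutative Frobenius relations: associativity and coassociativity, the (co)unit laws, (co)commutativity, and the Frobenius identity $(\mu \otimes \mathrm{id}) \circ (\mathrm{id} \otimes \delta) = \delta \circ \mu = (\mathrm{id} \otimes \mu) \circ (\delta \otimes \mathrm{id})$.

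The hard part will be establishing this presentation. My approach is Morse-theoretic: any cobordism carries a Morse function whose level sets slice it into the basic pieces (caps, pairs of pants, cylinders), so every morphism is a composite of generators; and any two such decompositions of a fixed surface are connected by a finite chain of elementary moves — rearrangements and cancellations of critical points, i.e. handle slides — each realized by one of the Frobenius relations above. The delicate point is to confirm that these moves exhaust all relations, which is equivalent to the classification of compact oriented surfaces with boundary by genus and boundary data refined to the level of decompositions. Granting this, the assignment respects composition, disjoint union, and the symmetry, hence defines a monoidal functor, completing both directions and matching the two constructions as mutually inverse.
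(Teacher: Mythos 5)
Your proposal is correct and follows essentially the same route as the paper's sketch: read the structure maps off the generating surfaces, derive the Frobenius axioms from diffeomorphisms (with non-degeneracy coming from the copairing $\mathcal{F}(P_{0,2})$ and the snake identity, exactly as in the paper), and for the converse reduce well-definedness to the fact that all gluing relations among the generators of $\textbf{Bord}_{2}$ are consequences of the commutative Frobenius relations. The paper simply asserts this last presentation fact (deferring to \cite{Abrams}), whereas you correctly identify it as the hard step and sketch the standard Morse-theoretic argument for it.
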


\noindent \textit{Sketch of proof.} Let $P_{m,n}$ a morphism in $\textbf{Bord}_{2}$ which corresponds to a sphere with $m$ negatively oriented boundaries and $n$ positively oriented boundaries, and for which the boundary are parametrized by the standard $\mathbb{S}^1$. We can always find an automorphism of $P_{2,1}$ which exchanges the two negatively oriented boundaries, therefore $\mu$ is commutative. If $\{i,j\} \subset \{1,2,3\}$, let $\mu_{i,j}$ be the map $\mu$ acting on the $i$-th and $j$-th copy of $\mathcal{A}$ in $\mathcal{A}^{\otimes 3}$. Associativity is the identity $\mu \circ \mu_{1,2} = \mu \circ \mu_{2,3}$, and it comes from the diffeomorphism depicted below.

\begin{figure}[h!]
\begin{center}
\includegraphics[width=0.35\textwidth]{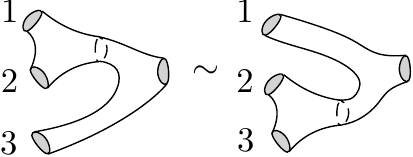}
\end{center}
\end{figure}

By writing diffeomorphisms between surfaces glued in different ways, one can prove likewise that $\mathbf{1}$ is a unit and $b$ is symmetric. The only non-obvious property is the non-degeneracy of $b$. Let $\Delta = \mathcal{F}(P_{0,2})\,:\,\mathbb{K} \rightarrow \mathcal{A} \otimes \mathcal{A}$. The diffeomorphism between the cylinder $P_{1,1}$ and the glued surface in the drawing implies

\vspace{-0.2cm}

\begin{figure}[h!]
\begin{center}
\includegraphics[width=0.8\textwidth]{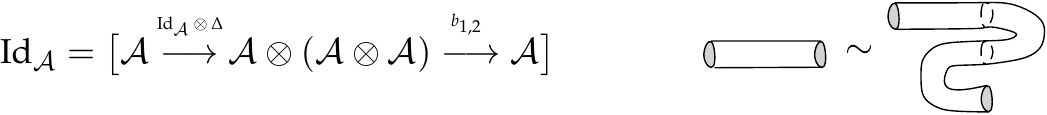}
\end{center}
\end{figure}

\vspace{-0.5cm}

Let us choose a basis $(e_i)_{i \in I}$ of $\mathcal{A}$, and write $\Delta = \sum_{i,j \in I} u_{i,j} e_i \otimes e_j$. Computing the two sides in the above equality in this basis yield
$$
\delta_{i,j} = \sum_{k \in I} b(e_i,e_k)\,u_{k,j}\,.
$$
Therefore, the matrix $\big(b(e_i,e_j)\big)_{i,j \in I}$ is invertible, with inverse $(u_{j,i})_{i,j \in I}$, hence $b$ is non-degenerate.

Conversely, we put $\mathcal{F}(B) = \mathcal{A}^{\pi_0(B)}$ for any object $B$ in $\textbf{Bord}_{2}$. We will describe in the next paragraph the multilinear maps $\mathcal{F}([\Sigma])\,:\,\mathcal{A}^{\pi_0(\partial_{-}\Sigma)} \rightarrow \mathcal{A}^{\pi_0(\partial_{+}\Sigma)}$. However, to specify the \textsc{tqft}, it is enough to assign multilinear maps to the surfaces which generate $\textbf{Bord}_{2}$ for the composition, and satisfying all glueing/diffeomorphism relations between those surfaces. $P_{n,m}$ with $n + m = 3$ is such a generating set of morphisms, and one can show that all glueing/diffeomorphism relations  between these surfaces are already used in the characterization of a Frobenius algebra. Then, as the non-degeneracy of the pairing gives a natural isomorphism between $\mathcal{A}$ and $\mathcal{A}^*$, the values $\mathcal{F}(P_{0,1}) = \mathbf{1}$, $\mathcal{F}(P_{2,0}) = b$ and $\mathcal{F}(P_{2,1}) = \mu$ specify the \textsc{2d tqft}.  \qed

\subsection{Computing the \textsc{tqft} amplitudes}
\label{TQFTa}
\noindent $\bullet$ \textbf{From the TQFT rules.} Let $\mathcal{F}\,:\,\textbf{Bord}_{2} \rightarrow \mathbf{Vect}_{\mathbb{K}}$ be a \textsc{2d tqft}. For $2g - 2 + n \geq 1$, let $\Sigma_{g,n}$ be a compact, oriented, topological surface of genus $g$ with $n$ negatively oriented boundaries. We describe how to compute the \textsc{tqft} amplitude $\mathcal{F}(\Sigma_{g,n}) \in (\mathcal{A}^*)^{\otimes n}$. We are going to use repeatedly the canonical identification between $\mathcal{A}$ and $\mathcal{A}^*$ given by the pairing. In particular, we have an element $\tilde{\mu} \in (\mathcal{A}^{*})^{\otimes 3}$ which represents the product in $\mathcal{A}$, and $\tilde{b}\,:\,\mathcal{A}^* \otimes \mathcal{A}^* \rightarrow \mathbb{K}$ which represents the pairing. The axioms of a Frobenius algebra imply that $\tilde{\mu}$ and $b$ are fully symmetric under permutation of the factors of $\mathcal{A}^*$. Let us choose a pair of pants decomposition $\mathbf{P} = (P_{v})_{v \in V}$ of $\Sigma_{g,n}$, and let $E$ the set of $\{v,v'\}$ such that $P_{v}$ and $P_{v'}$ are glued together in this decomposition. We form
$$ 
\bigg(\bigotimes_{\{v,v'\} \in E} \tilde{b}_{v,v'} \bigg)\circ \bigg(\bigotimes_{v \in V} \tilde{\mu}\bigg)
$$ 
where we have applied to the product of $\tilde{\mu}$ the pairing $\tilde{b}\,:\,\tilde{\mathcal{A}} \otimes \mathcal{A}^* \rightarrow \mathbb{K}$ acting on a copy of $\mathcal{A}^*$ in the $v$-th factor and a copy of $\mathcal{A}^*$ in the $v'$-th factor, for each $\{v,v'\} \in E$. Which copy of $\mathcal{A}^*$ in the $v$-th and $v'$-th factor and which order one chooses for $v$ and $v'$ is irrelevant as $\tilde{\mu}$ and $\tilde{b}$ are fully symmetric. There remains only $n$ factors of $\mathcal{A}^*$ which have not been paired, and the result belonging to $(\mathcal{A}^*)^{\otimes n}$ is the desired $\mathcal{F}(\Sigma_{g,n})$.

If $\mathbf{P}$ and $\mathbf{P}'$ are two pairs of pants decomposition, there exists an orientation preserving diffeomorphism of $\Sigma_{g,n}$ which takes $\mathbf{P}$ to $\mathbf{P}'$. So, by definition of a \textsc{2d tqft}, the result of this procedure does not depend on the choice of a pair of pants decomposition.

\vspace{0.2cm}

\noindent $\bullet$ \textbf{From quantum Airy structures.} Let $\mathcal{A}$ be a Frobenius algebra. We denote $\psi\,:\,\mathcal{A} \rightarrow \mathbb{K}$ the linear form defined by $\psi(x) = b(\mathbf{1},x)$. Let $(e_i)_{i \in I}$ and $(e_i^*)_{i \in I}$ be basis of $\mathcal{A}$ such that $b(e_i,e_j^*) = \delta_{i,j}$, and $H = \sum_{i \in I} e_i \cdot e_i^*$.

\begin{lemma}
The following data
\bea
A^i_{j,k} & = & \psi(e_i^*\cdot e_j^* \cdot e_k^*)\,, \nonumber \\
B^i_{j,k} & = & \psi(e_i^*\cdot e_j^*\cdot e_k)\,,  \nonumber \\
C^i_{j,k} & = & \psi(e_i^*\cdot e_j \cdot e_k)\,, \nonumber \\
D^i & = & \psi(e_i^*\cdot H)\,, \nonumber
\eea
defines a quantum Airy structure on the abelian Lie algebra $\mathcal{A}$. Besides, for $2g - 2 + n > 0$ the amplitudes of the \textsc{2d tqft} associated to $\mathcal{A}$ are computed by the \textsc{tr} amplitudes of this quantum Airy structure through
\beq
\label{Cc}\mathcal{F}(\Sigma_{g,n}) = |\mathbb{G}_{g,n}|F_{g,n}\,.
\eeq
\end{lemma}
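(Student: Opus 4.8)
Here is how I would organize the argument. The whole proof runs on one bookkeeping dictionary, which I would extract first. From the invariance of $b$ one gets $\psi(xy)=b(\mathbf 1\cdot x,y)=b(x,y)$, and the dual-basis completeness $z=\sum_a b(z,e_a^*)e_a=\sum_a b(z,e_a)e_a^*$ then yields the \emph{splitting relations}
\[
\begin{gathered}
e_i^*e_j^* = \sum_a B^i_{j,a}\,e_a^* = \sum_a A^i_{j,a}\,e_a, \qquad e_je_k = \sum_a C^a_{j,k}\,e_a, \\
e_i^*e_j = \sum_a C^i_{j,a}\,e_a^* = \sum_a B^i_{a,j}\,e_a .
\end{gathered}
\]
In words: contracting a structure constant against $e_a$ or $e_a^*$ and summing over $a$ reconstructs a product in $\mathcal{A}$, which is the algebraic shadow of gluing a tube.

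For the quantum Airy structure claim, note that $\mathcal{A}$ is abelian so \eqref{Lie} must hold with $f=0$; then \eqref{fAeq} only asks that $A$ be symmetric (it is fully symmetric, since $\psi(e_i^*e_j^*e_k^*)$ is, by commutativity) and that $B^i_{j,k}=B^j_{i,k}$ (again commutativity). For the remaining four relations the strategy is to feed the splitting relations into each summand and watch the whole left-hand side collapse onto a single symmetric expression. Using associativity to read the contracted sums as honest four-fold products paired by $\psi$, every term of the first relation of \eqref{ABCeq} becomes $\psi(e_i^*e_j^*e_k^*e_\ell^*)$, every term of the second becomes $\psi(e_i^*e_j^*e_k^*e_\ell)$, every term of the third becomes $\psi(e_i^*e_j^*e_ke_\ell)$, and the two terms of \eqref{Deq} both become $\psi(e_i^*e_j^*H)$, where I use $\sum_a e_a^*e_a=H$. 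Each closed-form value is manifestly invariant under $i\leftrightarrow j$, so all six constraints hold and $(A,B,C,D)$ is a quantum Airy structure.

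For the amplitude identity I would unwind Lemma~\ref{TRgraph}. The claim reduces to showing that for a \emph{fixed} $\Gamma\in\mathbb{G}_{g,n}$ with leaves coloured $i_1,\ldots,i_n$, the internal sum $\sum_{\gamma}w(\Gamma,\gamma)$ equals the single amplitude $\mathcal{F}(\Sigma_{g,n})[i_1,\ldots,i_n]$, independently of $\Gamma$. First, the underlying trivalent graph $G$, with $b_1(G)=g$ and $n$ leaves, is the dual graph of a pair-of-pants decomposition of $\Sigma_{g,n}$: gluing one pair of pants per vertex and one tube per internal edge builds a surface of Euler characteristic $-\#\{\text{vertices}\}=2-2g-n$ with $n$ boundaries. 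Second, the three vertex weights $A,B,C$ are literally the pair-of-pants amplitude $\tilde\mu$ with its legs placed in starred or unstarred slots, while the splitting relations show that summing an internal index $a$ against one $e_a$ and one $e_a^*$ leg is exactly contraction with the copairing $\sum_a e_a\otimes e_a^*$, i.e. the cylinder $\tilde b$ that glues two pants. Since $\tilde\mu$ and the copairing are fully symmetric, the contracted number depends only on the gluing pattern $G$, not on the auxiliary rooted tree or the induced distribution of stars, and it coincides with the recipe of Section~\ref{TQFTa} for $\mathcal{F}(\Sigma_{g,n})$.

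Finally I would invoke the defining invariance of the \textsc{tqft}: any two pair-of-pants decompositions of $\Sigma_{g,n}$ are related by an orientation-preserving diffeomorphism, so all $\Gamma\in\mathbb{G}_{g,n}$ yield the common value $\mathcal{F}(\Sigma_{g,n})[i_1,\ldots,i_n]$. Summing Lemma~\ref{TRgraph} over $\Gamma$ then pulls this value out and leaves the combinatorial factor $\sum_{\Gamma}|{\rm aut}(\Gamma)|^{-1}=|\mathbb{G}_{g,n}|$, which is exactly the proportionality \eqref{Cc} relating $F_{g,n}$ and the \textsc{tqft} amplitude. I expect the genuine obstacle to be this last part — precisely, verifying that the Feynman contraction attached to a graph is $\Gamma$-independent and equal to the \textsc{tqft} amplitude; the quantum Airy structure verification, though it requires careful index gymnastics, is forced once the splitting relations are in hand, and the topological input enters only through the diffeomorphism invariance just quoted.
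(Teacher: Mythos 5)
Your proof is correct. For the identity $\mathcal{F}(\Sigma_{g,n}) = |\mathbb{G}_{g,n}|\,F_{g,n}$ you follow essentially the same route as the paper: associate to each $\Gamma$ the dual pair-of-pants decomposition, identify the vertex weights with $\tilde{\mu}$ and the internal-index sums with contraction against the copairing $\sum_{a} e_a \otimes e_a^*$, and invoke diffeomorphism invariance of the \textsc{tqft} to pull the common value out of the sum over $\Gamma$, leaving $\sum_{\Gamma} |{\rm aut}(\Gamma)|^{-1} = |\mathbb{G}_{g,n}|$. (One small omission there: loops are not colored and do not enter $w$ as a $C$-vertex followed by an edge contraction; they enter through the base case $w(\Gamma_{1,1}) = D^i = \psi(e_i^*\cdot H)$. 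This is consistent with your copairing picture, since contracting the two free legs of $\tilde{\mu}(e_i^*\otimes - \otimes -)$ with $b^{\dagger}$ gives exactly $\psi(e_i^* H)$, but it is a separate case in the recursive definition of the weights and deserves a sentence, as in the paper.) Where you genuinely diverge is in the verification of the six constraints. The paper passes to an orthonormal basis -- possibly after extending the field -- so that $e_i^* = e_i$ and $A = B = C$ is a single fully symmetric tensor; the three quadratic relations then coincide, and the required $i \leftrightarrow j$ symmetry follows from the observation that the swap permutes the I-, H- and X-graphs among themselves. You instead keep a general dual pair of bases, derive the splitting relations from $\psi(xy) = b(x,y)$ and completeness of the dual bases, and show by associativity that every term of each relation collapses to one and the same four-point function -- $\psi(e_i^* e_j^* e_k^* e_\ell^*)$, $\psi(e_i^* e_j^* e_k^* e_\ell)$, $\psi(e_i^* e_j^* e_k e_\ell)$, or $\psi(e_i^* e_j^* H)$ -- which is manifestly symmetric in $i$ and $j$. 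I checked the splitting relations and the collapses; they are correct. Your variant buys two things: no field extension is needed, and you obtain closed-form values for both sides of each constraint rather than an abstract symmetry argument; the paper's variant buys a one-line graphical proof once the orthonormal reduction is granted.
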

In other words, the tensors $(A,B,C)$ in this quantum Airy structure all represent the product in $\mathcal{A}$.

\begin{proof} For simplicity, let us assume that $e_i$ is an orthonormal basis. Orthogonal basis always exist, but to obtain an orthonormal basis we may have to replace $\mathbb{K}$ by an extension $\tilde{\mathbb{K}}$. The final result however, is insensitive to this detail. In this case, we have $e_i^* = e_i$ and 
$$
A^i_{j,k} = B^i_{j,k} = C^i_{j,k}\,,
$$
which is fully symmetric. The three relations \eqref{ABCeq} are then identical, and we just need to check one of them. In the IHX representation, each vertex is an $A$ and the ordering of the edges at the vertices is irrelevant. When we permute the indices $i$ and $j$, the H-graph is sent to itself, while the I- and X-graph are swapped. Therefore, \eqref{ABCeq} is satisfied in a rather simple way. Then, by full symmetry of $B$, we have $f_{i,j}^{k} = 0$: the underlying Lie algebra is abelian. As a consequence, the first term in \eqref{Deq} involving $D$ is automatically symmetric, and as $A^i_{j,k} = C^i_{j,k}$ \eqref{Deq} is satisfied for any $D \in {\rm Hom}_{\mathbb{K}}(\mathcal{A},\mathbb{K})$.

For $2g - 2 + n = 1$, there is a single graph in $\mathbb{G}_{g,n}$, and it has trivial automorphism group. By construction, $F_{0,3} = A = \mathcal{F}(\Sigma_{0,3})$ represents the product, and we choose $D^i = \mathcal{F}(\Sigma_{1,1})(e_i)$ so that \eqref{Cc} holds for $(g,n) = (1,1)$. It is more explicitly computed by decomposing the torus with one boundary as a self-glued pair of pants.

\vspace{-0.2cm}

\begin{figure}[h!]
\begin{center}
\includegraphics[width=0.2\textwidth]{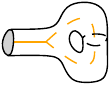}
\end{center}
\end{figure}

\vspace{-0.4cm}

\noindent The rules given above then give $\mathcal{F}(\Sigma_{1,1})(e_i) = \psi(e_i H)$ with $H = \sum_{j \in I} e_j\cdot e_j$.

Now consider $2g - 2 + n \geq 2$. To any $\Gamma \in \mathbb{G}_{g,n}$, we can associate a pair of pants decomposition $\mathbf{P}_{\Gamma}$ of $\Sigma_{g,n}$ as in the picture below. A coloring of $\Gamma$ is the assignment of an element of the basis to each edge/leaf, and since $A$, $B$ and $C$ are equal when written in the orthonormal basis, the contribution of each trivalent vertex $v$ to the weight is always the product of all the vectors carried by edges incident to that vertex -- when represented as an element of $\mathcal{A}^{\otimes 3} \rightarrow \mathbb{K}$. Moreover, summing over the colors of the edges amounts to perform a pairing because
$$
\forall x,y \in \mathcal{A},\qquad \sum_{a} \psi(x e_a)\psi(e_a y) = \psi(xy)\,,
$$
when we use the orthonormal basis $(e_i)_{i \in I}$. And, when $\Gamma$ has a loop, there is a self-glued pair of pants, which we can consider as a torus with one boundary, having by definition a contribution $D^i$ if the loop is color by $i$. 

\begin{figure}[h!]
\begin{center}
\includegraphics[width=0.4\textwidth]{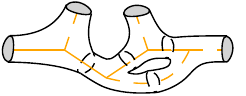}
\end{center}
\end{figure}

\vspace{-0.3cm}

\noindent Comparing with the \textsc{tqft} rules, we find that the sum over all colorings of $\Gamma$ which assigns the colors $(i_1,\ldots,i_n)$ to the $n$-leaves coincide with the \textsc{tqft} amplitude computed from the pair of pants decomposition $\mathbf{P}_{\Gamma}$, evaluated at $e_{i_1}\otimes \cdots \otimes e_{i_n}$. Since this amplitude does not depend on the pair of pants decomposition, the number of graphs (weighted by automorphisms) factorizes in the computation of the \textsc{tr} amplitude
\bea
F_{g,n} & = & \sum_{\Gamma \in \mathbb{G}_{g,n}} \sum_{\substack{\gamma \in I^{E'(\Gamma)} \\ \gamma(\ell_j) = i_{j}}} \frac{w(\Gamma,\gamma)}{|{\rm aut}(\Gamma)|} \nonumber \\
& = & \sum_{\Gamma \in \mathbb{G}_{g,n}} \frac{\mathcal{F}(\Sigma_{g,n})(e_{i_1} \otimes \cdots \otimes e_{i_n})}{|{\rm aut}(\Gamma)|}\,= |\mathbb{G}_{g,n}|\,\mathcal{F}(\Sigma_{g,n})(e_{i_1}\otimes \cdots \otimes e_{i_n})\,. \nonumber
\eea
\end{proof}

\section{Intersection theory on the moduli space of curves}

\subsection{A quantum Airy structure on the loop space}

\noindent $\bullet$ \textbf{Definition.} We describe a quantum Airy structure on $V = z\mathbb{C}[[z^2]]$. We take as basis of $V$
$$
\xi_k^* = \frac{z^{2k + 1}}{(2k + 1)!!}\,,
$$ 
indexed by $k \in \mathbb{N}$. We also introduce
\beq
\label{xidef} \xi_k = \frac{(2k + 1)!!}{z^{2k + 2}}\,\dd z\,,
\eeq
and take $\theta \in z^{-2}\mathbb{C}[[z^2]].(\dd z)^{-1}$ which we decompose as
$$
\theta = \sum_{r \geq -1} \theta_{r}\,z^{2r}\,(\dd z)^{-1}\,,
$$
with the convention $\theta_{r} = 0$ for $r \leq -2$.
\begin{theorem}
\label{thAiry} The following data
\beq
\label{ABCloop} \begin{array}{rclcl}
A^i_{j,k} & = & \Res_{0}\big(\xi_i^*\cdot \dd \xi_j^* \cdot \dd\xi_k^*\cdot \theta\big) & = & \delta_{i,j,k,0}\theta_{-1}\,,  \\
B^i_{j,k} & = & \Res_{0}\big(\xi_i^*\cdot \dd\xi_j^* \cdot \xi_k \cdot \theta\big) & = & \tfrac{(2k + 1)!!}{(2i + 1)!!(2j + 1)!!}\,(2j + 1)\,\theta_{k - i - j}\,, \\
C^i_{j,k} & = & \Res_{0}\big(\xi_i^*\cdot \xi_j \cdot \xi_k \cdot \theta\big) & = &\tfrac{(2k + 1)!!(2j + 1)!!}{(2i + 1)!!}\,\theta_{k + j + 1 - i}\,, \\
D^i & = & \frac{\theta_{-1}}{24}\delta_{i,0} + \frac{\theta_0}{8}\delta_{i,1}\,, &&
\end{array}
\eeq
defines a quantum Airy structure on $V \cong \bigoplus_{k \geq 0} \mathbb{C}.\xi_k^*$. If we denote $r_0 = \min\{r \geq -1\,\,:\,\,\theta_{r} \neq 0\}$, the Lie algebra structure it induces on $V$ is isomorphic to a sub-Lie algebra spanned by $(\mathcal{L}_{k})_{k \geq r_0}$ of the Lie algebra $\mathfrak{Witt}_+$, which is spanned by $(\mathcal{L}_{k})_{k \in \geq 1}$ with commutation relations
\beq
\label{comZ} \forall k,\ell \in \mathbb{Z},\qquad [\mathcal{L}_{k},\mathcal{L}_{\ell}] = (k - \ell)\mathcal{L}_{k + \ell}\,.
\eeq
\end{theorem}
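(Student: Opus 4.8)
The plan is to split the statement into the algebraic verification that $(A,B,C,D)$ from \eqref{ABCloop} satisfies the six constraints \eqref{fAeq}--\eqref{Deq}--\eqref{ABCeq}, and the identification of the induced Lie algebra. I would first dispose of the easy structural facts: $A^i_{j,k}=\delta_{i,j,k,0}\theta_{-1}$ is manifestly fully symmetric, so the second part of \eqref{fAeq} holds, and then compute the structure constants from the first part, $f^{k}_{i,j}=B^i_{j,k}-B^j_{i,k}=\frac{(2k+1)!!}{(2i+1)!!(2j+1)!!}\,2(j-i)\,\theta_{k-i-j}$, using the explicit value of $B$. Translating back through $\xi_k^{*}=z^{2k+1}/(2k+1)!!$, this says that the induced bracket on $V$ is $[\xi_i^{*},\xi_j^{*}]=\frac{2(j-i)}{(2i+1)!!(2j+1)!!}\,z^{2i+2j+1}\vartheta$, where I write $\theta=\vartheta\,(\mathrm{d}z)^{-1}$ with $\vartheta=\sum_{r\ge -1}\theta_r z^{2r}$.

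The cleanest route to the Lie-algebra claim (the emphasized conclusion) is to realize this bracket inside formal vector fields. I would define $J\colon V\to \mathbb{C}((z))\partial_z$ by $J(\xi_i^{*})=\xi_i^{*}\vartheta\,\partial_z=\frac{z^{2i+1}}{(2i+1)!!}\vartheta\,\partial_z$, the vector field obtained by contracting $\xi_i^{*}$ with $\theta$. A one-line Wronskian computation, $[\varphi\partial_z,\psi\partial_z]=(\varphi\psi'-\psi\varphi')\partial_z$ with $\varphi=\xi_i^{*}\vartheta$, $\psi=\xi_j^{*}\vartheta$, gives $[J(\xi_i^{*}),J(\xi_j^{*})]=\vartheta^2\big(\xi_i^{*}(\xi_j^{*})'-\xi_j^{*}(\xi_i^{*})'\big)\partial_z$, and since $\xi_i^{*}(\xi_j^{*})'-\xi_j^{*}(\xi_i^{*})'=\frac{2(j-i)}{(2i+1)!!(2j+1)!!}z^{2i+2j+1}$ this equals $J([\xi_i^{*},\xi_j^{*}])$. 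Hence $J$ is an injective homomorphism (for $\vartheta\neq 0$) onto $\mathfrak g=z\vartheta\,\mathbb{C}[z^2]\,\partial_z$. Passing to the coordinate $x=z^2$ turns these odd vector fields into $\tilde W_i=x^{i+1}\vartheta(x)\partial_x$, whose leading term is $\theta_{r_0}x^{i+r_0+1}\partial_x$ with $\theta_{r_0}\neq 0$. As $i$ runs over $\mathbb{N}$ the leading orders $i+r_0+1$ fill all integers $\ge r_0+1$, so in the filtration of $\mathbb{C}((x))\partial_x$ by order of vanishing at $x=0$ the $\tilde W_i$ form a topological basis of $x^{r_0+1}\mathbb{C}[[x]]\partial_x$. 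This completed Lie algebra is exactly the span of $\mathcal L_k=x^{k+1}\partial_x$, $k\ge r_0$, which satisfy $[\mathcal L_k,\mathcal L_\ell]=(\ell-k)\mathcal L_{k+\ell}$; up to the overall sign $\mathcal L_k\mapsto-\mathcal L_k$ this is the Witt presentation \eqref{comZ}, proving the identification with the subalgebra $\langle \mathcal L_k\rangle_{k\ge r_0}$.

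For the three quadratic relations \eqref{ABCeq} and the affine relation \eqref{Deq} I would use the representation-theoretic characterization of classical Airy structures from the earlier Lemma. All of $A,B,C$ are residues $\Res_0(\xi_i^{*}\theta\cdot\Phi_j\cdot\Phi_k)$ with $\Phi\in\{\mathrm d\xi^{*},\xi\}$, i.e. the vector field $J(\xi_i^{*})$ paired, via the residue, with the function/form slots; the off-diagonal blocks $A^i$ and $C^i$ in \eqref{rhoi} appear precisely because $\theta$ identifies $1$-forms with functions. The five relations \eqref{fAeq}--\eqref{ABCeq} are then the assertion that $\rho(e_i)$ is a representation of $\mathfrak g$ by symplectic endomorphisms of the residue pairing on $T^{*}V$ satisfying the zero-torsion condition \eqref{0tors}: representation-ness reduces to the homomorphism property of $J$ established above (Jacobi on $\mathfrak g$), symplecticity to invariance of the residue under Lie derivative $\Res_0\mathcal L_{X}\omega=\Res_0\,\mathrm d(\iota_X\omega)=0$, and \eqref{0tors} to the same Wronskian identity $X_i(\xi_j^{*})-X_j(\xi_i^{*})=[\xi_i^{*},\xi_j^{*}]$. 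Finally \eqref{Deq} is affine in $D$; since $\dim V=\infty$ the naive solution $\tfrac12\sum_k B^i_{k,k}$ diverges, and I would verify that the stated $D^i=\tfrac{\theta_{-1}}{24}\delta_{i,0}+\tfrac{\theta_0}{8}\delta_{i,1}$ is the (finite, regularized) solution by substituting it into \eqref{Deq}, where only finitely many terms survive because of the $\theta$-shift.

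The main obstacle is twofold. First, carrying out \eqref{ABCeq} honestly: either one grinds through the double-factorial bookkeeping, tracking which of $\mathrm d\xi^{*}$/$\xi$ the $(-1)$-form $\theta$ contracts in each slot, or one sets up the symplectic residue phase space carefully enough that the three IHX identities collapse to Jacobi for $\mathfrak g$ together with the derivation property of $\mathcal L_X$ -- the signs and the two off-diagonal blocks are where errors hide. Second, and more structurally, every sum over the internal index $a$ as well as the regularization producing the constants $\tfrac1{24}$, $\tfrac18$ in $D$ require the filtration/completion hypotheses flagged at the start of the notes; the step asserting that $\{\tilde W_i\}$ is a topological basis of $x^{r_0+1}\mathbb{C}[[x]]\partial_x$ (rather than merely spanning a proper subspace) is exactly where infinite-dimensionality must be handled with care.
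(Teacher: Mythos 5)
Your proposal is correct in outline and takes a genuinely different route from the paper's, which is far more laconic. For the six relations the paper offers only ``check by direct computation,'' its one substantive remark being that $B^i_{j,k}=0$ for $i+j>k+1$ and $C^i_{j,k}=0$ for $i>j+k+2$, so that every internal sum over $a$ in \eqref{ABCeq}--\eqref{Deq} is finite; you should record this observation, since it is also what legitimizes your sums without invoking completions. For the Lie algebra the paper computes $[L_i,L_j]$ as in \eqref{comL} and then changes basis by $\mathcal{L}_i=-\sum_k\tau_k\tilde L_{i+k-r_0}$ with $\sum_k\tau_kz^k\,\dd z=z^{r_0}/\theta$, i.e.\ it divides by $\vartheta$ by hand; your realization of the bracket as the vector fields $x^{i+1}\vartheta(x)\,\partial_x$ via the Wronskian identity is equivalent but makes the appearance of $\mathfrak{Witt}_+$ and of the threshold $r_0$ transparent, and your structure constants agree with \eqref{comL}, so on this half I find your argument preferable. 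For the quadratic relations, your plan --- identify $\rho_i$ of \eqref{rhoi} with the Lie derivative along $X_i=\xi_i^*\theta$ acting on $W$, then invoke the representation lemma, invariance of the residue under $\mathcal{L}_X$, and the zero-torsion identity \eqref{0tors} --- is consistent with the paper's own ``symplectic point of view'' paragraph and would be more illuminating than a brute-force check if completed; but, as you concede yourself, all of the content sits in the block-by-block verification (with signs) that $\mathcal{L}_{X_i}$ has matrix \eqref{rhoi} in the basis $(\xi_j,\dd\xi_j^*)$ of $W\simeq V^*\oplus V$, and neither you nor the paper actually carries that out, so on this point the two proofs are equally incomplete rather than your proof being weaker. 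Your separate treatment of \eqref{Deq} by direct substitution is the right move, and you correctly note that the finite-dimensional shortcut $D_0^i=\tfrac12\sum_kB^i_{k,k}$ from the end of Chapter 1 is unavailable here because $B^i_{k,k}=(2k+1)\theta_{-i}/(2i+1)!!$ makes that sum divergent.
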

In particular, in the generic case $\theta_{-1} \neq 0$, we get a Lie algebra isomorphic to $\mathcal{L}_{-1},\mathcal{L}_{0},\mathcal{L}_{1},\ldots$, which is the Lie algebra of change of coordinates in the formal disk at $0$ in $\mathbb{C}$.

\begin{proof} I do not know an elegant proof of this result: one can check by direct computation that the six relations \eqref{fAeq}-\eqref{ABCeq}-\eqref{Deq} are satisfied. Note that $B^i_{j,k} = 0$ whenever $i + j > k +1$, and $C^i_{j,k} = 0$ whenever $i > j + k + 2$, and this implies that all sums over indices $a \geq 0$ appearing in \eqref{ABCeq} contain only finitely non-zero terms.  Let us focus on identifying the Lie algebra. We have
\beq
\label{comL} [L_i,L_j] = \sum_{k \geq 0} \frac{(2k + 1)!!}{(2i + 1)!!(2j + 1)!!} 2(j - i) \theta_{k - i - j}\,L_{k}\,.
\eeq
If we decompose
$$
\frac{z^{r_0}}{\theta} = \sum_{k \geq 0} \tau_k\,z^k\,\dd z\,,
$$
and define
$$
\mathcal{L}_{i} = - \sum_{k \geq 0} \tau_{k} \tilde{L}_{i + k - r_0},,\qquad \tilde{L}_i = \tfrac{(2i + 1)!!}{2}\,L_i\,,
$$
computation shows that the commutations \eqref{comL} imply
$$
[\mathcal{L}_i,\mathcal{L}_{j}] = (i - j)\mathcal{L}_{i + j}\,.
$$
\end{proof}

\noindent $\bullet$ \textbf{Symplectic point of view.} There is an analogy with the structure of $(A,B,C)$ in this quantum Airy structure and that described for \textsc{tqft}s: $A,B,C$ are always the result of applying the trace form to a triple product. The trace form $\psi$ of a Frobenius algebra is replaced here by the formal residue at $0$, and the product is the usual product of formal Laurent series. The vector space $W = \mathbb{C}[z^{-2},z^2]].\dd z \simeq T^*V$ admits a symplectic structure
$$
\omega(f,g) = \Res_{0} \Big(f \int g\Big)\,,
$$
and $\dd\,:\,V \rightarrow W$ is a linear Lagrangian embedding. $(\xi_k)_{k \geq 0}$ defined in \eqref{xidef} are elements of $W$, and they also span a Lagrangian subspace $V'$ of $W$, so that we get a Lagrangian splitting $W \simeq \dd V \oplus V'$, and $(\xi_k)_{k \geq 0}$ was precisely chosen such that it forms together with $(\xi_k^*)$ a symplectic basis adapted to this splitting
$$
\omega(\xi_k,\dd\xi_{\ell}^*) = \delta_{k,\ell}\,.
$$

If $(u_{i,j})_{i,j \geq 0}$ is a symmetric semi-infinite matrix, we can transform this quantum Airy structure by conjugation by $\exp\big(\tfrac{\hbar}{2}u_{a,b}\partial_{x_a}\partial_{x_b}\big)$ as described in Section~\ref{Sop}. Comparing \eqref{utran} with \eqref{ABCloop}, we see that this transformation amounts to
$$
\xi_{k} \rightarrow \xi_{k} + u_{k,a}\dd\xi_{a}^*\,.
$$
In other words, it changes the supplement subspace $V'$ in the Lagrangian splitting $W \simeq \dd V \oplus V'$.

\subsection{Witten-Kontsevich partition function}

\label{WKpart}
\noindent $\bullet$ \textbf{Moduli space of curves.} Let $\mathcal{M}_{g,n}$ be the moduli space of Riemann surfaces $\mathcal{C}$ of genus $g$ with $n$ labeled punctures $p_1,\ldots,p_n$. The automorphism group of $(\mathcal{C},p_1,\ldots,p_n)$ is finite if and only if $2g - 2 + n > 0$. Then, for $2g - 2 + n > 0$, $\mathcal{M}_{g,n}$ is a complex orbifold. Its dimension is $d_{g,n} = 3g - 3 + n$, as we will briefly justify in Section~\ref{Dcount}. $\mathcal{M}_{g,n}$ is not compact, as one can pinch cycles in $\mathcal{C}$. It can be compactified to the Deligne-Mumford moduli space $\overline{\mathcal{M}}_{g,n}$, by adding nodal surfaces whose smooth components all have finite group of automorphisms respecting the nodes and the punctures (see Figure~\ref{Fstable}). Such surfaces are called ``stable".

\begin{figure}[h!]
\begin{center}
\includegraphics[width=0.3\textwidth]{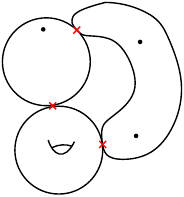}
\caption{\label{Fstable} A nodal Riemann surface in $\partial\overline{\mathcal{M}}_{2,3}$. It has three smooth components which have genus/number of punctures and nodes $(0,3)$, $(0,4)$, $(1,2)$, hence they all have finite automorphism group.}
\end{center}
\end{figure}

\vspace{0.2cm}

\noindent $\bullet$ \textbf{Intersection of $\psi$-classes.} 
For $i \in \{1,\ldots,n\}$, one can associate to each $(\mathcal{C},p_1,\ldots,p_n) \in \overline{\mathcal{M}}_{g,n}$ the complex one-dimensional space $T^*_{p_i}\mathcal{C}$. It forms over $\overline{\mathcal{M}}_{g,n}$ a holomorphic line bundle, and its first Chern class is a cohomology class in $H^2(\overline{\mathcal{M}}_{g,n},\mathbb{Q})$ denoted $\psi_i$. We also have fundamental class $[1] \in H^0(\overline{\mathcal{M}}_{g,n},\mathbb{Q})$, which is characterized by
$$
\int_{\overline{\mathcal{M}}_{g,n}} [1] = 1\,.
$$
We take the convention $\psi_i^{0} = [1]$, and if $\omega$ is a cohomology class which does not have top degree $3g - 3 + n$, we set $\int_{\overline{\mathcal{M}_{g,n}}} \omega = 0$.

\begin{theorem}
\label{WKth} Choose $\theta = z^{-2}(\dd z)^{-1}$ in the quantum Airy structure \eqref{ABCloop}. The \textsc{tr} amplitudes then compute the $\psi$-classes intersections
$$
\forall k_1,\ldots,k_n \geq 0,\qquad F_{g,n}[k_1,\ldots,k_n] = \int_{\overline{\mathcal{M}}_{g,n}} \psi_{1}^{k_1} \cup \cdots \cup \psi_{n}^{k_n}\,.
$$
We denote $Z_{{\rm WK}}$ the corresponding partition function.
\end{theorem}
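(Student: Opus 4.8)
The plan is to invoke the uniqueness half of Theorem~\ref{MainTh} and reduce everything to a single identity: that the generating series of intersection numbers obeys the recursion \eqref{inieq}--\eqref{TReq2} attached to \eqref{ABCloop} with $\theta=z^{-2}(\dd z)^{-1}$. Concretely, set $\langle\tau_{d_1}\cdots\tau_{d_n}\rangle_g:=\int_{\overline{\mathcal{M}}_{g,n}}\psi_1^{d_1}\cup\cdots\cup\psi_n^{d_n}$ and assemble $\hat F_{g,n}[d_1,\dots,d_n]=\langle\tau_{d_1}\cdots\tau_{d_n}\rangle_g$ in the basis dual to $(\xi_k^*)_{k\ge0}$. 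Dimensional vanishing ($\langle\prod\tau_{d_i}\rangle_g=0$ unless $\sum d_i=3g-3+n$) makes each $\hat F_{g,n}$ a finite sum, so $\hat F\in\mathcal{E}_V$; and since $\overline{\mathcal{M}}_{0,1},\overline{\mathcal{M}}_{0,2}$ are unstable, $\hat F_{0,1}=\hat F_{0,2}=0$. As the recursion \eqref{TReq2} has at most one solution with these initial data, it will suffice to check that $\hat F$ solves it.

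First I would specialize the structure constants. With $\theta=z^{-2}(\dd z)^{-1}$ only $\theta_{-1}=1$ survives, so \eqref{ABCloop} collapses to
$$
A^i_{j,k}=\delta_{i,0}\delta_{j,0}\delta_{k,0},\qquad B^i_{j,k}=\frac{(2i+2j-1)!!}{(2i+1)!!\,(2j-1)!!}\,\delta_{k,i+j-1},\qquad C^i_{j,k}=\frac{(2j+1)!!\,(2k+1)!!}{(2i+1)!!}\,\delta_{j+k,i-2}.
$$
Feeding these into \eqref{TReq2}, the $B$-term forces the summed index to $a=d_0+d_j-1$ and the $C$-term to $a+b=d_0-2$, so that the recursion reads, with $d_0$ the first index,
$$
F_{g,n}[d_0,\dots,d_n]=\sum_{j=1}^{n}\frac{(2d_0+2d_j-1)!!}{(2d_0+1)!!\,(2d_j-1)!!}\,F_{g,n-1}[d_0+d_j-1,\{d_i\}_{i\neq0,j}]+\frac12\sum_{a+b=d_0-2}\frac{(2a+1)!!\,(2b+1)!!}{(2d_0+1)!!}\,\Theta_{a,b},
$$
where
$$
\Theta_{a,b}=F_{g-1,n+1}[a,b,d_1,\dots,d_n]+\sum_{\substack{g'+g''=g\\ I\sqcup J=\{1,\dots,n\}}}F_{g',1+|I|}[a,\{d_i\}_{i\in I}]\,F_{g'',1+|J|}[b,\{d_i\}_{i\in J}].
$$
The factors $(2k+1)!!$ built into $\xi_k^*=z^{2k+1}/(2k+1)!!$ are exactly those converting the bare structure constants into the standard normalisation. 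This is verbatim the Dijkgraaf--Verlinde--Verlinde recursion; its base data \eqref{inieq} match $\langle\tau_0\tau_0\tau_0\rangle_0=\int_{\overline{\mathcal{M}}_{0,3}}1=1$ and the genus-one value $\langle\tau_1\rangle_1=\int_{\overline{\mathcal{M}}_{1,1}}\psi_1=\tfrac1{24}$ carried by $D$.

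It then remains to know that the intersection numbers themselves satisfy this recursion. This is precisely the content of Witten's conjecture, proved by Kontsevich: equivalently, $Z_{{\rm WK}}=\exp(\hat F)$ is a tau function of the \textsc{kdv} hierarchy, or it is annihilated by the Virasoro generators $\mathcal{L}_m$ with $m\ge-1$, and writing out $\mathcal{L}_{m}Z_{{\rm WK}}=0$ coefficient-by-coefficient is exactly the \textsc{dvv} recursion above. Under the rescaling $\tilde L_i=\tfrac{(2i+1)!!}{2}L_i$ and the triangular change of basis of the proof of Theorem~\ref{thAiry}, the operators of \eqref{ABCloop} are these Virasoro generators of $\mathfrak{Witt}_+$, so $L(v)\exp(\hat F)=0$ for all $v\in V$ is literally the Virasoro system. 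Granting Witten--Kontsevich, $\hat F$ solves \eqref{inieq}--\eqref{TReq2} with the correct initial data and $\hat F_{0,1}=\hat F_{0,2}=0$, whence $F=\hat F$ by uniqueness.

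The genuinely hard input is external and unavoidable: the equality between the \textsc{tr}/\textsc{dvv} recursion and intersection theory is the Witten--Kontsevich theorem, which I treat as a black box and which no manipulation of the Airy structure can replace. Within the argument the real work is bookkeeping—tracking the $(2k+1)!!$ normalisations through \eqref{ABCloop} to see that the structure constants reproduce the \textsc{dvv} coefficients, and checking that the dilaton/string shift implicit in the Virasoro constraints has been absorbed into the symplectic basis $(\xi_k,\xi_k^*)$ and into $D$. I expect the latter to be the most error-prone point, namely reconciling the index conventions so that the genus-one base case lands on $\int_{\overline{\mathcal{M}}_{1,1}}\psi_1=\tfrac1{24}$ rather than on a spurious degree-zero term; I would guard against it by recomputing a low-genus value such as $\int_{\overline{\mathcal{M}}_{2,1}}\psi_1^4=\tfrac1{1152}$ directly from \eqref{TReq2} and matching.
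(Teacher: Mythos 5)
Your proposal is correct and follows essentially the same route as the paper: the paper likewise reduces the theorem to the Virasoro/DVV constraints on the generating series of $\psi$-class intersections and invokes the Witten conjecture as proved by Kontsevich as the one external input, merely displaying the first operators $\mathcal{L}_{-1},\mathcal{L}_0,\mathcal{L}_1$ instead of writing out the DVV recursion as you do. The index pitfall you flag is real and you resolve it the right way: the displayed $\mathcal{L}_0$ carries the constant $-\hbar/24$, forcing $D^1 = \tfrac{1}{24}$ so that $F_{1,1}[1] = \langle\tau_1\rangle_1$, which is the reading consistent with dimension counting.
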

This theorem means that the generating series
$$
Z = \exp\bigg\{\sum_{g \geq 0} \sum_{n \geq 1} \sum_{k_1,\ldots,k_n \geq 0} \frac{\hbar^{g - 1}}{n!}\,\bigg(\int_{\overline{\mathcal{M}}_{g,n}} \psi_{1}^{k_1} \cup \cdots \cup \psi_{n}^{k_n}\bigg) x_{k_1}\cdots x_{k_n}\bigg\}\,,
$$
satisfies the differential equations $\mathcal{L}_{i} \cdot Z = 0$ for all $i \geq -1$, where the lowest $\mathcal{L}_{i} = L_{i - 1}$ are equal to
\bea
\mathcal{L}_{-1} & = & \hbar \partial_{x_0} - \tfrac{1}{2}x_0^2 - \sum_{k \geq 0} \hbar x_{k}\partial_{x_{k + 1}}\,, \nonumber \\
\mathcal{L}_{0} & = & \hbar \partial_{x_1} - \sum_{k \geq 0} \hbar \tfrac{2k + 1}{3}\,x_{k}\partial_{x_k} - \tfrac{\hbar}{24}\,,  \nonumber \\
\mathcal{L}_{1} & = & \hbar \partial_{x_2} - \sum_{k \geq 0} \hbar \tfrac{(2k + 3)(2k + 1)}{15}x_{k}\partial_{x_{k + 1}} - \tfrac{\hbar^2}{30}\,\partial_{x_0}^2\,. \nonumber
\eea
The constraints $\mathcal{L}_{-1}\cdot Z = 0$ are $\mathcal{L}_{0}\cdot Z = 0$ can be proved in an elementary way and are called the string and dilaton equation. As $\overline{\mathcal{M}}_{0,3}$ is a point with no automorphism, we have
$$
F_{0,3}[i,j,k] = \int_{\overline{\mathcal{M}}_{0,3}} [1] = 1\,.
$$
With this as initial data, the string equation recursively determines all $\psi$-classes intersections. The result reads
$$
\int_{\overline{\mathcal{M}}_{0,n}} \psi_{1}^{k_1} \cup \cdots \cup \psi_{n}^{k_n} = \left\{\begin{array}{lll} \frac{(n - 3)!}{k_1!\cdots k_n!} & & {\rm if}\,\,\sum_{i} k_i = n - 3 \\ 0 & & {\rm otherwise} \end{array}\right.\,.
$$
It is also an elementary result in algebraic geometry that
$$ 
\int_{\overline{\mathcal{M}}_{1,1}} \psi_1 = \tfrac{1}{24}\,,
$$ 
and this is indeed the value assigned to $D^0$ in Theorem~\ref{WKth}. The equation $\mathcal{L}_{1}\cdot Z = 0$ is already non-trivial. This result is implied by the conjecture of Witten \cite{Witten} that $Z$ is a KdV tau function satisfying the string equation, proved by Kontsevich in \cite{Kontsevich}.

\newpage

\section{Cohomological field theories}

\subsection{Definition}

\noindent $\bullet$ \textbf{Axioms.} Let $\mathcal{A}$ be a Frobenius algebra over $\mathbb{C}$. A cohomological field theory (\textsc{cohft}) on $\mathcal{A}$ is a sequence of linear maps $\Omega_{g,n}\,:\,\mathcal{A}^{\otimes n} \rightarrow H^{\bullet}(\mathcal{M}_{g,n},\mathbb{\mathbb{C}})$ indexed by integers $g \geq 0$ and $n \geq 1$ such that $2g - 2 + n > 0$, satisfying the following axioms.
\begin{itemize}
\item[$\bullet$] $\Omega_{g,n}$ is symmetric under permutation of the $n$ factors of $\mathcal{A}$.
\item[$\bullet$] $\Omega_{0,3}\,:\,\mathcal{A}^{\otimes 3} \rightarrow H^{\bullet}(\overline{\mathcal{M}}_{0,3},\mathbb{C}) \cong \mathbb{C}$ represents the product in $\mathcal{A}$.
\item[$\bullet$] $\Omega$s are compatible with all natural morphisms between moduli spaces of curves, in a sense spelled out below.
\end{itemize}

\vspace{0.2cm}

\noindent $\bullet$ \textbf{Morphisms between $\overline{\mathcal{M}}_{g,n}$s.} The list of these natural morphisms is as follows. First, we can forget the last puncture, obtaining maps
$$
\pi_0\,:\,\,\overline{\mathcal{M}}_{g,n + 1} \longrightarrow \overline{\mathcal{M}}_{g,n}\,.
$$
We then require
$$
\pi_0^*\Omega_{g,n}(v_1 \otimes \cdots \otimes v_n) = \Omega_{g,n + 1}(v_1 \otimes \cdots \otimes v_n \otimes \mathbf{1})\,.
$$ 

Second, if $(p,q)$ is an ordered pair of points on a stable Riemann surface, we can glue them so as to form a nodal surface. We obtain in this way two family of morphisms
\bea
\pi_{{\rm I}} & : & \overline{\mathcal{M}}_{g - 1,n + 2} \longrightarrow \overline{\mathcal{M}}_{g,n}\,, \nonumber \\
\pi_{{\rm II}}\ &: & \overline{\mathcal{M}}_{g_1,n_1 + 1} \times \overline{\mathcal{M}}_{g_2,n_2 + 1} \longrightarrow \overline{\mathcal{M}}_{g_1 + g_2,n_1 + n_2}\,. \nonumber
\eea
 
\begin{figure}[h!]
\begin{center}
\includegraphics[width=\textwidth]{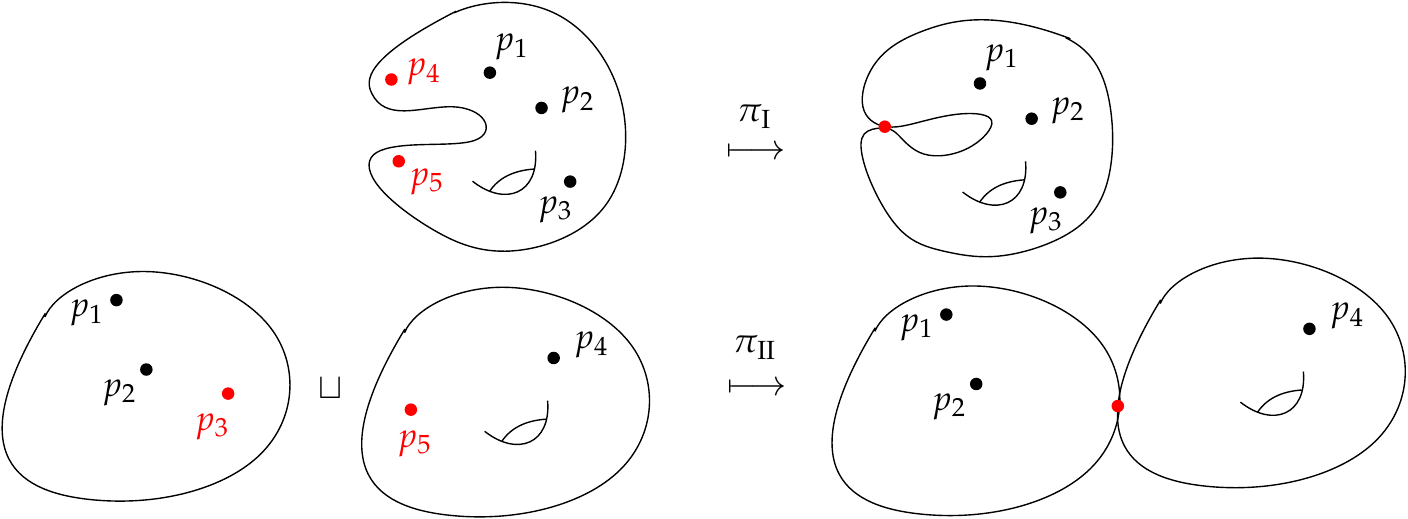}
\end{center}
\end{figure}

 The image of these morphisms lies in the boundary of the target moduli space. In fact, the collection of all morphisms $\pi_{{\rm I}}$ and $\pi_{{\rm II}}$ is the collection of inclusions of codimension $1$ strata in the boundary of $\overline{\mathcal{M}}_{g,n}$. In other words, the interior of the source moduli space in these morphisms consists of the Riemann surfaces one can obtain by degenerating in the most generic way a Riemann surface in $\mathcal{M}_{g,n}$. Let $b^{\dagger} = \sum_{i \in I} e_i \otimes e_i^*$ be the element of $\mathcal{A} \otimes \mathcal{A}$ representing the pairing in $\mathcal{A}$. We then require that
\beq
\pi_{{\rm I}}^*\Omega_{g,n}(v_1 \otimes \cdots \otimes v_n) =  \sum_{i \in I} \Omega_{g-1,n + 2}(v_1\otimes \cdots \cdots v_n \otimes e_i \otimes e_i^*)\,,
\eeq
and
\bea
&& \pi_{{\rm II}}^*\Omega_{g_1 + g_2,n_1 + n_2}(v_1 \otimes \cdots \otimes v_{n_1 + n_2}) \nonumber \\
& = & \sum_{i} \Omega_{g_1,n_1}(v_1 \otimes \cdots \otimes v_{n_1} \otimes e_i) \cup \Omega_{g_2,n_2}(v_{n_1 + 1} \otimes \cdots \otimes v_{n_1 + n_2} \otimes e_i^*)\,. \nonumber
\eea

\vspace{0.2cm}

\noindent $\bullet$ \textbf{CohFT partition function.} To simplify the exposition, let us take $(e_i)_{i \in I_0}$ an orthonormal basis of $\mathcal{A}$. If $(\Omega_{g,n})_{g,n}$ is a \textsc{cohft}, we can form the partition function $\mathfrak{Z} = \exp(\mathfrak{F})$ with
\beq
\label{compC} \mathfrak{F} = \sum_{\substack{g \geq 0 \\ n \geq 1}} \frac{\hbar^{g - 1}}{n!} \sum_{\substack{i_1,\ldots,i_n \in I_0 \\ k_1,\ldots,k_n \geq 0}} \bigg(\int_{\overline{\mathcal{M}}_{g,n}} \Omega_{g,n}(e_{i_1} \otimes \cdots \otimes e_{i_n}) \cup \psi_{1}^{k_1} \cup \cdots \cup \psi_{n}^{k_n}\bigg) \prod_{j = 1}^n x_{i_j,k_j}\,,
\eeq
which is a formal function on $V = \mathcal{A}[[z]]$. Here $x_{i,k}$ indexed by $(i,k) \in I = I_0 \times \mathbb{N}$ are the canonical linear coordinates on $V$, \textit{i.e.} corresponding to the basis $e_iz^{k}$. The components $\mathfrak{F}_{g,n}$ of \eqref{compC} are called the \textsc{cohft} amplitudes.

\vspace{0.2cm}

\noindent $\bullet$ \textbf{Degree $0$ part.} The restriction of a \textsc{cohft} to cohomological degree $0$ yields multilinear maps $\Omega_{g,n}^{{\rm deg}\,0}\,:\,\mathcal{A}^{\otimes n} \rightarrow H^{0}(\overline{\mathcal{M}}_{g,n},\mathbb{C}) \cong \mathbb{C}$. The axioms of the \textsc{cohft} imply that these maps are completely determined by $\mathcal{A}$. Indeed, we can compute $\Omega_{g,n}^{{\rm deg}\,0}$ by degenerating any stable Riemann surface to a glueing of spheres with $3$ punctures only, and use the axioms. More precisely, comparing with Section~\ref{TQFTa} we find that $\Omega_{g,n}^{{\rm deg}\,0}$ are the amplitudes of the \textsc{tqft} associated to $\mathcal{A}$. We also note that, if $\Omega$ is a \textsc{cohft}, so is $\Omega^{{\rm deg}\,0}$.

\vspace{0.2cm}

\noindent $\bullet$ \textbf{Remarks.} If a \textsc{cohft} is not purely of cohomological degree $0$, the axioms give constraints on the classes $\Omega_{g,n}$ but do not determine them. In fact, the structure of the ring $H^{\bullet}(\overline{\mathcal{M}}_{g,n})$ is rather complicated and still the theater of active research. So, there is little hope to classify \textsc{cohft}s in general. Yet, the study of \textsc{cohft}s gave recently (Pixton, Pandharipande, Zvonkine, and collaborators) new results about $H^{\bullet}(\overline{\mathcal{M}}_{g,n})$, and the classification in the case where $\mathcal{A}$ is semi-simple has been completed through the work of Givental and Teleman. This is a deep result, which uses other difficult results on the cohomology ring $H^{\bullet}(\overline{\mathcal{M}}_{g,n},\mathbb{Q})$. We will explain below how this classification works.

\vspace{0.2cm}

\noindent $\bullet$ \textbf{Prepotential and deformation of Frobenius algebras.}  We define the prepotential of a \textsc{cohft} as the formal function on $\mathcal{A}$ defined by restriction of the genus $0$ amplitude of the \textsc{cohft}
$$
\mathfrak{f}_{0} = \mathfrak{F}_{0}|_{x_{i,k} = 0\,\,k \geq 1}\,.
$$
We say that $x_{i,0}$ are the ``primary times''. Let us assume that $\mathfrak{f}_{0}$ has a non-zero radius of convergence, and take $t \in \mathcal{A}$ within the radius of convergence. For simplicity we also assume that $\mathbf{1} = e_0$  is an element of the orthonormal basis $(e_i)_{i \in I}$. One can prove from the \textsc{cohft} axioms that
\bea
\mu^{i_1}_{i_2,i_3} (t) & = & \frac{\partial^3\mathfrak{f}_{0}}{\partial t_{i_1,0}\partial t_{i_2,0} \partial t_{i_3,0}}(t) \,,\nonumber \\
b_{i_1,i_2}(t) & = & \mu^{0}_{i_2,i_3}\,, \nonumber
\eea
determine tensors $\mu(t)\,:\,\mathcal{A} \otimes \mathcal{A} \rightarrow \mathcal{A}$ and $b(t)\,:\,\mathcal{A} \otimes \mathcal{A} \rightarrow \mathbb{C}$ giving $\mathcal{A}$ a $t$-dependent structure of Frobenius algebra. We thus obtain a family of Frobenius algebras, parametrized by a neighborhood $\mathcal{N}$ of $0$ in $\mathcal{A}$, which ``derive from" the prepotential function $\mathfrak{f}_{0}$. This has been axiomatized under the name of Frobenius manifolds \cite{Dubrovin}. This notion plays an important role in enumerative geometry, especially from the perspective of mirror symmetry and integrable hierarchies.

\subsection{Two examples.}
\label{Sex}
\noindent $\bullet$ \textbf{Trivial CohFT.}  From the axioms, the fundamental class $[1] \in H^{0}(\overline{\mathcal{M}}_{g,n},\mathbb{Q})$ is automatically a \textsc{cohft} on the trivial Frobenius algebra $V = \mathbb{C}$. Its partition function is by construction $\mathfrak{Z}_{{\rm WK}}$. Therefore, it is the partition function of the quantum Airy structure given in Theorem~\ref{thAiry}. If we choose $\Delta \in \mathbb{C}^*$,
$$
\Delta^{2g - 2 + n} \cdot [1] \in H^0(\overline{\mathcal{M}}_{g,n},\mathbb{Q})
$$
is also a \textsc{cohft}, on the modified Frobenius algebra $\mathcal{A}_{\Delta} = \mathbb{C}$ defined such that the element $\mathbf{1} \in \mathbb{C}$ is orthonormal but $\mathbf{1} \cdot \mathbf{1} = \Delta \mathbf{1}$. The partition function of this \textsc{cohft} is
$$
\hat{\Delta}\mathfrak{Z}_{{\rm WK}} = \mathfrak{Z}_{{\rm WK}}(\Delta x)|_{\hbar \rightarrow \hbar \Delta^2}\,.
$$
In fact, any one-dimensional Frobenius algebra must be isomorphic to $\mathcal{A}_{\Delta}$ for some $\Delta \in \mathbb{C}^*$.

\vspace{0.2cm}

\noindent $\bullet$ \textbf{Chern character of bundles of conformal blocks.} Two-dimensional rational conformal field theory are axiomatized by the notion of modular functor -- see \textit{e.g.} \cite{BB}. We will not enter into the detail of these axiomatics, but rather state of some of its consequences. Let us fix a \textsc{cft} in this sense. There is a Frobenius algebra $\mathcal{A}$ spanned by a distinguished basis $(\varepsilon_i)_{i \in I_0}$, which describes the possible boundary conditions of the theory. For any $g,n \geq 0$ and $i_{1},\ldots,i_n \in I_0$, one can then construct the bundle of conformal blocks $\mathcal{V}_{g,n}(i_1,\ldots,i_n)$ over $\overline{\mathcal{M}}_{g,n}$, and the \textsc{cft} axioms imply that this bundle enjoys factorization properties at the boundary of the moduli space. This is the analog of the \textsc{cohft} axioms, but at the level of bundles. Then, their Chern character for $2g - 2 + n > 0$ yield a sequence of cohomology classes 
$$
\Omega_{g,n}(\varepsilon_{i_1} \otimes \cdots \otimes \varepsilon_{i_n}) = {\rm Ch}(\mathcal{V}_{g,n}(i_1,\ldots,i_n))
$$
which form a \textsc{cohft} \cite{ABO1}. Examples of \textsc{cft}s which fit in this formalism are those constructed from modular tensor categories, from representations of the quantum group $U_q(\mathfrak{sl}_2)$ at $q = \exp(\frac{2{\rm i}\pi}{k +  2})$ when $k$ is a positive integer called ``level", etc. In the latter case, the construction of the bundle of conformal blocks comes from the pioneering work of \cite{TUY}. For a general 2d rational \textsc{cft}, the existence of $\mathcal{V}_{g,n}(i_1,\ldots,i_n)$ seems to belong to folklore knowledge, and we wrote it in detail in \cite{ABO1}.

\subsection{Motivating example: quantum cohomology and Gromov-Witten theory}

\noindent $\bullet$ \textbf{Moduli space of stable maps.} The notion of \textsc{cohft} was introduced by Kontsevich and Manin \cite{KMCohFT} in an attempt to capture the properties of Gromov-Witten invariants. Let $X$ be a smooth projective variety, fix $g \geq 0$, $n \geq 1$ and $\beta \in H_{2}(X,\mathbb{Z})$. Consider the moduli space $\overline{\mathcal{M}}_{g,n}(X,\beta)$ of maps $\phi\,:\,\mathcal{C} \rightarrow X$ from a stable Riemann surface $\mathcal{C}$ of genus $g$ with $n$ labeled punctures $(p_1,\ldots,p_n)$, such that $[\phi(\mathcal{C})] = \beta$. This space is in general singular, but Behrend and Fantechi \cite{FBclass} and many others could construct a ``virtual'' fundamental cycle $[\overline{\mathcal{M}}_{g,n}(X,\beta)]_{{\rm vir}}$ over which cohomology classes can be integrated, as if they were integrated on a cycle of complex dimension
\beq
\label{dimX}d_{g,n}(X,\beta) = \dim_{\mathbb{C}} X + (3 - \dim_{\mathbb{C}} X)g + n - 3 + \int_{\beta} c_1(TX)\,.
\eeq
It is the analog of the homology cycle $[\overline{\mathcal{M}}_{g,n}]$ which is Poincar\'e dual to $[1]$ in the moduli space of curves.

\vspace{0.2cm}

\noindent $\bullet$ \textbf{Gromov-Witten classes.} This allows the definition of \emph{Gromov-Witten invariants}, as follows. We have a proper fibration $\pi\,:\,\overline{\mathcal{M}}_{g,n}(X,\beta) \rightarrow \overline{\mathcal{M}}_{g,n}$ which forgets about the map $\phi$, and $n$ morphisms ${\rm ev}_i\,:\,\overline{\mathcal{M}}_{g,n}(X,\beta) \rightarrow X$ which remember the image of $p_i$ \textit{via} $\phi$. For any $v_1,\ldots,v_n \in H^{\bullet}(X,\mathbb{C})$, we can form the cohomology class on $\overline{\mathcal{M}}_{g,n}$
$$
\Omega_{g,n}^{X,\beta}(v_1 \otimes \cdots \otimes v_n) = \pi_{*}\Big([\overline{\mathcal{M}}_{g,n}(X,\beta)]_{{\rm vir}} \cap\big( {\rm ev}_{1}^*v_1 \cup \cdots \cup {\rm ev}_{n}^*(v_n)\big)\Big)\,,
$$
which is called the Gromov-Witten theory class. Then, working in the Novikov ring $\mathbb{K}_{X}$ spanned by $t^{\beta}$ for $\beta = $ $2$-cycles which can be realized as image of a curve, we can define
\beq
\label{Novisad} \Omega_{g,n}^{X} = \sum_{\beta \in H_2(X,\mathbb{Z})} t^{\beta}\,\Omega_{g,n}^{X,\beta}\,.
\eeq

\vspace{0.2cm}

\noindent $\bullet$ \textbf{CohFTs and quantum cohomology.} $\mathcal{A}_{X} = H^{\bullet}(X,\mathbb{C})$ has the structure of a Frobenius algebra with the cup product and Poincar\'e pairing. The properties of the Gromov-Witten invariants\footnote{Some being conjectural at the time of writing of \cite{KMCohFT} as the fundamental class had not been defined yet.} guarantee that $\Omega_{g,n}^{X}$ is a \textsc{cohft} over $\mathcal{A}_{X}$ in case the sum \eqref{Novisad} is convergent -- and over $\mathcal{A}_{X}\otimes \mathbb{K}_{X}$ in general. Using the prepotential for this \textsc{cohft}, one can obtain a family of deformation of the Frobenius algebra structure on $\mathcal{A}_{X}$. The resulting ring is called ``quantum cohomology ring'' of $X$. 

\vspace{0.2cm}

\noindent $\bullet$ \textbf{Remarks.} Note that when $\dim_{\mathbb{C}} X \neq 3$, Gromov-Witten classes of high genera will vanish. So, for general varieties, the most interesting part of Gromov\-Witten theory is the genus $0$ part, and its restriction to primary times provides the quantum cohomology of $X$. When $X$ is a Calabi-Yau variety -- \textit{i.e.} $c_1(TX) = 0$ -- of complex dimension $3$, the dimension \eqref{dimX} does not depend on $\beta$ and $g$, so we have \textit{a priori} infinitely many non-vanishing Gromov-Witten invariants, and an important question is whether one can find an algorithm which can compute Gromov-Witten invariants in all genera and all degree. This question is at the heart of many developments in enumerative  and algebraic geometry in the last 25 years, relating Gromov-Witten theory to integrable systems, modular forms, etc. 

\vspace{0.2cm}

\noindent $\bullet$ \textbf{Virasoro conjecture.} Under an additional homogeneity assumption, Eguchi, Hori and Xiong  with addition of Katz \cite{EHX} proposed an explicit family $(\mathcal{L}_k)_{k \geq -1}$ of differential operators in $\mathcal{W}_{\mathcal{A}_{X}[[z]]}(\leq 2)$, which span a Lie algebra isomorphic to $\mathfrak{Witt}_{+}$, and conjecturally annihilate the \textsc{cohft} partition function. Although some cases have been proven, the general conjecture is still open, see \cite{GetzlerVir} for a review. Although they take the form \eqref{Liform}, these operators do not quite form a quantum Airy structure on $\mathcal{A}_{X}[[z]]$, as we would rather need more operators (indexed by an integer $k$ and basis elements $i$ of $\mathcal{A}_{X}$) in order to get a quantum Airy structure.

\subsection{Givental group action}
\label{Givgrou}
Let us fix a Frobenius algebra $\mathcal{A}$, and set $V = \mathcal{A}[[z]]$. Using $\mathcal{W}_{V}(\leq 2)$ seen as a quantization of $T^*V$, Givental \cite{Givquant} and Coates constructed two operations on \textsc{cohft}s, called  and $\hat{R}$, which lift at the level of cohomology classes the action of certain elements of the group $\exp(\mathcal{W}_{V}(\leq 2))$ on the partition function. The complete proof that $\hat{R}$ and $\hat{T}$ preserve the \textsc{cohft} axioms was only obtained later in \cite{FSZ}.

\vspace{0.2cm}
 
\noindent $\bullet$ \textbf{$\hat{T}$ operation.}  Let $T \in z^2\mathcal{A}[[z]] \subset V$, which we decompose as $T(z) = \sum_{k \geq 2} T_kz^{k}$. If $\Omega$ is a \textsc{cohft} on $\mathcal{A}$, we put
\bea
&& (\hat{T}\Omega_{g,n})(v_1 \otimes \cdots \otimes v_n) \nonumber \\
& = & \sum_{\substack{m \geq 0 \\ k_1,\ldots,k_m \geq 0}} \frac{1}{m!} (\pi_{m})_*\bigg( \Omega_{g,n+ m}\Big(\bigotimes_{i = 1}^n v_i \otimes \bigotimes_{j = 1}^m T_{k_j}\Big) \cup \psi_{1}^{k_1} \cup \cdots \cup \psi_{n}^{k_n}\bigg)\,, \nonumber
\eea
where we have introduced the morphism $\pi_m\,:\,\overline{\mathcal{M}}_{g,n + m} \rightarrow \overline{\mathcal{M}}_{g,n}$  forgetting the first $m$ punctures. The pushforward $(\pi_m)_*(\omega)$ here means that we take the homology class dual to $\omega$, push it forward, and then apply again Poincar\'e duality to get a cohomology class. The effect on the \textsc{cohft} partition function is a translation, as described in Section~\ref{Sop}.
$$ 
\hat{T}\mathfrak{Z}(x) = \exp\big(-\frac{1}{\hbar}\tfrac{\partial_{t_a}\mathfrak{F}_{0}(T) x_a}{\hbar} - \tfrac{\partial_{t_at_b}\mathfrak{F}_{0}(T)x_ax_b}{2\hbar}\big)\,\mathfrak{Z}(x + T)\,.
$$ 

\vspace{0.2cm}
 
\noindent $\bullet$ \textbf{$\hat{R}$-operation.} Let $R \in {\rm End}(\mathcal{A}[[z]]) \cong {\rm End}(\mathcal{A})[[z]]$ such that
\beq
\label{Rcon}R = {\rm Id}_{\mathcal{A}} + O(z)\quad {\rm and}\quad R(z)R^{\dagger}(-z) = {\rm Id}_{\mathcal{A}}\,,
\eeq
where $\dagger$ is the notion of adjoint induced by the pairing on $\mathcal{A}$. We introduce the element $B \in \mathcal{A}\otimes \mathcal{A}[[z_1,z_2]]$ through
$$
B(z_1,z_2) = \frac{b^{\dagger} - R(z_1)\otimes R(z_2) b^{\dagger}}{z_1 + z_2}\,.
$$
The conditions \eqref{Rcon} ensure that it is a well-defined formal power series in $z_1$ and $z_2$. In general, if $\omega$ is a positive degree element of a cohomology ring $\mathcal{H}$, we can evaluate an $\mathcal{A}$-valued formal power series to an element of $\mathcal{A} \otimes \mathcal{H}$ by replacing $z^{k}$ with the $k$-th power (for the cup product) of $\omega$. The series will terminate as $\omega$ is nilpotent in $\mathcal{H}$.

\begin{definition}
For $2g - 2 + n > 0$, $\mathcal{G}_{g,n}$ is the set of graphs $G$ with the following properties.
\begin{itemize}
\item[$\bullet$] $G$ has $n$ labeled leaves. We denote $L(G)$ the set of leaves.
\item[$\bullet$] Each vertex $v$ carries a genus $h(v) \geq 0$ satisfying $2h(v) - 2  + n(v) > 0$ where $n(v)$ is the number of edges/leaves incident to $v$. We denote $V(G)$ the set of vertices, and $E(G)$ the set of edges (excluding leaves).
\item[$\bullet$] $b_1(\Gamma) + \sum_{v \in V(G)} h(v) = g$.
\end{itemize}
For any $G \in \mathcal{G}_{g,n}$, we can form the moduli space
\beq
\label{modG} \overline{\mathcal{M}}_{G} = \prod_{v \in V(G)} \overline{\mathcal{M}}_{h(v),n(v)}\,,
\eeq
which has a natural inclusion $\pi_{G}\,:\,\overline{\mathcal{M}}_{G} \rightarrow \overline{\mathcal{M}}_{g,n}$.
\end{definition}

\begin{figure}[h!]
\begin{center}
\includegraphics[width=0.15\textwidth]{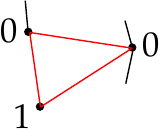}
\caption{\label{FStableG} The graph in $\mathcal{G}_{2,3}$ corresponding to the stable surface of Figure~\ref{Fstable}.}
\end{center}
\end{figure}
In \eqref{modG}, the half-edges/leaves naturally index the punctures in surfaces belonging to the target moduli space. We denote $\overline{E}(G)$ the set of half-edges/leaves of $G$. If $e$ is an edge incident at a vertex $v$, we denote $e(v)$ the corresponding half-edge incident to $v$. $(\mathcal{M}_{G})_{G \in \mathcal{G}_{g,n}}$ is actually the family of boundary strata of $\overline{\mathcal{M}}_{g,n}$. Vertices of the graph represent smooth components of a possibly nodal surface, the edges record how punctures of the smooth components should be glued together so as to form nodes, and the leaves correspond to the punctures which remain in the nodal surface.

If $\Omega$ is a \textsc{cohft} on $\mathcal{A}$, the $\hat{R}$-operation is defined by
$$
(\hat{R}\Omega)_{g,n} = \sum_{G \in \mathcal{G}_{g,n}} (\pi_{G})_* \circ \Phi \bigg(\bigcup_{\substack{\{v,v'\} \in E(G)}} B(\psi_{e(v)},\psi_{e(v')}) \cup \bigcup_{\ell \in L(G)} R(\psi_{v})\bigg)\,.
$$
Here, the result of the product between the round brackets is an element of
$$
(\mathcal{A}^*)^{\otimes \overline{E}(G)} \otimes \bigotimes_{v \in V(G)} H^{\bullet}(\overline{\mathcal{M}}_{h(v),n(v)},\mathbb{C})\,,
$$
where we used the pairing to replace copies of $\mathcal{A}$ by $\mathcal{A}^*$. The multilinear map $\Phi$ applies the pairing on the glued half-edges so as to obtain an element of
$$
(\mathcal{A^*})^{\otimes L(G)} \otimes \bigotimes_{v \in V(G)} H^{\bullet}(\overline{\mathcal{M}}_{h(v),n(v)},\mathbb{C})\,.
$$
Finally, pushing forward \textit{via} the morphism $\pi_{G}$ produces an element of
$$
(\mathcal{A^*})^{\otimes L(G)} \otimes \bigotimes_{v \in V(G)} H^{\bullet}(\overline{\mathcal{M}}_{g,n},\mathbb{C})\,.
$$

This set of graphs is in fact tailored such that the action of this transformation on the partition function is a composition of conjugation by exponential of a pure quadratic differential operator (responsible for the weight of the edges) followed by a change of basis (responsible for the weights of the leaves). Let us choose an orthonormal basis $(e_i)_{i \in I_0}$ of $\mathcal{A}$, and denote $(u_{a,b})_{a,b \in I_0 \times \mathbb{N}}$ the coefficients of $B$ in the basis $e_{i}z^k$ indexed by $a = (i,k) \in I_0 \times \mathbb{N}$. Likewise we denote $r_{a,b}$ the coefficients of $R$. We have
$$
\hat{R}\mathfrak{Z} = \exp\big(r_{a,b}x_a\partial_{x_b}\big)\exp\big(\tfrac{\hbar}{2}u_{a,b}\partial_{x_a}\partial_{x_b}\big)\mathfrak{Z}\,.
$$
In this sense, $\mathcal{G}_{g,n}$ are the Feynman graphs whose sum computes formally the right-hand side in terms of the original partition function. It is very different in nature from the set of \textsc{tr} graphs $\mathbb{G}_{g,n}$ introduced in Section~\ref{Sgraph} -- the latter are not Feynman graph, due to the presence of the spanning tree and the non-local constraint on edges which are not in the spanning tree.

\subsection{Semi-simple cohomological field theories}

\noindent $\bullet$ \textbf{Classification.} A \textsc{cohft} on $\mathcal{A}$ is said semi-simple if $\mathcal{A}$ itself is isomorphic to a direct sum of one-dimensional Frobenius algebras. C.~Teleman proved the remarkable result \cite{Teleman}

\begin{theorem}
\label{Teleth} If $\Omega$ is a semi-simple \textsc{cohft}, there exists $R \in {\rm End}\,\mathcal{A}[[z]]$ such that, for $T(z) = z({\rm Id}_{\mathcal{A}} - R(z))(\mathbf{1})$, we have $(\hat{T}\hat{R})\Omega_{g,n} = \Omega^{{\rm deg}\,0}_{g,n}$ for any $2g - 2 + n > 0$.
\end{theorem}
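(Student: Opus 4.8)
The plan is to deduce the statement from the fact that the Givental group generated by the $\hat{R}$ and $\hat{T}$ operations acts transitively on the set of semi-simple \textsc{cohft}s sharing a fixed degree-$0$ part. Since $\Omega^{\mathrm{deg}\,0}$ is itself a \textsc{cohft} and the operations are invertible in the group, it is enough to exhibit a single $R \in \mathrm{End}\,\mathcal{A}[[z]]$ for which the prescribed $T(z) = z(\mathrm{Id}_{\mathcal{A}} - R(z))(\mathbf{1})$ gives $(\hat{T}\hat{R})\Omega_{g,n} = \Omega^{\mathrm{deg}\,0}_{g,n}$; the only genuine freedom is the choice of $R$, as $T$ is then forced. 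Because $\mathcal{A}$ is semi-simple, I would first pass to the basis of canonical idempotents $(\epsilon_\alpha)$, in which $\mathcal{A} \cong \bigoplus_\alpha \mathbb{C}\epsilon_\alpha$ and, following Section~\ref{Sex}, $\Omega^{\mathrm{deg}\,0}$ factorizes as a product of the one-dimensional theories $\mathcal{A}_{\Delta_\alpha}$.

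I would then construct the candidate $R$ purely from genus-$0$, low-$n$ data. The genus-$0$ restriction of $\Omega$ endows $\mathcal{A}$ with a Frobenius manifold structure through the prepotential $\mathfrak{f}_0$, and in the semi-simple locus the associated Dubrovin connection admits a formal fundamental solution whose asymptotic expansion at $z = 0$ is precisely an endomorphism-valued series $R(z) = \mathrm{Id}_{\mathcal{A}} + O(z)$ satisfying the symplectic condition $R(z)R^{\dagger}(-z) = \mathrm{Id}_{\mathcal{A}}$ of \eqref{Rcon}. This $R$ is unique once canonical coordinates are chosen, and since $\mathrm{Id}_{\mathcal{A}} - R(z) = O(z)$ the resulting $T(z)$ lies in $z^2\mathcal{A}[[z]]$, so $\hat{T}$ is well-defined. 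The claim is that the operations built from this data untwist $\Omega$ down to its degree-$0$ part.

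The verification I would organize as an induction on $2g - 2 + n$. The operations $\hat{R}$ and $\hat{T}$ are expressed through sums over the stable graphs $\mathcal{G}_{g,n}$ and pushforwards along the boundary inclusions $\pi_G$, so that on the single-vertex graph they act by applying $R(\psi_\ell)$ at the leaves of $\Omega_{g,n}$, while the graphs carrying edges contribute via the edge weights $B(\psi_{e(v)},\psi_{e(v')})$ and boundary pushforwards. Comparing $(\hat{T}\hat{R})\Omega$ with $\Omega^{\mathrm{deg}\,0}$ on each stratum, the contributions of graphs with at least one edge involve only vertices of strictly smaller complexity, hence are controlled by the induction hypothesis together with the gluing axioms of Section~\ref{Givgrou}, which guarantee that the $R$- and $B$-weights reproduce exactly the genus-$0$ reconstruction and the behavior at nodes. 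Consequently the discrepancy between the two sides is a class pulled back from the interior $\mathcal{M}_{g,n}$, supported away from the boundary.

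The hard part will be showing that this interior discrepancy vanishes in positive cohomological degree. The gluing axioms constrain only boundary behavior and cannot see an interior class, so genuine input about the ring $H^{\bullet}(\overline{\mathcal{M}}_{g,n})$ is unavoidable. The essential tool is Mumford's conjecture, proved by Madsen and Weiss: in the stable range the cohomology is a free polynomial algebra on the Mumford--Morita--Miller classes $\kappa_i$, which gives precise control over how tautological classes propagate across genera and marked points. Combining this with semi-simplicity — which, after the passage to idempotents, splits the relevant obstruction into contributions that are forced to be exact — one concludes that the interior part of each $\Omega_{g,n}$ is reconstructible from the genus-$0$ and $\psi$/$\kappa$ data already encoded by $\hat{R}$ and $\hat{T}$. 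Establishing this vanishing, not the construction of $R$, is the crux of Teleman's argument and the single step where the deep topology of $\overline{\mathcal{M}}_{g,n}$ genuinely enters.
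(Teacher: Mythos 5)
The paper does not prove this statement: Theorem~\ref{Teleth} is quoted from Teleman \cite{Teleman} and used as a black box, with only the remark that the proof ``uses other difficult results on the cohomology ring $H^{\bullet}(\overline{\mathcal{M}}_{g,n},\mathbb{Q})$''. So there is no in-paper argument to compare yours against; I can only assess your sketch as a reconstruction of Teleman's proof. To your credit, you correctly identify the two genuinely correct structural points: the candidate $R$ does come from the semi-simple Frobenius manifold attached to the genus-$0$ data (the paper itself mentions the Riemann--Hilbert/Dubrovin construction under homogeneity assumptions), and the decisive external input in Teleman's argument is indeed the Mumford conjecture as proved by Madsen and Weiss.

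There are, however, two genuine gaps. First, your opening move is circular: ``the Givental group acts transitively on the set of semi-simple \textsc{cohft}s sharing a fixed degree-$0$ part'' is not an available fact from which the theorem can be deduced --- it \emph{is} the theorem (existence of a group element carrying $\Omega$ to $\Omega^{{\rm deg}\,0}$ is exactly transitivity on that orbit). Nothing in Section~\ref{Givgrou} asserts more than that $\hat{R}$ and $\hat{T}$ preserve the \textsc{cohft} axioms and act in a computable way on partition functions; surjectivity onto all semi-simple theories is the hard content. Second, the crux --- that the ``interior discrepancy'' vanishes --- is named but not argued, and the one sentence you offer (``splits the relevant obstruction into contributions that are forced to be exact'') is not a mechanism. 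Two specific issues hide there: (i) a class on $\overline{\mathcal{M}}_{g,n}$ whose pullback to every boundary divisor vanishes is not thereby ``pulled back from the interior''; one must show directly that such a class, constrained by the \textsc{cohft} axioms in the relevant cohomological degree, is zero; and (ii) the Madsen--Weiss theorem only describes \emph{stable} cohomology, whereas for fixed $(g,n)$ the relevant degrees are far outside the stable range. Teleman's actual argument uses Harer stability together with the semi-simple \textsc{tqft} structure (gluing on handles weighted by the Euler class $\sum_\alpha \Delta_\alpha^{-1}$, invertible by semi-simplicity) to transport the obstruction into the stable range before Mumford's conjecture can be applied. Without that transport step, invoking Madsen--Weiss does not close the argument.
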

Under extra homogeneity assumptions on the Frobenius manifold attached to the \textsc{cohft}, one can construct a canonical $R$ from the solution of a Riemann-Hilbert problem on $\mathbb{P}^1$ attached to the semi-simple Frobenius manifold \cite{Dubrovin,Giventals}.

\vspace{0.2cm}

\noindent $\bullet$ \textbf{Computation \textit{via} topological recursion.} Here we harvest the fruits of the previous discussions. Let $\Omega$ be a semi-simple \textsc{cohft} on $\mathcal{A}$, for which we assume that $R$ -- and thus $T$ -- allowing reconstruction by Theorem~\ref{Teleth} is known. By semi-simplicity, there exists a basis $(e_{\alpha})_{\alpha \in I_0}$  which we can choose orthonormal, and such that the product is of the form
$$
\forall \alpha,\beta \in I_0,\qquad e_{\alpha} \cdot e_{\beta} = \Delta_{\alpha}\,e_{\alpha}\,\delta_{\alpha,\beta}
$$
for some $\Delta_{\alpha} \in \mathbb{C}^*$. We denote $(e_{\alpha}^*)_{\alpha \in I_0}$ its dual basis.
Then
$$
\Omega_{g,n} = \hat{T}\hat{R}\bigg(\bigoplus_{\alpha} \Delta_{\alpha}^{2g - 2 + n} (e_{\alpha}^*)^{\otimes n} [1]\bigg)\,,
$$
which at the level of partition functions gives
$$ 
\mathfrak{Z} = \hat{T}\hat{R} \prod_{\alpha \in I_0} (\hat{\Delta}_{\alpha}\mathfrak{Z}_{{\rm WK}})\big((x_{\alpha,k})_{k \geq 0}\big)\,.
$$
We already know that $\hat{\Delta}Z_{{\rm WK}}$ is the partition function of the quantum Airy structure of Theorem~\ref{thAiry} with $\theta = \Delta^{-1}z^{-2}(\dd z)^{-1}$ taking into account the rescaling. The direct sum over $\alpha \in I_0$ of those quantum Airy structures is still a quantum Airy structure, which we denote $L^{{\rm deg}\,0}$, and its partition function is the \textsc{cohft} partition function of $\Omega^{{\rm deg}\,0}$. As the operations of Givental act on the \textsc{cohft} partition functions like the operations of quantum Airy structures do on the \textsc{tr} partition function, we deduce that the \textsc{tr} partition function of
$$ 
L =  U L^{{\rm deg}\,0} U^{-1},\qquad U = \hat{T}\hat{R} L^{{\rm deg}\,0}
$$
is the \textsc{cohft} partition function of $\Omega$. This result was first proved in \cite{DBOSS}, in the language of the original Eynard-Orantin topological recursion  -- see \textit{e.g.} \cite[Section 4.2]{ABO1} for an explicit description in this language of the initial data in terms of $R$.

\newpage
\lecture[1h]{11}{05}{2017}

\section{Volumes of the moduli space of curves}

\subsection{Teichm\"uller space of bordered surfaces}
\label{Dcount}

We review some relevant facts of hyperbolic geometry, whose proof can be found \textit{e.g.} in \cite{Buser}.

\vspace{0.2cm}

\noindent $\bullet$ \textbf{Definitions.} Let $\Sigma_{g,n}$ be a topological, compact, oriented surface of genus $g$ with $n$ boundaries $\beta_1,\ldots,\beta_n$. We assume $2g - 2 + n > 0$ and fix $L_1,\ldots,L_n \in \mathbb{R}_{> 0}$. The Teichm\"uller space of bordered surfaces is
$$ 
\mathcal{T}_{g,n}(L_1,\ldots,L_n) = \big\{{\rm Riemannian}\,\,{\rm metric}\,\,\sigma\,\,{\rm on}\,\,\Sigma_{g,n}\,\,:\,\,\ell_{\sigma}(\beta_i) = L_i\big\}\big/\sim\,,
$$ 
where $\sigma \sim \sigma'$ if they are related by a conformal transformation, \textit{i.e.} there exists a continuous function $f\,:\,\Sigma_{g,n} \rightarrow \mathbb{R}_{> 0}$ such that $\sigma = \exp(f)\sigma'$. Since $2g - 2 + n > 0$, in each conformal class there exists a unique representative which is a hyperbolic (\textit{i.e.} with constant curvature $-1$) metric such that the boundaries are geodesic. We always identify points in the Teichm\"uller space with their hyperbolic representative. If $\Sigma'_{g,n}$ is a surface with labeled boundaries and same topology, and $\varphi\,:\,\Sigma'_{g,n} \rightarrow \Sigma_{g,n}$ is a diffeomorphism, the Teichm\"uller space for $\Sigma'_{g,n}$ is canonically isomorphic to the one for $\Sigma_{g,n}$.

The mapping class group $\Gamma_{g,n}$ is the group of diffeomorphisms of $\Sigma_{g,n}$ to itself which preserve the labeling of the boundaries, modulo those which are isotopic to identity among such diffeomorphisms. The moduli space of bordered surfaces is by definition
$$
\mathcal{M}_{g,n}(L_1,\ldots,L_n) = \mathcal{T}_{g,n}(L_1,\ldots,L_n)/\Gamma_{g,n}\,.
$$
\vspace{0.2cm}

\noindent $\bullet$ \textbf{Hyperbolic pair of pants.} We have $\mathcal{T}_{0,3}(L_1,L_2,L_3) = \{{\rm pt\}}$. This means that if $L_1,L_2,L_3 > 0$ are fixed, there exists a unique hyperbolic pair of pants with labeled geodesic boundaries of lengths $L_1,L_2,L_3$. It is obtained by glueing two right-angled hyperbolic hexagons, related by an isometric involution (Figure~\ref{Fig5} later).

\vspace{0.2cm}

\noindent $\bullet$ \textbf{Collars.} We call collar of waist $\ell$ the cylinder $C(\ell) = [-w(\ell),w(\ell)] \times \mathbb{S}^1$ where
$$
w(\ell) = {\rm arcsinh}\Big(\frac{1}{{\rm sinh}(\frac{\ell}{2})}\Big)\,.
$$
We call standard collar $C(\ell)$ equipped with the metric
\beq
\label{mccollar} \dd r^2 + \ell^2{\rm cosh}^2(r) \dd t^2\,,
\eeq
where $(r,e^{2{\rm i}\pi t})$ are coordinates on $[-w(\ell),w(\ell)] \times \mathbb{S}^1$. This metric is hyperbolic, and the level curves of $r$ are geodesics.  If $\theta \in \mathbb{R}$, we define the diffeomorphism $\Phi_{\theta}\,:\,C(\ell) \rightarrow C(\ell)$ by
$$
\Phi_{\theta}\big(r,t\big) = \big(r,e^{2{\rm i}\pi(t + \frac{(r + w)\theta}{\ell})}\big)\,.
$$

Let $\sigma$ be a hyperbolic metric on $\Sigma_{g,n}$. If $\gamma$ is a simple closed curve in the interior of $\Sigma$, there exists a unique representative $\gamma'$ in its free homotopy class, which is a geodesic of minimal length. Besides, $\gamma'$ has a neighborhood $C(\ell_{\sigma}(\gamma'))$ which is isometric to the standard collar $C(\ell_{\sigma}(\gamma'))$ such that the image of $\gamma'$ is sent to the circle of equation $r = 0$. It is called ``the collar neighborhood'' of $\gamma'$.

If $\beta$ is a boundary component of $\Sigma_{g,n}$ equipped with a hyperbolic metric $\sigma$, it admits a neighborhood which is isometric to $C^+(w(\ell_{\sigma})) = [0,w(\ell_{\sigma})] \times \mathbb{S}^1$ equipped again with the metric \eqref{mccollar}. This can be deduced from the previous situation by doubling $\Sigma_{g,n}$ along the boundaries.

\vspace{0.2cm}

\noindent $\bullet$ \textbf{Fenchel-Nielsen coordinates.} Fix $ (\alpha_i)_{i = 1}^q$ a family of simple closed curves such that cutting along these curves gives a pair of pants decomposition of $\Sigma_{g,n}$, which we denote $\mathbf{P} = (P_j)_{j = 1}^{p}$. The image of $\alpha_i$ in the pairs of pants consists in two curves $\alpha_i^{+},\alpha_i^{-}$ together with a diffeomorphism $\phi_i\,:\,\alpha_i^{-} \rightarrow \alpha_i^+$ to identify them pointwise. As the Euler characteristic of $\Sigma_{g,n}$ is $2 - 2g - n$, and the Euler characteristic of a pair of pants is $-1$, we must have $p = 2g - 2 + n$ pairs of pants. The number of boundaries of these pairs of pants is $3p = n + 2q$, therefore $q = 3g - 3 + n$. We can define a map
$$
\begin{array}{ccc} (\mathbb{R}_{> 0} \times \mathbb{R})^{3g - 3 + n} & \longrightarrow & \mathcal{T}_{g,n}(L_1,\ldots,L_n) \\
(\ell_i,\theta_i)_{i = 1}^{3g - 3 + n} & \longmapsto & \sigma_{\mathbf{\ell},\theta}
\end{array}\,.
$$
$\sigma_{\ell,\theta}$ is obtained as follows. We first define a metric $\sigma_{\mathbf{\ell},\mathbf{0}}$ obtained by glueing the unique hyperbolic metrics on $P_j$ making $\partial P_j$ geodesic, and such that
$$
\ell_{\sigma_{\mathbf{\ell},\mathbf{0}}}(\alpha_i) = \ell_i\qquad  {\rm and} \qquad \ell_{\sigma_{\mathbf{\ell},\mathbf{0}}}(\beta_i) = L_i\,.
$$
Then, for each $i \in \{1,\ldots,3g - 3 + n\}$, in the collar neighborhoods $C(w_{\ell}(\alpha_i))$ of $\alpha_i$ for $\sigma_{\mathbf{\ell},\mathbf{0}}$, we replace $\sigma_{\mathbf{\ell},\mathbf{0}}$ by its image through the twist diffeomorphisms $\Phi_{\theta_i}$, and reuniformize so that we get a hyperbolic metric.

This map is in fact a diffeomorphism, and $(\mathbf{\ell},\theta)$ are called the Fenchel-Nielsen coordinates.  They depend on the pair of pants decomposition. Therefore, the real dimension of $\mathcal{T}_{g,n}(L_1,\ldots,L_n)$ is $6g - 6 + 2n$, and $\mathcal{M}_{g,n}(L_1,\ldots,L_n)$ has same dimension. This is still true when one extends the previous definitions to allow $L_i = 0$. In that case $\mathcal{M}_{g,n}(0,\ldots,0)$ is the moduli space of curves $\mathcal{M}_{g,n}$ introduced Section~\ref{WKpart}.

\vspace{0.2cm}

\noindent $\bullet$ \textbf{Action of the mapping class group.} $\Gamma_{g,n}$ acts transitively on the set of pairs of pants decompositions of $\Sigma_{g,n}$. Let ${\rm Stab}(\mathbf{P})$ the subgroup of $\Gamma_{g,n}$ which preserves the pair of pants decomposition $\mathbf{P}$. It is generated by the Dehn twists $\hat{\delta}_{i}$ along the curves $\alpha_i$ for $i \in \{1,\ldots,3g - 3 + n\}$. If $(g,n) \neq (1,1)$, $\hat{\delta}_{i}$ acts by $\theta_i \rightarrow \theta_i +  \ell_i$, leaving the other coordinates invariant. For the torus with one boundary, there is a single pair of pants in which two boundaries are glued together to form the curve $\alpha_1$, and the half-Dehn twist acting as $\theta_1 \rightarrow \theta_1 + \tfrac{\ell_1}{2}$ generates ${\rm Stab}(\mathbf{P})$.

\vspace{0.2cm}

\noindent $\bullet$ \textbf{Weil-Petersson form.} Given a pair of pants decomposition $\mathbf{P}$, one can define the Weil-Petersson symplectic form
\beq
\label{WPform}\omega_{{\rm WP}} = \sum_{i = 1}^{3g - 3 + n} \dd\ell_i \wedge \dd\theta_i\,.
\eeq
It is obviously invariant under ${\rm Stab}(\mathbf{P})$. By a theorem of Wolpert \cite{Wolpert1}, it is in fact invariant under $\Gamma_{g,n}$, therefore it defines canonically a symplectic form on $\mathcal{T}_{g,n}(L_1,\ldots,L_n)$, which descends to the moduli space $\mathcal{M}_{g,n}(L_1,\ldots,L_n)$.  The Weil-Petersson volume form is
\beq
\label{WPform2} \nu_{{\rm WP}} = \omega_{{\rm WP}}^{\wedge (3g - 3 + n)}\,.
\eeq
Basic estimates show that $\mathcal{M}_{g,n}(L_1,\ldots,L_n)$ has a finite volume for the Weil-Petersson form, which we denote $V_{g,n}(L_1,\ldots,L_n)$.

\subsection{Mirzakhani's recursion}

\noindent $\bullet$ \textbf{Main formulas.} By convention, we set $V_{0,1}(L) = V_{0,2}(L_1,L_2) = 0$. As the moduli space of bordered pairs of pants is always a point, we have
\beq
\label{V03} V_{0,3}(L_1,L_2,L_3) = 1\,.
\eeq
Early computations of Weil-Petersson volumes for low Euler characteristic can be found in \cite{Penner}. In particular, one can describe explicitly the moduli space of tori with one boundary, and one finds
\beq
\label{V11} V_{1,1}(L) = \zeta(2) + \tfrac{L^2}{24} = \tfrac{\pi^2}{6} + \tfrac{L^2}{24}\,.
\eeq
This serves as initial data for the recursion found by Mirzakhani \cite{Mirza1}. We introduce auxiliary functions
\bea
f(z) & = & -2 \ln(1 + e^{-z/2})\,, \nonumber \\
\mathcal{B}(L_1,L_2,L_3) & = & \tfrac{L_3}{2L_1}\big(f(L_3 + L_2 + L_1) - f(L_3 + L_2 - L_1) \nonumber \\
&& \quad\quad + f(L_3 - L_2 + L_1) - f(L_3 - L_2 - L_1)\big)\,, \nonumber \\
\mathcal{C}(L_1,L_2,L_3) & = & \tfrac{L_2L_3}{L_1}\big(f(L_3 + L_2 + L_1) - f(L_3 + L_2 - L_1)\big)\,. \nonumber \\
\label{Ccal}
\eea

\begin{theorem}
\label{thMirza} For $2g - 2 + n \geq 2$, we have
\bea
&& V_{g,n}(L_1,\ldots,L_n) \nonumber \\
& = & \sum_{m = 2}^{n} \int_{\mathbb{R}_{> 0}} \dd \ell\,\mathcal{B}(L_1,L_m,\ell) V_{g,n - 1}(\ell,L_2,\ldots,\widehat{L_m},\ldots,L_n) \nonumber \\
& & + \tfrac{1}{2} \int_{\mathbb{R}_{> 0}^2} \dd \ell \dd \ell'\,\mathcal{C}(L_1,\ell,\ell')\bigg(V_{g - 1,n + 1}(\ell,\ell',L_2,\ldots,L_n) + \nonumber \\
&& \phantom{ \tfrac{1}{2} \int_{\mathbb{R}_{> 0}^2} \dd \ell \dd \ell'\,\mathcal{C}(L_1,\ell,\ell')} + \sum_{\substack{g_1 + g_2 = g \\ J_1 \dot{\cup} J_2 = \{L_2,\ldots,L_n\}}} \!\!\!\!\!\!\! V_{g_1,1+|J_1|}(\ell,J_1)V_{g_2,1+|J_2|}(\ell',J_2)\bigg)\,. \nonumber
\eea
\end{theorem}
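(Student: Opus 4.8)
The plan is to follow Mirzakhani's original strategy: establish an exact geometric identity valid on every hyperbolic surface, integrate it over the moduli space against $\nu_{{\rm WP}}$, and then ``unfold'' the resulting sum over the mapping class group into integrals over moduli spaces of strictly lower complexity, where the recursion hypothesis furnishes the volumes $V_{g',n'}$.

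First I would establish the McShane--Mirzakhani identity. Fix a hyperbolic surface $X$ representing a point of $\mathcal{M}_{g,n}(L_1,\ldots,L_n)$ and focus on the boundary $\beta_1$. For each point $p \in \beta_1$ shoot the geodesic ray orthogonal to $\beta_1$; generically this ray self-intersects or returns to $\partial X$, and the first such event determines an embedded pair of pants $P$ one of whose cuffs is $\beta_1$. The set of $p$ whose ray stays simple forever has measure zero, so $\beta_1$ decomposes, up to measure zero, into arcs indexed by the embedded pants $P \ni \beta_1$. The two other cuffs of such a $P$ are either (a) another boundary $\beta_i$ together with an interior simple closed geodesic $\gamma$, or (b) two interior simple closed geodesics $\gamma_1,\gamma_2$. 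Measuring each arc by right-angled hexagon trigonometry produces explicit kernels $\mathcal{D}$ and $\mathcal{R}$ and the identity
$$
L_1 = \sum_{\{\gamma_1,\gamma_2\}} \mathcal{D}\big(L_1,\ell_X(\gamma_1),\ell_X(\gamma_2)\big) + \sum_{i = 2}^n \sum_{\gamma} \mathcal{R}\big(L_1,L_i,\ell_X(\gamma)\big),
$$
where the first sum runs over unordered pairs bounding a pants with $\beta_1$, and the second over single geodesics bounding a pants with $\beta_1$ and $\beta_i$.

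Next I would integrate this identity against $\nu_{{\rm WP}}$ over $\mathcal{M}_{g,n}(L_1,\ldots,L_n)$, giving $L_1 V_{g,n}(L)$ on the left. On the right, each inner sum runs over a single orbit of $\Gamma_{g,n}$ acting on the relevant isotopy class(es) of cutting curve(s). The decisive step is the unfolding: for a fixed topological type of cut, the sum-integral over $\mathcal{M}_{g,n}$ equals a single integral over the intermediate cover $\mathcal{M}_{g,n}^{\gamma} = \mathcal{T}_{g,n}(L_1,\ldots,L_n)/{\rm Stab}(\gamma)$ in which the cutting geodesic is marked. Cutting along the marked geodesic(s) exhibits $\mathcal{M}_{g,n}^{\gamma}$ as fibering over the product of moduli spaces of the cut surface(s), with the Fenchel--Nielsen length(s) and twist(s) of the cut curve(s) as fiber coordinates. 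Since $\omega_{{\rm WP}} = \sum_i \dd\ell_i \wedge \dd\theta_i$ by \eqref{WPform} and the Dehn twist identifies $\theta \sim \theta + \ell$, integrating out each twist contributes a factor $\ell$, while integrating out the length reproduces a volume of the cut surface. This turns the type-(a) contribution into $\sum_{m=2}^n \int \mathcal{R}(L_1,L_m,\ell)\,\ell\,V_{g,n-1}\,\dd\ell$ and the type-(b) contribution into $\tfrac{1}{2}\int\!\!\int \mathcal{D}(L_1,\ell,\ell')\,\ell\,\ell'\big(V_{g-1,n+1} + \sum_{g_1+g_2=g,\ J_1 \dot{\cup} J_2} V_{g_1}V_{g_2}\big)\,\dd\ell\,\dd\ell'$, where the $\tfrac{1}{2}$ and the symmetric sum over $(g_1,J_1),(g_2,J_2)$ record the stabilizer of the unordered pair $\{\gamma_1,\gamma_2\}$ and the possible disconnection of the cut surface.

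Finally I would match kernels. Differentiating the resulting identity in $L_1$ removes the prefactor $L_1$ on the left, and on the right converts $\ell\,\mathcal{R}$ and $\ell\,\ell'\,\mathcal{D}$ into the kernels appearing in the theorem; a direct computation with $f(z) = -2\ln(1+e^{-z/2})$, using the elementary antiderivative relations satisfied by $\partial_{L_1}\mathcal{R}$ and $\partial_{L_1}\mathcal{D}$, identifies them with $\mathcal{B}$ and $\mathcal{C}$ of \eqref{Ccal}. The main obstacle is the first step: the McShane--Mirzakhani identity is the genuinely hard geometric input, requiring the classification of the orthogeodesic behaviour, the vanishing of the exceptional (Cantor-like) set, and the explicit hexagon trigonometry that pins down $\mathcal{D}$ and $\mathcal{R}$; justifying the unfolding is the second delicate point, demanding care with the coarea formula and with the precise stabilizers, since any error there corrupts exactly the combinatorial factor $\tfrac{1}{2}$ and the sum over $J_1 \dot{\cup} J_2$.
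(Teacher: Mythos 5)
Your proposal is Mirzakhani's proof, and it coincides step for step with the route taken in the paper: the orthogeodesic foliation of $\beta_1$, the Birman--Series argument killing the exceptional set, the decomposition of $\beta_1$ into arcs indexed by embedded pairs of pants, the generalized McShane identity, the unfolding of each mapping-class-group orbit to an integral over $\mathcal{T}_{g,n}/{\rm Stab}$, the factor $\ell$ from each twist integral, the factor $\tfrac12$ from the unordered pair $\{\alpha,\alpha'\}$, and the hexagon/quadrilateral trigonometry pinning down the arc-length functions.

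There is, however, one concrete error in your last step. You propose to pass from $L_1 V_{g,n}(L_1,\ldots,L_n) = \sum \int \ell\,\mathcal{R}\,V_{g,n-1} + \cdots$ to the stated recursion by \emph{differentiating} in $L_1$, claiming this ``removes the prefactor $L_1$ on the left''. It does not: since $V_{g,n}$ depends on $L_1$, you get $\partial_{L_1}(L_1 V_{g,n}) = V_{g,n} + L_1\partial_{L_1}V_{g,n}$, and on the right the kernels become $\ell\,\partial_{L_1}\mathcal{R}$ and $\ell\ell'\,\partial_{L_1}\mathcal{D}$, which involve $f'$ rather than $f$. That differentiated identity is a legitimate alternative form of Mirzakhani's recursion (the one with the kernel $H(x,y)$ in her original paper), but it is not the statement of Theorem~\ref{thMirza}, whose kernels $\mathcal{B},\mathcal{C}$ in \eqref{Ccal} carry explicit $1/L_1$ prefactors and undifferentiated $f$'s. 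The fix is immediate and is what the paper does: divide the McShane identity \eqref{mccc} by $L_1$ before (or after) integrating, so that the left-hand side is $V_{g,n}$ on the nose and the kernels $\tfrac{\ell}{L_1}\big(L_1 - d(L_1,L_m,\ell)\big)$ and $\tfrac{2\ell\ell'}{L_1}\big(L_1 - d(L_1,\ell,\ell') - d(L_1,\ell',\ell)\big)$ are exactly $\mathcal{B}$ and $\mathcal{C}$ of \eqref{Ccal}. The rest of your outline, including the identification of the two delicate points (the geometric identity and the unfolding with its stabilizers), is accurate.
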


\vspace{0.2cm}

\noindent $\bullet$ \textbf{Correspondence with the topological recursion.} This recursion has the same structure as \textsc{tr} (remember Section~\ref{Section1}). The correspondence was first established by Eynard and Orantin \cite{EOwp} using the language of \cite{EOFg} with residues. Here, we will describe it in the language of quantum Airy structures. We put $V^* = \mathbb{C}[L^2]$, so that linear coordinates on $V$ are $x_i = L^{2i}$ indexed by $i \in I = \mathbb{N}$. Let us define
$$
\theta_k = \zeta(2k + 2)(2^{2k + 3} - 4),\qquad k \geq -1\,.
$$
A generating series for this sequence of numbers is
$$
\theta(z) = \sum_{k \geq -1} \theta_i\,z^k (\dd z)^{-1} = \frac{\pi z}{\sin(2\pi z)\dd z}\,.
$$
\begin{proposition}
\label{VQ}The following data
\bea
A^i_{j,k} & = & \delta_{i,j,k,0}\theta_{-1}\,, \nonumber \\
B^i_{j,k} & = & \frac{(2k + 1)!}{(2i + 1)!(2j + 1)!}\,(2j + 1)\,\theta_{k - j - i}\,, \nonumber \\
C^i_{j,k} & = & \frac{(2j + 1)!(2k + 1)!}{(2i + 1)!}\,\theta_{k + j + 1 - i}\,, \nonumber \\
D^i & = & \tfrac{\pi^2}{6}\,\delta_{i,0} + \tfrac{1}{24}\,\delta_{i,1}\,, \nonumber
\eea
is quantum Airy structure, and its amplitudes are such that
\beq
\label{VTR} V_{g,n}(L_1,\ldots,L_n) = \sum_{k_1,\ldots,k_n \geq 0} F_{g,n}[k_1,\ldots,k_n] L_{1}^{2k_1}\cdots L_n^{2k_n}\,.
\eeq
\end{proposition}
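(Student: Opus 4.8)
The plan is to prove the two assertions separately: first that $(A,B,C,D)$ satisfies the six defining relations of a quantum Airy structure, and then that the resulting amplitudes reproduce the Weil--Petersson volumes. For the first assertion I would reduce almost everything to Theorem~\ref{thAiry}. Indeed, the tensors $A,B,C$ above are obtained from those of Theorem~\ref{thAiry}, with $\theta$ the admissible generating series $\theta_k = \zeta(2k+2)(2^{2k+3}-4)$, simply by replacing each double factorial $(2k+1)!!$ with the ordinary factorial $(2k+1)!$. Comparing with the change-of-basis formulas of Section~\ref{Sop}, this substitution is exactly the effect of the diagonal isomorphism $\Phi = \mathrm{diag}(\phi_i)$ with $\phi_i = (2i+1)!!/(2i+1)!$, which multiplies $A^i_{j,k}$ by $\phi_i\phi_j\phi_k$, multiplies $B^i_{j,k}$ by $\phi_i\phi_j\phi_k^{-1}$, and multiplies $C^i_{j,k}$ by $\phi_i\phi_j^{-1}\phi_k^{-1}$. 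Since change of basis carries quantum Airy structures to quantum Airy structures and Theorem~\ref{thAiry} holds for every admissible $\theta$, the five relations \eqref{fAeq} and \eqref{ABCeq} hold automatically for the present $(A,B,C)$.

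It then remains to verify the single relation \eqref{Deq} for the stated $D^i = \tfrac{\pi^2}{6}\delta_{i,0} + \tfrac{1}{24}\delta_{i,1}$. This relation is affine in $D$, and because $\theta_r = 0$ for $r \leq -2$ truncates the sums, for each fixed pair $(i,j)$ only finitely many low-index values of $A,B,C$ enter; I would therefore check \eqref{Deq} directly, the pair $(i,j)=(0,1)$ for instance reducing to a finite identity among $\tfrac{\pi^2}{6}$, $\tfrac{1}{24}$ and $\theta_0 = \tfrac{2\pi^2}{3}$, which one confirms at once. One may also note that the induced Lie algebra is perfect (it contains $\mathcal{L}_{-1}$ and $\mathcal{L}_0$ because $\theta_{-1}\neq 0$), so \eqref{Deq} has at most one solution, and the direct check confirms that the geometrically natural $D$ encoding $V_{1,1}$ is that solution.

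With the quantum Airy structure established, Theorem~\ref{MainTh} produces well-defined symmetric amplitudes $F_{g,n}$, and I would prove \eqref{VTR} by induction on $2g-2+n$, setting $\hat V_{g,n}(L_1,\dots,L_n) = \sum_{k_1,\dots,k_n} F_{g,n}[k_1,\dots,k_n]\,L_1^{2k_1}\cdots L_n^{2k_n}$. The base cases are immediate: $A^i_{j,k} = \delta_{i,j,k,0}$ gives $\hat V_{0,3}=1=V_{0,3}$ by \eqref{V03}, and $F_{1,1}[i]=D^i$ gives $\hat V_{1,1}(L) = \tfrac{\pi^2}{6}+\tfrac{L^2}{24} = V_{1,1}(L)$ by \eqref{V11}. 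For the inductive step I would multiply the topological recursion \eqref{TReq2} by $L_1^{2i}L_2^{2i_2}\cdots L_n^{2i_n}$ and sum over all indices, turning it into an identity between generating series; matching this with Mirzakhani's recursion (Theorem~\ref{thMirza}) term by term reduces the whole claim to the two integral identities
\bea
\int_0^\infty \mathcal{B}(L_1,L_2,\ell)\,\ell^{2a}\,\dd\ell & = & \sum_{i,j\geq 0} B^i_{j,a}\,L_1^{2i}L_2^{2j}\,, \nonumber \\
\int_0^\infty\!\!\int_0^\infty \mathcal{C}(L_1,\ell,\ell')\,\ell^{2a}\ell'^{2b}\,\dd\ell\,\dd\ell' & = & \sum_{i\geq 0} C^i_{a,b}\,L_1^{2i}\,, \nonumber
\eea
in which the glued lengths $\ell,\ell'$ play the role of the summed indices $a,b$.

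The main obstacle is establishing these two identities, which is the analytic heart of the correspondence. Both reduce, after integrating by parts to exploit the prefactors in \eqref{Ccal}, to the odd moments of $f'(z) = (1+e^{z/2})^{-1}$: expanding $f'(z) = \sum_{m\geq 1}(-1)^{m+1}e^{-mz/2}$ and integrating termwise gives $\int_0^\infty z^{2k+1} f'(z)\,\dd z = \tfrac{(2k+1)!}{2}\,\theta_k$, the Dirichlet-eta factor $2^{2k+2}\eta(2k+2)$ collapsing precisely to $\tfrac12\zeta(2k+2)(2^{2k+3}-4)$. The remaining work is bookkeeping: expanding the shifted arguments $\ell\pm L_1\pm L_2$ of $f$ by the binomial theorem, collecting the resulting even polynomial in $L_1,L_2$, and recognizing the factorial prefactors $\tfrac{(2a+1)!}{(2i+1)!(2j+1)!}$ and $\tfrac{(2a+1)!(2b+1)!}{(2i+1)!}$ so as to reproduce exactly the stated $B$ and $C$. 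The vanishing $\theta_r=0$ for $r\leq -2$ guarantees that only finitely many terms survive, so the right-hand sides are genuine polynomials, matching the polynomiality of Weil--Petersson volumes. This is the quantum Airy structure repackaging of the Laplace-transform step in the Eynard--Orantin proof \cite{EOwp}.
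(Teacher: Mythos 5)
Your proposal is correct and follows essentially the same route as the paper: the quantum Airy structure property is obtained from Theorem~\ref{thAiry} by the diagonal rescaling $(2i+1)!!\leftrightarrow(2i+1)!$ (the isomorphism $L^{2i}\mapsto 2^i i!\,e_i$), and \eqref{VTR} is proved by matching Mirzakhani's recursion with \eqref{TReq2}, which reduces to the moments of $\mathcal{B}$ and $\mathcal{C}$ and ultimately to $\int_0^\infty z^{2k+1}(1+e^{z/2})^{-1}\,\dd z$ evaluated via the Dirichlet eta function --- exactly the paper's computation of $H_k'(0)$. The only differences are organizational: the paper packages the moments into the odd polynomial $H_k$ determined by the recursion $H_k''=2k(2k+1)H_{k-1}+\delta_{k,0}x$ where you propose direct binomial expansion, and your explicit verification of \eqref{Deq} for $D$ (with the uniqueness remark via perfectness of the Lie algebra) is a welcome addition that the paper leaves implicit by simply matching $V_{1,1}$.
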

In fact, it is a special case of the quantum Airy structure of Theorem~\ref{thAiry} provided we use the isomorphism
$$
\begin{array}{ccc} \mathbb{C}[L^2] & \longrightarrow & z.\mathbb{C}[[z^2]] \\
L^{2i} & \longmapsto & 2^ii!\,e_i \end{array}\,,
$$
where $e_i = \tfrac{z^{2i + 1}}{(2i + 1)!!}$ is the reference basis in the target vector space.

\vspace{0.2cm}

\noindent $\bullet$ \noindent \textbf{Proof of Proposition~\ref{VQ}.} To apply~\ref{thMirza}, we need to compute the auxiliary integral
$$
H_{k}(x) = \int_{\mathbb{R}_{> 0}} \dd \ell\,\ell^{2k + 1}\big(f(\ell + x) - f(\ell - x)\big)\,.
$$
\begin{lemma}
For any integer $k \geq 0$, $H_k$ is an odd polynomial of degree $2k + 3$
$$
H_k(x) = (2k + 1)! \sum_{i = 0}^{k + 1} \theta_{i - 1}\,\frac{x^{2k + 3 - 2i}}{(2k + 3 - 2i)!} \,.
$$
\end{lemma}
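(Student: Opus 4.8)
The plan is to reduce the computation of $H_k$ to a second-order differential recursion in the variable $x$, which collapses the transcendental function $f$ into polynomials. The two facts about $f(z) = -2\ln(1 + e^{-z/2})$ that I would record first are $f'(z) = \tfrac{1}{1 + e^{z/2}}$ (so $f$ and $f'$ decay exponentially as $z \to +\infty$) and the exact identity $f(x) - f(-x) = x$, which follows from $\tfrac{1 + e^{x/2}}{1 + e^{-x/2}} = e^{x/2}$. The oddness of $H_k$ in $x$ is immediate from the substitution $x \mapsto -x$ in the defining integral.

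Next I would differentiate twice under the integral sign, using $\partial_x f(\ell \pm x) = \pm f'(\ell \pm x)$, to get $H_k''(x) = \int_0^\infty \ell^{2k+1}\big(f''(\ell + x) - f''(\ell - x)\big)\,\dd\ell$. The heart of the argument is then two integrations by parts in $\ell$: writing $f'' = \partial_\ell f'$ and $f' = \partial_\ell f$, and noting that all boundary terms vanish (by exponential decay of $f, f'$ at $+\infty$ and by the factor $\ell^{2k+1}$, resp.\ $\ell^{2k}$, at $0$), one obtains for $k \geq 1$ the recursion $H_k''(x) = 2k(2k+1)\,H_{k-1}(x)$. For the base case $k = 0$, the same manipulation gives $H_0''(x) = f(x) - f(-x) = x$, so that $H_0(x) = \tfrac{x^3}{6} + c\,x$ with the constant term killed by oddness.

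To pin down the single remaining integration constant at each step --- the coefficient of $x$ --- I would evaluate $H_k'(0) = 2\int_0^\infty \ell^{2k+1} f'(\ell)\,\dd\ell$. Substituting $\ell = 2u$ turns this into $2^{2k+3}\int_0^\infty \tfrac{u^{2k+1}}{1 + e^u}\,\dd u = 2^{2k+3}(2k+1)!\,(1 - 2^{-(2k+1)})\zeta(2k+2)$ via the Dirichlet eta integral $\int_0^\infty \tfrac{u^{s-1}}{e^u + 1}\,\dd u = \Gamma(s)(1 - 2^{1-s})\zeta(s)$, and $2^{2k+3}(1 - 2^{-(2k+1)}) = 2^{2k+3} - 4$ gives exactly $H_k'(0) = (2k+1)!\,\theta_k$. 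Finally I would verify that the proposed closed form satisfies the same three constraints --- the recursion $H_k'' = 2k(2k+1)H_{k-1}$, oddness, and $H_k'(0) = (2k+1)!\,\theta_k$ --- which by uniqueness of a second-order ODE solution with these data completes the induction. (Checking that the stated polynomial obeys the recursion is a one-line index shift: under $\partial_x^2$ its linear ($i = k+1$) term drops out and the rest reproduces $(2k+1)!\sum_{i=0}^{k}\theta_{i-1}x^{2k+1-2i}/(2k+1-2i)! = 2k(2k+1)H_{k-1}$, using $2k(2k+1)(2k-1)! = (2k+1)!$.)

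The main obstacle is not conceptual but bookkeeping: one must justify differentiation under the integral and the vanishing of every boundary term uniformly in $x$, and then match the numerical constant $2^{2k+3} - 4$ produced by the eta integral against the definition of $\theta_k$. Once the recursion and the single moment $H_k'(0)$ are in hand, the polynomiality and the exact coefficients follow formally.
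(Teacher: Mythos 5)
Your proposal is correct and follows essentially the same route as the paper: the double integration by parts yielding $H_k'' = 2k(2k+1)H_{k-1} + \delta_{k,0}\,x$, the evaluation of $H_k'(0)$ via the alternating zeta series (you invoke the Dirichlet eta integral where the paper expands the geometric series into $\zeta_{\rm even}$ and $\zeta_{\rm odd}$, which is the same computation), and assembly by oddness plus the second-order recursion. Your value $H_k'(0) = (2k+1)!\,\theta_k$ is the one consistent with the stated formula, and your separate treatment of the $k=0$ boundary term is handled correctly.
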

\begin{proof} We first remark that $f(z) - f(-z) = z$. Integrating twice by parts we compute
$$
H_{k}''(x) = 2k(2k + 1)H_{k - 1}(x) + \delta_{k,0}x\,,
$$
which is valid for any $k \geq 0$ with the convention $H_{-1}(x) = 0$. Therefore, for any $i \leq k$
$$
H_{k}^{(2i + 1)}(0) = \frac{(2k + 1)!}{(2k + 1 - 2i)!}\,H_{k - i}'(0)\,,
$$
and we deduce that $H_k$ is an odd polynomial of degree $2k + 3$ with leading coefficient $(2k + 1)!x^{2k + 3}$, given by the Taylor sum
\beq
\label{HKsum} H_k(x) = (2k + 1)!\,x^{2k + 3} + \sum_{i = 0}^{k} \frac{(2k + 1)!}{(2k + 1 - 2i)!}\,H_{k - i}'(0)\,\frac{x^{2i + 1}}{(2i + 1)!}\,.
\eeq
On the other hand, we can compute directly for any $k \geq 0$
\bea 
H_{k}'(0) & = & 2\int_{\mathbb{R}_{> 0}} \frac{\dd \ell\,\ell^{2k + 1}}{1 + \exp(\frac{\ell}{2})} \nonumber \\
& = & \sum_{m \geq 1} (-1)^{m + 1} \int_{\mathbb{R}_{> 0}} \dd \ell\,\ell^{2k + 1}\,\exp(-\tfrac{m\ell}{2}) \nonumber \\
& = & (2k + 1)! \sum_{m \geq 1} (-1)^{m + 1}\,2\,(\tfrac{2}{m})^{2k + 2} \nonumber  \\
& = & (2k + 1)!\,2^{2k + 3}\big(-\zeta_{{\rm even}}(2k + 2) + \zeta_{{\rm odd}}(2k + 2)\big)\,, \nonumber
\eea 
where
$$
\zeta_{{\rm even}}(s) = \sum_{n \geq 1} (2n)^{-s},\qquad \zeta_{{\rm odd}}(s) = \sum_{n \geq 0} (2n + 1)^{-s}
$$ 
We have the relations $\zeta_{{\rm even}}(s) = 2^{-s}\zeta(s)$ and $\zeta_{{\rm odd}}(s) = \zeta(s) - \zeta_{{\rm even}}(s)$, therefore
$$
H_{k}'(0) = (2k + 1)!(2^{2k + 3} - 4)\zeta(2k + 2) = \theta_{k}\,.
$$
Besides, we notice that the first term in \eqref{HKsum} is $(2k + 1)! = (2k + 1)!\theta_{-1}$, therefore it can be included in the sum as the term $i = k + 2$.
\end{proof}
We then have
\beq
\int_{\mathbb{R}_{> 0}} \dd\ell\,\mathcal{B}(L_1,L_2,\ell)\,\ell^{2k} = \tfrac{1}{2L_1}\big(H_{k}(L_1 + L_2) - H_{k}(L_1 - L_2)\big)\,, \nonumber
\eeq
\bea
\int_{\mathbb{R}_{> 0}^2} \dd \ell \dd \ell' \,\mathcal{C}(L,\ell,\ell')\,\ell^{2j} (\ell')^{2k} & = & \frac{1}{L_1} \int \dd\ell\dd\ell'\big(f(\ell + \ell' + L_1) - f(\ell + \ell' - L_1)\big) \nonumber \\
& = & \frac{(2j + 1)!(2k + 1)!}{(2j + 2k + 3)!}\,\frac{H_{j + k + 1}(L)}{L}\,. \nonumber
\eea
In the last line, we used the change of variable $(\ell,\ell') \rightarrow (\ell + \ell',\ell)$, and the prefactor of $H_{j + k + 1}$ in the second line arises from the integral over the second variable which gives the Euler beta integral. We deduce recursively from Theorem~\ref{thMirza} that $V_{g,n}(L_1,\ldots,L_n) \in \mathbb{C}[L_1^2,\ldots,L_n^2]$, and comparing with the definition of the \textsc{tr} amplitudes, that \eqref{VTR} holds for $2g - 2 + n > 0$ with
\bea
B^i_{j,k} & = & [L_1^{2i} L_2^{2j}]\,\,\tfrac{1}{2L_1}\big(H_{k}(L_1 + L_2) - H_{k}(L_1 - L_2)\big)\,, \nonumber \\
C^i_{j,k} & = & [L^{2i}]\,\,\frac{(2j + 1)!(2k + 1)!}{(2j + 2k + 3)!}\,\frac{H_{j + k + 1}(L_1)}{L_1}\,. \nonumber
\eea
$(A,D)$ are identified by comparison with \eqref{V03}-\eqref{V11} so that \eqref{VTR} holds for $(g,n) = (0,3)$ and $(1,1)$. The fact it is a quantum Airy structure follows by comparison with the quantum Airy structure in Theorem~\ref{thAiry}.

\subsection{Sketch of the proof of Theorem~\ref{thMirza}}

We now explain some steps of the proof of the main recursion, following Mirzakhani. We assume $2g - 2 + n \geq 2$, and fix $\sigma \in \mathcal{T}_{g,n}(L_1,\ldots,L_n)$.

\vspace{0.2cm}

\noindent $\bullet$ \textbf{Partition of $\beta_1$.} We are going to partition $\beta_1$ into pieces depending on the behavior of the geodesic $\gamma_{x}$ issueing from $x \in \beta_1$ orthogonally to $\beta_1$, and write
$$
L_1 = \sum_{{\rm pieces}} \ell_{\sigma}({\rm piece})\,.
$$

\begin{definition}
We say that $\gamma_{x}$ is an orthogeodesic if it is a simple curve which is, at any point of intersection with the boundary, orthogonal to this boundary. We denote $E$ the set of $x \in \beta_1$ for which $\gamma_{x}$ is an orthogeodesic.
\end{definition}

\begin{lemma}
\label{0mes} $E$ has zero Hausdorff dimension in $\beta_1$.
\end{lemma}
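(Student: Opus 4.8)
The plan is to prove the statement by reducing it to the sparseness of simple geodesics. Observe first that every orthogeodesic is in particular a \emph{simple} curve, so if $x \in E$ then the perpendicular geodesic $\gamma_x$ has no self-intersection. Hence $E \subseteq S$, where $S = \{x \in \beta_1 : \gamma_x \text{ has no self-intersection}\}$, and it suffices to show $\dim_H S = 0$. (The countably many $x$ for which $\gamma_x$ is a finite perpendicular arc joining two boundary points contribute a dimension-zero set trivially, so the real content is the set of $x$ whose ray remains simple for all time.)

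First I would fix a symbolic coding of perpendicular rays. Choose a finite system $\Gamma$ of disjoint simple closed geodesics cutting $\Sigma_{g,n}$ into pairs of pants, subdivided if convenient into right-angled polygonal pieces, and to each $x \in \beta_1$ attach the itinerary $\omega(x) = (s_1, s_2, \ldots)$ recording the ordered sequence of pieces traversed by $\gamma_x$. The key soft input is expansivity: two rays sharing a common depth-$n$ itinerary stay uniformly close over a hyperbolic length growing linearly in $n$, and since geodesics in curvature $-1$ diverge exponentially, the set $I^{(n)}_\omega \subseteq \beta_1$ of points with a prescribed depth-$n$ itinerary is an arc of length at most $C\,e^{-cn}$, with $C,c > 0$ depending only on $\Sigma_{g,n}$ and $\Gamma$.

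The decisive input is the counting estimate that the number $N(n)$ of distinct depth-$n$ itineraries realized by \emph{simple} rays grows at most polynomially in $n$. This is precisely the sparseness phenomenon for simple geodesics established by Birman and Series: simple geodesics admit only severely restricted symbolic patterns, so $N(n) \le P(n)$ for some polynomial $P$. With this in hand the covering argument is immediate. The arcs $\{I^{(n)}_\omega\}$, taken over simple depth-$n$ itineraries $\omega$, cover $S$; there are at most $P(n)$ of them, each of length at most $Ce^{-cn}$. Therefore, for every $s > 0$,
\[
\mathcal{H}^s(S) \le \liminf_{n \to \infty} P(n)\,(Ce^{-cn})^s = C^s \liminf_{n \to \infty} P(n)\,e^{-scn} = 0,
\]
since a polynomial times an exponentially decaying factor tends to $0$. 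As $s>0$ is arbitrary, $\dim_H S = 0$, and hence $\dim_H E = 0$.

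The main obstacle is the counting bound $N(n) \le P(n)$: the expansivity and Hausdorff-covering steps are standard and essentially formal, but controlling how many itineraries a simple geodesic can exhibit is exactly the Birman--Series theorem, whose proof is the genuinely hard ingredient. I would import it as a black box rather than reprove it, so that the content of this lemma becomes a clean consequence of exponential contraction plus subexponential counting.
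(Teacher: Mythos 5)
Your argument is correct in substance but follows a genuinely different route from the paper. The paper quotes the \emph{conclusion} of Birman--Series -- the union of all complete simple geodesics on a hyperbolic surface has Hausdorff dimension $1$ as a subset of the surface -- and then slices: the union of the rays $\gamma_x$, $x\in E$, restricted to the half-collar of $\beta_1$ is (in Fermi coordinates) bi-Lipschitz to a product $E\times[0,w]$, and since $\dim_H(E\times[0,w])\geq \dim_H E+1$ while the ambient set has dimension at most $1$, one gets $\dim_H E=0$. You instead re-run the Birman--Series \emph{mechanism} directly on the circle $\beta_1$: symbolic coding of perpendicular rays, exponential contraction of the cylinder sets $I^{(n)}_\omega$ (which follows from convexity/divergence of geodesics orthogonal to a common geodesic in curvature $-1$), and the polynomial bound on the number of admissible simple itineraries, imported as a black box. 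Your route avoids the product-slicing step but requires opening the Birman--Series proof to extract the counting lemma rather than its headline theorem; the paper's route is shorter but leans on the (unproved there) assertion that dimension drops by exactly one when intersecting with $\beta_1$. Both are legitimate, and the hard input is the same in either case.

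One imprecision you should fix: with $\gamma_x$ the maximal perpendicular ray (stopped when it hits $\partial\Sigma_{g,n}$), the set $S=\{x:\gamma_x\ \text{has no self-intersection}\}$ contains every $x$ whose ray reaches the boundary quickly without self-intersecting; this is an open set of positive measure, so the literal reduction ``it suffices to show $\dim_H S=0$'' targets a false statement. Your parenthetical shows you mean the correct decomposition, namely $E\subseteq(\text{endpoints of simple orthogeodesic arcs, a countable set})\cup\{x:\gamma_x\ \text{stays in the interior forever and remains simple}\}$, and it is only the second set to which the coding and covering argument is applied. State it that way and the proof goes through.
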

\begin{proof} By a result of Birman and Series \cite{BirmanSeries}, the union of all simple closed geodesics on a hyperbolic surface is a set of Hausdorff dimension $1$. \textit{A fortiori}, the intersection $E'$ of the half-collar neighborhood of $\beta_1$ with the union of all $\gamma_{x}$ for $x \in E$, has Hausdorff dimension $1$. Given the behavior of geodesics in the half-collar geometry, this implies that $E' \cap \beta_1 = E$ has zero Hausdorff dimension in $\beta_1$.
\end{proof}

In particular, $E$ has no interior points. Mirzakhani then established a classification of the points in $E$ depending on the behavior of the orthogeodesics, which we will not justify here.
\begin{definition}
If $\gamma,\gamma'$ are two simple curves which have no intersection, we say that $\gamma$ spirals around $\gamma'$ if any $x \in \gamma'$ is a limit point of $\gamma$.
\end{definition}
\begin{lemma}
\label{clasgeod} A point $x \in E$ is
\begin{itemize}
\item[$\bullet$] isolated if and only if $\gamma_{x}$ meets a boundary of $\Sigma_{g,n}$ (in that case it meets it orthogonally) or spirals around a boundary of $\Sigma_{g,n}$.
\item[$\bullet$] is a limit point if and only if $\gamma_{x}$ spirals around a simple closed curve in the interior of $\Sigma_{g,n}$.
\end{itemize}
\end{lemma}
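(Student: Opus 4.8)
The plan is to study the geodesic ray $\gamma_x$ through the geodesic flow and through the theory of geodesic laminations. First I would record that $\gamma_x$ depends continuously on $x \in \beta_1$ and that exactly one of two things occurs: either $\gamma_x$ reaches a boundary component after finite length, or it is an infinite simple ray whose set of accumulation points is a geodesic lamination $\mu_x$ lying in the interior of $\Sigma_{g,n}$ (up to asymptotics with boundary geodesics). By the decomposition of a geodesic lamination into minimal components, every minimal sublamination of $\mu_x$ is either a single simple closed geodesic -- an interior curve or a boundary component -- or a minimal lamination that is not closed. Since $x \in E$ forces $\gamma_x$ to be simple, I would first constrain the possible shapes of $\mu_x$, invoking \cite{BirmanSeries} (already used for Lemma~\ref{0mes}) to control the non-closed minimal laminations.

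The two \emph{isolated} cases I would treat by instability of the defining condition. Suppose first that $\gamma_x$ meets a boundary $\beta$ at its first return, orthogonally. The intersection angle at the first hit is a continuous, indeed real-analytic, function of $x$, and it cannot be locally constant equal to $\pi/2$: a whole interval of orthogonal first-return arcs from $\beta_1$ to $\beta$ would sweep out an embedded product region bounded by two boundary arcs, which is incompatible with $2g - 2 + n \geq 2$. Hence the orthogonality condition has isolated solutions, so $x$ is isolated in $E$. Suppose next that $\gamma_x$ spirals around a boundary component $\beta_j$. Using the half-collar neighborhood of $\beta_j$ and the standard-collar picture of Section~\ref{Dcount}, spiraling is the separatrix between rays that exit through $\beta_j$ and rays that turn back before reaching it; since one cannot wind past a boundary, perturbing $x$ in either direction destroys the asymptotic condition and produces no nearby points of $E$, so $x$ is again isolated.

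For the \emph{limit-point} case I would argue both implications. If $\gamma_x$ spirals around an interior simple closed geodesic $\alpha$, I would construct a sequence $x_m \to x$ with $x_m \in E$: inside the collar of $\alpha$ one can wind $m$ times and then leave along an arc that meets a boundary orthogonally (or itself spirals into a boundary), and these orthogeodesics converge to $\gamma_x$ as $m \to \infty$; since the winding number is unbounded, $x$ is an accumulation point of $E$. Conversely, if $x \in E$ is a limit point, the previous paragraph excludes the boundary-hitting and boundary-spiraling behaviors, and the reduction of the first paragraph must exclude accumulation of $\gamma_x$ on a non-closed minimal lamination, leaving only spiraling around an interior closed geodesic.

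The hard part will be this converse direction: showing that the only way to produce a limit point of $E$ is interior spiraling. It rests on two delicate inputs, a quantitative collar-lemma analysis controlling how the winding and exit behavior of $\gamma_{x'}$ varies with $x'$ near a spiraling ray, and the structure theory of simple geodesics from \cite{BirmanSeries} needed to rule out (or, at the measure-theoretic level relevant to the volume computation, absorb) accumulation of $\gamma_x$ on complicated minimal laminations. This is precisely the classification established by Mirzakhani \cite{Mirza1}, and a fully rigorous treatment would reproduce her case analysis of the limiting behavior of orthogeodesics.
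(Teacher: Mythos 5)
The paper does not actually prove this lemma: it states that ``Mirzakhani then established a classification of the points in $E$ \ldots which we will not justify here,'' so there is no internal argument to compare yours against. Judged on its own, your proposal is a reasonable roadmap of Mirzakhani's analysis, but it is not a proof, and the gaps are not confined to the one you flag. The converse of the limit-point statement --- that a non-isolated point of $E$ can only come from spiraling onto an \emph{interior closed} geodesic, i.e.\ that accumulation of $\gamma_x$ onto a non-closed minimal lamination is excluded --- is the hard content of the lemma, and you defer it wholesale to ``reproducing her case analysis.'' Note also that \cite{BirmanSeries}, as used in Lemma~\ref{0mes}, bounds the Hausdorff dimension of the \emph{union} of simple geodesics; it does not by itself prevent an individual simple orthogonal ray from limiting on an irrational lamination, so the appeal to it here would have to be made precise.

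Your isolation arguments also prove less than isolation \emph{in $E$}. Showing that the first-return angle cannot be locally constant at $\pi/2$ shows that the set of orthogonally-hitting $x$ is discrete; it does not show that a punctured neighborhood of $x$ avoids $E$, since a nearby $x'$ could belong to $E$ for a different reason (an infinite simple ray, or spiraling). What is needed is that for $x'$ near such an $x$, the ray $\gamma_{x'}$ still hits the same boundary, transversally and at a non-right angle, and is therefore disqualified outright. The boundary-spiraling case is worse: on one side of $x$ the perturbed rays cross that boundary at small angles and are excluded, but on the other side they wind around the half-collar and re-enter the surface, and you must show they then self-intersect or hit a boundary non-orthogonally rather than continue as simple orthogeodesics; ``destroys the asymptotic condition'' does not yield ``produces no nearby points of $E$.'' Your winding construction for the forward limit-point direction is the right mechanism but likewise needs the existence and simplicity of the approximating arcs. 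In short: correct skeleton, same ultimate source as the paper (Mirzakhani \cite{Mirza1}), but the lemma's actual content lives in the steps you have postponed.
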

Let us denote $\beta_1'$ the set $\beta_1 \setminus E$ to which one adds back the isolated points of $E$. It must be a countable union of open segments $(I_j)_{j \in J}$. And, according to Lemma~\ref{0mes}, we have
\beq
\label{tegh}L_1 = \sum_{j \in J} \ell_{\sigma}(I_j)\,.
\eeq

\vspace{0.2cm}

\noindent $\bullet$ \textbf{Embedded pairs of pants.} Each $x \in \beta_1'$ canonically determines an embedded pair of pants $P_{x} \subset \Sigma_{g,n}$ which bounds $\beta_1$, as follows. Consider a thickening of $\gamma_{x}$. This determines two free homotopy classes (relative to the boundary) of closed curves -- see the picture below. We denote $\alpha$ and $\alpha'$ their shortest geodesic representative. Given the geometry, $\beta_1$ together with $\alpha$ and $\alpha'$ is bounding an embedded pair of pants, which is our $P_{x}$.

\begin{figure}[h!]
\begin{center}
\includegraphics[width=\textwidth]{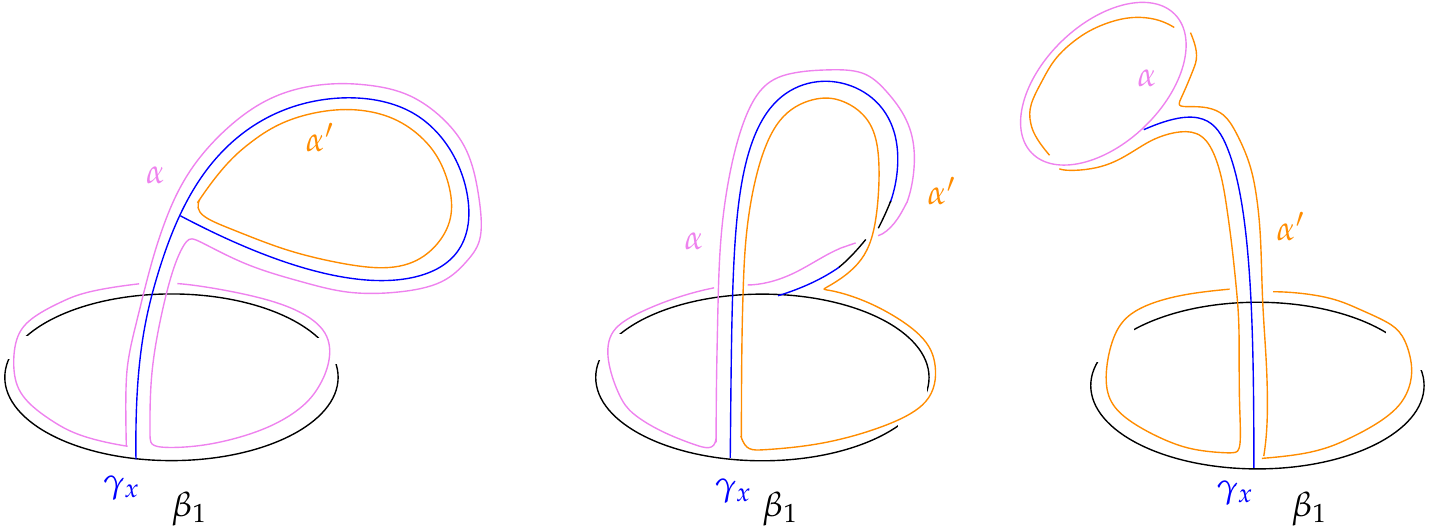}
\end{center}
\end{figure}

Conversely, if $Q$ is a embedded pair of pants bounding $\beta_1$, we can identify several orthogeodesics issueing from a point in $\beta_1$, which are canonically associated to $P$. We again denote $\alpha$ and $\alpha'$ the two other boundaries of $Q$. These orthogeodesics are
\begin{itemize}
\item[$(i)$] the shortest geodesics joining $\beta_1$ to $\alpha$, and $\beta_1$ to $\alpha'$. We denote $r$ and $r'$ their origin in $\beta_1$. These geodesics cut $Q$ into two hyperbolic right-angled hexagons $\{H_1,H_2\}$, which are exchanged by an isometric involution.
\item[$(ii)$] the geodesic issuing from a point $p_1 \in \beta_1 \cap H_1$ orthogonally to $\beta_1$, and coming back orthogonally to $\beta_1$ at a point $p_2 \in \beta_1 \cap H_2$. This is the seam of the pair of pants.
\item[$(iii)$] the geodesic issued from a point in $o_1 \in \beta_1 \cap H_1$ which spirals around $\alpha$, and the symmetric one issued from a point $o_2 \in \beta_1 \cap H_2$.
\item[$(iii)'$] the geodesic issued from a point in $p_1' \in \beta_1 \cap H_1$ which spirals around $\alpha'$, and the symmetric one issued from a point $o_2' \in \beta_1 \cap H_2$.
\end{itemize}
Let us denote $\ell_1 = \ell_{\sigma}(\beta_1)$, $\ell_2 = \ell_{\sigma}(\alpha)$ and $\ell_3 = \ell_{\sigma}(\alpha')$, and define $\tfrac{1}{2} d(\ell_1,\ell_2,\ell_3)$ to be the length of the segment $(o_1',r')$ in $\beta_1 \cap H_1'$. We now discuss two cases. Exchanging the role of $\alpha$ and $\alpha'$, we see that the length of the segment $(o_1,r)$ in $\beta_1 \cup H_1$ must be $\tfrac{1}{2}d(\ell_1,\ell_3,\ell_2)$. We now distinguish two cases.

\begin{figure}[h!]
\begin{center}
\includegraphics[width=0.9\textwidth]{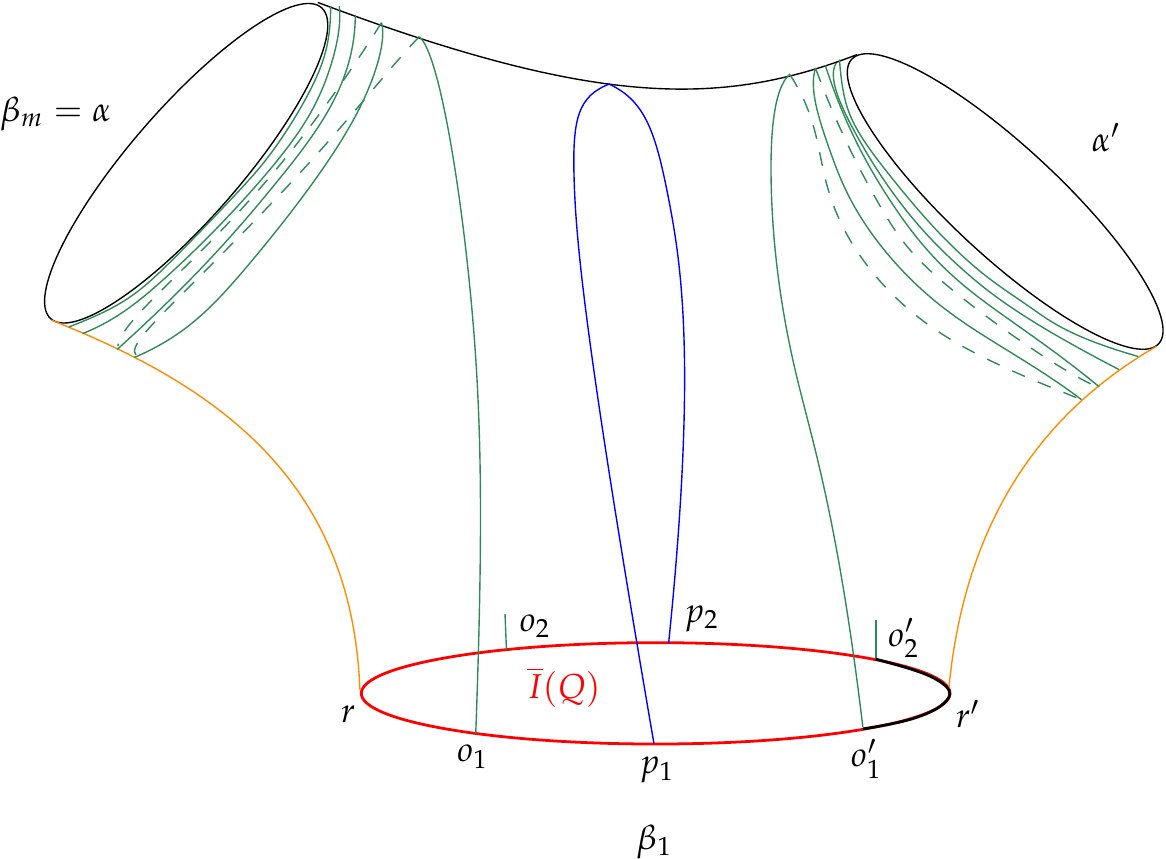}
\end{center}
\end{figure}
\begin{itemize}
\item[$(\textbf{B})$] On top of $\beta_1$, $Q$ bounds another boundary of $\Sigma_{g,n}$. We can say this is $\alpha = \beta_m$ for some $m \in \{2,\ldots,n\}$. According to Lemma~\ref{clasgeod}, $o_2'$ and $o_1'$ do not belong to $\beta_1'$. Let $\overline{I}(Q) \subset \beta_1$ be the open segment $(o_2',o_1')$ which contains the points $p_2$, $o_1$, $r$, $o_2$ and $p_1$. Looking at the behavior of $\gamma_{x}$ for $x \in \overline{I}(Q)$, we see that $\overline{I}(Q) \subset \beta_1'$ -- for instance, the five points we mentioned in the interior of $\overline{I}(Q)$ are isolated points in $E$ according to the classification of Lemma~\ref{clasgeod}, so belong to $\beta_1'$, and for any $x \in \overline{I}(Q)$, $P_{x} = Q$. 
\begin{figure}[h!]
\begin{center}
\includegraphics[width=0.9\textwidth]{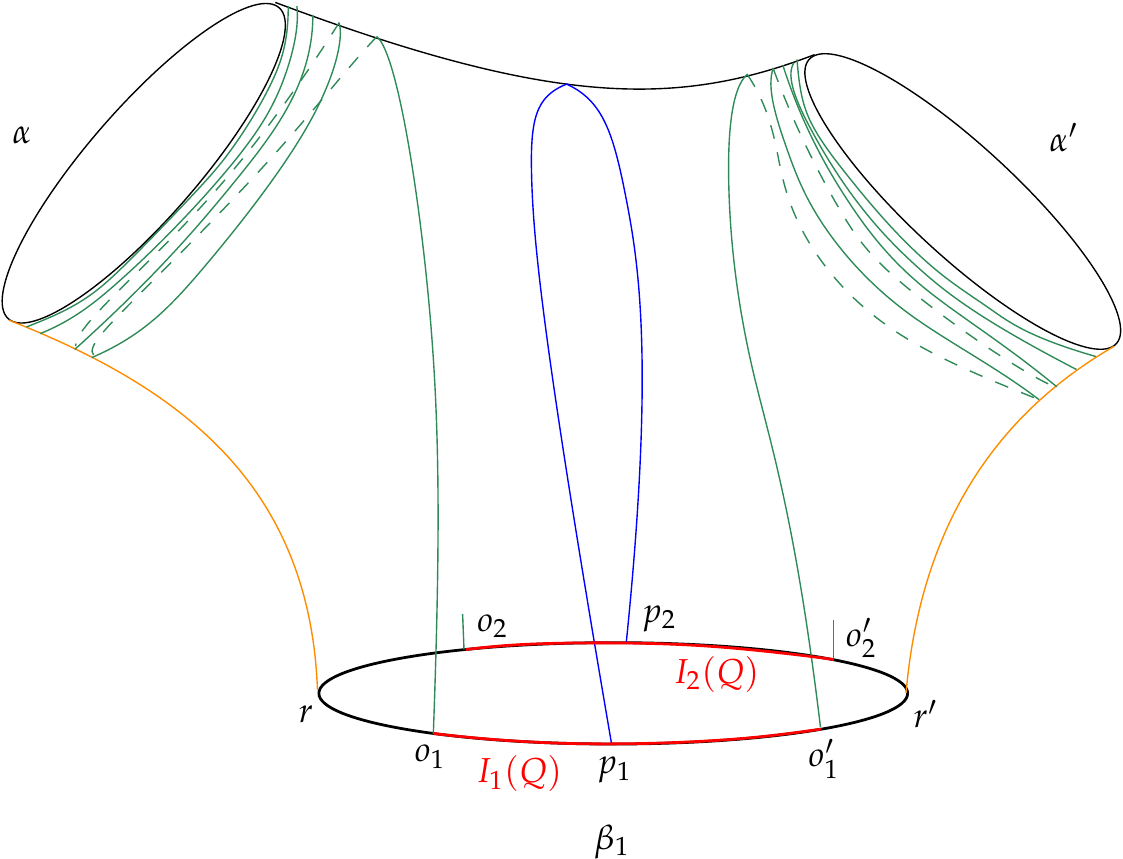}
\end{center}
\end{figure}
\item[$(\textbf{C})$] $\alpha$ and $\alpha'$ are interior to $\Sigma_{g,n}$. Let $I_1(Q) \subset \beta_1$ be the segment from $o_1$ to $o_1'$ containing $p_1$, $I_2(Q) \subset \beta$ the segment from $o_2$ to $o_2'$ containing $p_2$, and $\overline{I}(Q) = I_1(Q) \dot{\cup} I_2(Q)$. Looking at the behavior of $\gamma_{x}$ for $x \in \overline{I}(Q)$, we find again that for any $x \in \overline{I}(Q)$, $P_{x} = Q$.
\end{itemize}
We conclude that $P_{x}$ for $x \in \beta_1'$ only depends on the open segment $I_j$ to which $x$ belongs, that any embedded pair of pants $Q$ is realized as a $P_{x}$ for some $x \in \beta_{1}'$, and we have identified in the connected components of $\overline{I}(Q)$ the segments in \eqref{tegh} which are associated in the previous construction with the embedded pair of pants $Q$. We also have
\begin{itemize}
\item[$(\textbf{B})$] $\ell_{\sigma}(\overline{I}(Q)) = L_1 - d(L_1,L_m,\ell_{\sigma}(\alpha'))$.
\item[$(\textbf{C})$] $\ell_{\sigma}(\overline{I}(Q)) = \Big(\tfrac{L_1}{2} - d(L_1,\ell_{\sigma}(\alpha),\ell_{\sigma}(\alpha'))\Big) + \Big(\tfrac{L_1}{2} - d(L_1,\ell_{\sigma}(\alpha'),\ell_{\sigma}(\alpha))\Big)$.
\end{itemize}
and the sum \eqref{tegh} becomes
\beq 
\label{tegh2} L_1 = \sum_{Q \in \mathcal{Q}} \ell_{\sigma}(\overline{I}(Q))\,,
\eeq
where $\mathcal{Q}$ is the set of embedded pair of pants  in $\Sigma_{g,n}$ bounding $\beta_1$ and having geodesic boundaries.

\vspace{0.2cm}

\noindent $\bullet$ \textbf{Mapping class group orbits.} $\mathcal{Q}$ is a countable set on which the mapping class group $\Gamma_{g,n}$ acts. The equivalence classes are in correspondence with the various topologies that the surface $\Sigma_{g,n} \setminus Q$ can have.
\begin{itemize}
\item[$(\textbf{I})$] for $m \in \{2,\ldots,n\}$, the orbit $\mathcal{Q}_{g,n - 1}(m)$ is the set of $Q$s bounding $\beta_1$ and $\beta_m$.
\item[$(\textbf{I}')$] $\mathcal{Q}_{g-1,n + 1}$ is the set of $Q$s bounding $\beta_1$ and two curves $\alpha$ and $\alpha'$ in the interior of $\Sigma_{g,n}$, such that $\Sigma_{g,n} \setminus Q$ remains connected -- and thus has genus $g - 1$ and $n + 1$ boundaries.
\item[$(\textbf{II})$] for $J \dot{\cup} J'$ a partition of $\{\beta_2,\ldots,\beta_m\}$ and $h + h' = g$, $\mathcal{Q}_{h,J;h',J'}$ is the set of $Q$s bounding $\beta_1$ and a non ordered pair of curves $\alpha$ and $\alpha'$ in the interior of $\Sigma_{g,n}$, such that $\Sigma_{g,n}\setminus Q$ is not connected -- it then consists of two connected components, and the one bounding $\alpha$ has other boundaries $(\beta_j)_{j \in J}$ and genus $h$, while the one bounding $\alpha'$ has other boundaries $(\beta_j)_{j \in J'}$ and genus $h'$.
\end{itemize}
In particular, there are finitely many equivalence classes, which are precisely in correspondence with the terms in the \textsc{tr} equation \eqref{TReq2}.

\vspace{0.2cm}

\noindent $\bullet$ \textbf{Summary: generalized McShane identity.} If we collect the terms in \eqref{tegh2} by mapping class group orbits, we arrive to the generalized McShane identity proved as an intermediate result by Mirzakhani.
\bea
&& L_1 \nonumber \\
& = & \sum_{m = 2}^n \sum_{\substack{Q \in \mathcal{Q}_{g,n-1}(m) \\ \partial Q = \beta_1 \cup \beta_m \cup \alpha'}} \big(L_1 - d(L_1,L_m,\ell_{\sigma}(\alpha')\big) \nonumber \\
& +& \bigg[  \sum_{\substack{Q \in \mathcal{Q}_{g-1,n+1} \\ \partial Q = \beta_1 \cup \alpha \cup \alpha'}} \!\!+\!\!\sum_{\substack{J \dot{\cup} J' = \{\beta_2,\ldots,\beta_n\} \\ h + h' = g}} \sum_{\substack{Q \in \mathcal{Q}_{h,J;h',J'} \\ \partial Q = \beta_1 \cup \alpha \cup \alpha'}}\bigg] \Big(L_1 - d(L_1,\ell_{\sigma}(\alpha),\ell_{\sigma}(\alpha')) - d(L_1,\ell_{\sigma}(\alpha'),\ell_{\sigma}(\alpha))\Big)\,, \nonumber \\
\label{mccc} &&
\eea
We stress that $((J,h),(J',h'))$ is an ordered pair in the last sum. This identity expresses the constant function $L_1$ on $\mathcal{T}_{g,n}(L_1,\ldots,L_n)$, as a sum, over embedded pairs of pants bounding $\beta_1$, of non-trivial functions over $\mathcal{T}_{g,n}(L_1,\ldots,L_n)$. Note that the right-hand side is manifestly $\Gamma_{g.n}$-invariant as we are summing over all mapping class group orbits.

The original McShane identity concerned the once-punctured torus (the case $g = 1$, $n = 1$ and $L_1 = 0$), in which the pair of pants is self-glued and the sum over $\mathcal{Q}$ is a sum over simple closed geodesics \cite{McShane}. We excluded on purpose this case in our exposition by assuming $2g - 2 + n \geq 2$. In fact, from the original McShane identity one can rederive that $V_{1,1}(L = 0) = \zeta(2)$.

\vspace{0.2cm}

\noindent $\bullet$ \textbf{Integration over the moduli space.} Let us divide \eqref{mccc} by $L_1$, and integrate over the moduli space $\mathcal{M}_{g,n}(L_1,\ldots,L_n)$. The left-hand side is by definition $V_{g,n}(L_1,\ldots,L_n)$. In the right-hand side, let us consider a $\Gamma_{g,n}$-orbit $\mathcal{O}$ of $\mathcal{Q}$, and fix $Q_0 \in \mathcal{O}$. Let ${\rm Stab}(Q_0) \subset \Gamma_{g,n}$ be the stabilizer of $Q_0$. We have
$$
\sum_{Q \in \mathcal{O}} \ell_{\sigma}(\overline{I}(Q)) = \sum_{\varphi \in \Gamma_{g,n}/{\rm Stab}(Q_0)} \ell_{\varphi^*\sigma}(\overline{I}(Q_0))\,,
$$
and therefore
$$
\int_{\mathcal{M}_{g,n}(L_1,\ldots,L_n)} \bigg(\sum_{Q \in \mathcal{O}} \ell_{\sigma}(\overline{I}(Q))\bigg) \nu_{{\rm WP}} \ = \int_{\mathcal{T}_{g,n}(L_1,\ldots,L_n)/{\rm Stab}(Q_0)} \ell_{\sigma}(\overline{I}(Q_0))\,\nu_{{\rm WP}}\,.
$$
To compute this integral, we should describe a fundamental domain $\mathcal{D}(Q_0)$ of $\mathcal{T}_{g,n}(L_1,\ldots,L_n)/{\rm Stab}(Q_0)$. As $\nu_{{\rm WP}}$ has a simple expression \eqref{WPform}-\eqref{WPform2}, it is convenient to use Fenchel-Nielsen coordinates for a pair of pants decomposition of $\Sigma_{g,n}$ containing $Q_0$. Let $S$ be the subset of $\{\alpha,\alpha'\}$ consisting of the curves interior to $\Sigma_{g,n}$, and $\ell(\gamma),\theta(\gamma)$ be the length and twist coordinates attached to a curve $\gamma \in S$. The remaining set of coordinates parametrizes the Teichm\"uller space of the bordered surface $\Sigma_{g,n}\setminus Q_0$ in which the curves $\gamma \in S$ have fixed length $\ell(\gamma)$. As $\mathcal{D}(Q_0)$ we can take a domain which is cut out by the equation $0 \leq \theta(\gamma) < \ell(\gamma)$ for each $\gamma \in S$, together with additional constraints depending on the orbit $\mathcal{O}$. In particular, one has to carefully take into account the fact that $\alpha$ and $\alpha'$ are not \textit{a priori} ordered in the case (\textbf{C}). The net effect is a factor $\tfrac{1}{2}$ when $\ell(\alpha),\ell(\alpha')$ are integrated over $\mathbb{R}_{> 0}^2$. We also remark that the integral over the twist coordinates $\theta(\gamma)$ for $\gamma \in S$ always yields a factor of $\ell(\gamma)$. The outcome is
\bea
&& V_{g,n}(L_1,\ldots,L_n) \nonumber \\
& = & \sum_{m = 2}^n \int_{\mathbb{R}_{> 0}} \mathcal{B}(L_1,L_m,\ell)\,V_{g,n - 1}(\ell,L_2,\ldots,\widehat{L_m},\ldots,L_n) \nonumber \\
&& + \tfrac{1}{2} \int_{\mathbb{R}_{> 0}^2} \dd \ell \dd \ell'\,\mathcal{C}(L_1,\ell,\ell')\bigg(V_{g - 1,n + 1}(\ell,\ell',L_2,\ldots,L_n) \nonumber \\
&& \phantom{\tfrac{1}{2} \int_{\mathbb{R}_{> 0}^2} \dd \ell \dd \ell'\,\mathcal{C}(L_1,\ell,\ell')}\quad  + \sum_{\substack{J_1 \dot{\cup} J_2 = \{L_2,\ldots,L_n\} \\ h_1 + h_2 = g}} V_{g_1,1 + |J_1|}(\ell,J_1)V_{g_2,1 + |J_2|}(\ell',J_2)\bigg)\,, \nonumber \\
\label{mccc2} && 
\eea
where
\bea
\mathcal{B}(L_1,L_2,\ell) & = & \frac{\ell}{L_1}\big(L_1 - d(L_1,L_2,\ell)\big)\,, \nonumber \\
\mathcal{C}(L_1,\ell,\ell') & = & \frac{2\ell\ell'}{L_1}\big(L_1 - d(L_1,\ell,\ell') - d(L_1,\ell',\ell)\big)\,. \nonumber
\eea

\vspace{0.2cm}

\noindent $\bullet$ \textbf{Hyperbolic trigonometry.} It remains to compute the function $d(\ell_1,\ell_2,\ell_3)$ which appears in the generalized McShane identity \eqref{mccc}, and in the recursive formula \eqref{mccc2} for the volumes. Buser describes in \cite{Buser} a systematic way to derive all useful identities in hyperbolic trigonometry.

We first apply these formulas to the right-angled hyperbolic hexagon $H_1$, which has lengths $\tfrac{L_1}{2}$, $\tfrac{\ell_2}{2}$, $\tfrac{L_3}{2}$, $\tfrac{\ell_1}{2}$, $\tfrac{L_2}{2}$, $\tfrac{\ell_3}{2}$.
\begin{figure}[h!]
\begin{center}
\includegraphics[width=0.45\textwidth]{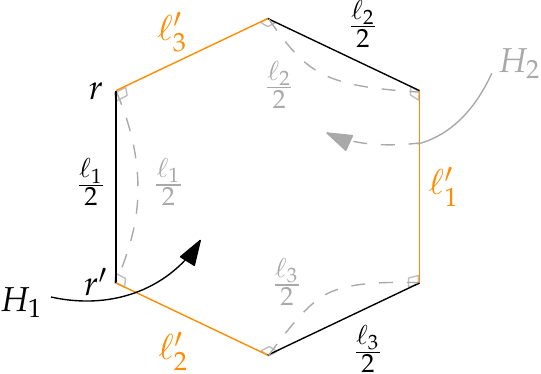}
\caption{\label{Fig5} Right-angled hexagon -- its isometric copy with which it forms a hyperbolic pair of pants is indicated with dashed edges.}
\end{center}
\end{figure}

\vspace{-0.2cm}

Any three of the lengths completely determine the three others. The formula we need is
\beq
\label{Bus1} {\rm cosh}(\ell_2') = \frac{{\rm cosh}(\frac{\ell_2}{2}) + {\rm cosh}(\frac{\ell_1}{2}){\rm cosh}(\frac{\ell_3}{2})}{{\rm sinh}(\frac{\ell_1}{2}){\rm cosh}(\frac{\ell_3}{2})}\,,
\eeq
and by exchange the role of $2$ and $3$ we obtain the formula for $\ell_3'$ in terms of $(\ell_1,\ell_2,\ell_3)$.

Then, consider the universal cover $\tilde{Q}$ of $Q$ based at $o_1'$, and equip it with the pullback of the hyperbolic metric on $Q$. This is again a hyperbolic metric. The lift to $\tilde{Q}$ of $\alpha'$, and of the orthogeodesic issued from $p_1'$ which spirals around $\alpha'$ -- here considered as the third boundary after $\beta_1$ and $\alpha$ -- have infinite hyperbolic length. We denote $\tilde{\alpha}'$ and $\tilde{\gamma}$ these lifts. They approach each other with a vanishing angle. Then, the lift of the segment $[o_1',r'] \subset \beta_1 \cap H_1$ to $\tilde{Q}$ has length $\tfrac{1}{2}d(\ell_1,\ell_2,\ell_3)$, and the lift of the shortest geodesic from $\beta_1$ to $\alpha'$ has length $\ell_2'$. By construction, these two lifted segments meet at $r'$ forming a right-angle. We in fact have a quadrilateral with three right angles and a vanishing angle -- see the picture below.

\begin{figure}[h!]
\begin{center}
\includegraphics[width=0.37\textwidth]{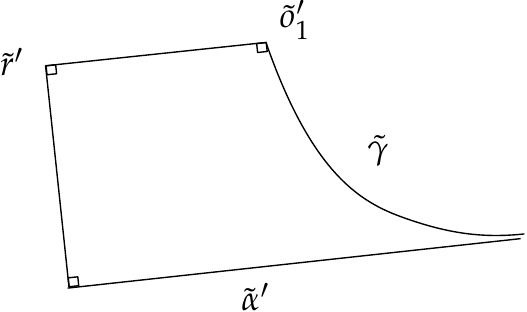}
\end{center}
\end{figure}

\vspace{-0.2cm}
\noindent The formulas of \cite{Buser} yield ${\rm sinh}(\ell_2')\,{\rm sinh}\big(\tfrac{d}{2}\big) = 1$. If we eliminate $\ell_2'$ with \eqref{Bus1}, we obtain
$$
d(\ell_1,\ell_2,\ell_3) = \ln\bigg(\frac{{\rm cosh}(\frac{\ell_2}{2}) + {\rm cosh}(\frac{\ell_1 + \ell_3}{2})}{{\rm cosh}(\frac{\ell_2}{2}) + {\rm cosh}(\frac{\ell_1 - \ell_3}{2})}\bigg)\,.
$$
Equivalently
\bea
&& d(\ell_1,\ell_2,\ell_3) \nonumber \\ 
& = & \ln\bigg(\frac{e^{\frac{\ell_1 + \ell_3}{2}}}{e^{\frac{\ell_3 - \ell_1}{2}}}\cdot \frac{1 + e^{-\ell_3 - \ell_1} + e^{\frac{-\ell_3 + \ell_2 - \ell_1}{2}} + e^{-\frac{\ell_3 + \ell_2 + \ell_1}{2}}}{1 + e^{-\ell_3 + \ell_1} + e^{\frac{-\ell_3 + \ell_2 + \ell_1}{2}} + e^{-\frac{-\ell_3 - \ell_2 + \ell_1}{2}}}\bigg) \nonumber \\ 
& = & \ell_1 - \tfrac{1}{2}\big(f(\ell_3 + \ell_2 + \ell_1) - f(\ell_3 + \ell_2 - \ell_1) \nonumber \\
&& \qquad + f(\ell_3 - \ell_2 + \ell_1) - f(\ell_3 - \ell_2 - \ell_1)\big) \nonumber
\eea
in terms of the function $f(z) = -2\ln(1 + e^{-z/2})$. Therefore, the functions appearing in \eqref{mccc} are
\bea
\ell_1 - d(\ell_1,\ell_2,\ell_3) & = & \tfrac{1}{2}\big(f(\ell_3 + \ell_2 + \ell_1) - f(\ell_3 + \ell_2 - \ell_1) \nonumber \\
&& \quad\quad + f(\ell_3 - \ell_2 + \ell_1) - f(\ell_3 - \ell_2 - \ell_1)\big)\,, \nonumber \\
\ell_1 - d(\ell_1,\ell_2,\ell_3) - d(\ell_1,\ell_3,\ell_2) & = & f(\ell_3 + \ell_2 + \ell_1) - f(\ell_3 + \ell_2 - \ell_1)\,. \nonumber
\eea
where we have used $f(z) - f(-z) = z$ in the last line. This also leads to the claimed expressions \eqref{Ccal} for $\mathcal{B}$ and $\mathcal{C}$.

\newpage

\providecommand{\bysame}{\leavevmode\hbox to3em{\hrulefill}\thinspace}
\providecommand{\MR}{\relax\ifhmode\unskip\space\fi MR }
\providecommand{\MRhref}[2]{%
  \href{http://www.ams.org/mathscinet-getitem?mr=#1}{#2}
}
\providecommand{\href}[2]{#2}

\end{document}